\pgfplotsset{compat=1.18} 
\theoremstyle{plain}
\newtheorem{thm}{Theorem}
\newtheorem{lemma}[thm]{Lemma}
\newlength{\qrow}\setlength{\qrow}{0.7cm}
\newlength{\qcol}\setlength{\qcol}{1em}
\newcommand{\ketbra}[2]   {\left|#1\middle\rangle\!\middle\langle#2\right|}
\newcommand{\ot}{\otimes}
\newcommand\C{ {\mathcal C} }
\renewcommand\S{ {\mathcal S} }
\newcommand\N{ {\mathcal N} }
\newcommand\F{ {\mathcal F} }
\newcommand{\mx}{\mathrm{max}}
\newcommand{\eq}{\mathrm{eq}}
\renewcommand{\v}[1]{\ensuremath{\boldsymbol #1}}
\newcommand{\re}{\mathrm{Re}}
\newcommand{\ma}{\mathrm{Ma}}
\newcommand{\add}{\mathrm{ADD}}
\newcommand{\taus}{\bar{\tau}^\star}
\newenvironment{aligns}{\subequations \align} {\endalign \endsubequations}
\titleformat{\paragraph}[runin]{\normalfont\normalsize\bfseries}{\theparagraph}{1em}{}
\numberwithin{equation}{section}
\begin{document}
	
	
\title{An end-to-end quantum algorithm for nonlinear fluid dynamics with bounded quantum advantage}

\author{David Jennings}
\affiliation{PsiQuantum, 700 Hansen Way, Palo Alto, CA 94304, USA}

\author{Kamil Korzekwa$^*$}
\affiliation{PsiQuantum, 700 Hansen Way, Palo Alto, CA 94304, USA}

\renewcommand{\thefootnote}{\fnsymbol{footnote}} 
\footnotetext[1]{Lead author email: kkorzekwa@psiquantum.com (authors are listed alphabetically within each affiliation).}
\renewcommand{\thefootnote}{\arabic{footnote}}

\author{Matteo Lostaglio}
\affiliation{PsiQuantum, 700 Hansen Way, Palo Alto, CA 94304, USA}

\author{Richard Ashworth}
\affiliation{Airbus Operations Ltd, Pegasus House Aerospace Avenue, Filton, Bristol, UK}

\author{Emanuele Marsili}
\affiliation{Airbus Operations Ltd, Pegasus House Aerospace Avenue, Filton, Bristol, UK}

\author{Stephen Rolston}
\affiliation{Airbus Operations Ltd, Pegasus House Aerospace Avenue, Filton, Bristol, UK}

\begin{abstract}
	Computational fluid dynamics (CFD) is a cornerstone of classical scientific computing, and there is growing interest in whether quantum computers can accelerate such simulations. To date, the existing proposals for fault-tolerant quantum algorithms for CFD have almost exclusively been based on the Carleman embedding method, used to encode nonlinearities on a quantum computer. In this work, we begin by showing that these proposals suffer from a range of severe bottlenecks that negate conjectured quantum advantages: lack of convergence of the Carleman method, prohibitive time-stepping requirements, unfavorable condition number scaling, and inefficient data extraction. With these roadblocks clearly identified, we develop a novel algorithm for the incompressible lattice Boltzmann equation that circumvents these obstacles, and then provide a detailed analysis of our algorithm, including all potential sources of algorithmic complexity, as well as gate count estimates. We find that for an end-to-end problem, a modest quantum advantage may be preserved for selected observables in the high-error-tolerance regime. We lower bound the Reynolds number scaling of our quantum algorithm in dimension $D$ at Kolmogorov microscale resolution with $O(\re^{\frac{3}{4}(1+\frac{D}{2})} \times q_M)$, where $q_M$ is a multiplicative overhead for data extraction with $q_M = O(\re^{\frac{3}{8}})$ for the drag force. This upper bounds the scaling improvement over classical algorithms by $O(\re^{\frac{3D}{8}})$. However, our numerical investigations suggest a lower speedup, with a scaling estimate of $O(\re^{1.936} \times q_M)$ for $D=2$. Our results give robust evidence that small, but nontrivial, quantum advantages can be achieved in the context of CFD, and motivate the need for additional rigorous end-to-end quantum algorithm development.
\end{abstract}

\maketitle

\tableofcontents
\newpage


\section{Overview}
\label{sec:overview}


\subsection{Introduction}

The development of classical computers, and associated numerical methods, revolutionized engineering by making it possible to numerically solve formerly intractable differential equations, particularly in fluid dynamics. While early methods focused on special cases such as two-dimensional flows, modern high-performance computers allow one to solve highly-detailed, three-dimensional Navier-Stokes equations~\cite{stefanin2024aircraft,patel2024assessing}. Nevertheless, accurately simulating highly turbulent flows with direct numerical simulation (DNS) -- relevant for many industrial applications, but of particular relevance in aircraft design -- is often computationally infeasible~\cite{spalart2000strategies}. Leading practitioners hence resort to less accurate turbulence models such as Reynolds-averaged Navier-Stokes or large eddy simulations~\cite{Pope_2000}. More recently, parallelization approaches using GPUs, as well as machine learning models, have been developed~\cite{appa2021performance}. Nonetheless, access to large-scale DNS solutions with the fine resolution required across vast lengths to model all scales in their complexity is not expected to be possible anytime soon.

A new paradigm of quantum computing potentially offers a way to break such classical computational barriers. On a mathematical level, quantum computing enables efficient implementation of unitary operations on exponentially large Hilbert spaces. While the restriction to unitary operations might seem limiting at first sight, block-encoding techniques can be used to encode general matrices inside  blocks of the unitary ones~\cite{lin2022lecture}. When these block-encodings are combined with quantum algorithmic techniques, such as linear combination of unitaries or quantum signal processing~\cite{gilyen2019quantum}, they allow one to perform linear algebra operations on quantum computers. It then becomes possible to solve large linear systems exponentially faster than using classical computers, subject to important caveats on the condition number, as well as data loading and readout~\cite{aaronson2015read-f44}. Correspondingly, discretized linear differential equations can be solved via quantum linear solver algorithms~\cite{harrow2009quantum-5fb,costa2022optimal,jennings2023efficient, dalzell2024shortcut, morales2024quantum} as a single linear system~\cite{berry2014high, montanaro2016quantum, berry2017quantum, costa2019quantum, linden2022quantum, berry2024quantum, ameri2023quantum, jennings2024cost}, 
where the solution is stored coherently in a quantum state and needs to be post-processed in a read-out step to extract classical information about the solution. As the Schr\"{o}dinger equation is a linear differential equation, it is expected that quantum computers will offer exponential advantage for the simulation of quantum systems~\cite{feynman2018simulating, lloyd1996universal, berry2007efficient, aspuru2005simulated, mcardle2020quantum}. Moreover, it has been shown that there exist classical linear dynamical systems that can be solved with a provable exponential quantum speedup compared to classical computers~\cite{babbush2023exponential}. 

However, industrially relevant computational fluid dynamics (CFD) problems -- which account for a significant portion of modern high-performance computing workloads -- are inherently nonlinear. As computations on quantum computers are fundamentally linear, leveraging them for CFD simulations requires reformulating the governing equations using a linear representation. One prominent method is Carleman linearization, where the finite-dimensional nonlinear dynamics is transformed to infinite-dimensional linear dynamics, and then truncated to a very high, but finite-dimensional, linear system~\cite{carleman1932application, forets2017explicit}. This method has been recently proposed and studied as a means of developing quantum algorithms for solving nonlinear differential equations~\cite{liu2021efficient, an2022efficient,krovi2023improved, costa2025further, wu2025quantum, jennings2025quantum}. In the CFD context, this path has been explored by the authors of Refs.~\cite{li2025potential,penuel2024feasibility,sanavio2024lattice,turro2025practical}, who investigated quantum algorithms based on Carleman linearization to simulate the lattice Boltzmann equation (LBE)~\cite{kruger2016lattice}.\footnote{For other quantum approaches to CFD problems see Refs.~\cite{succi2023quantum,tennie2025quantum,meng2025toward} or Sec.~I of Ref.~\cite{itani2024quantum}.}

In this work, we first critically assess these endeavors, pointing to several issues that prevent the current proposals from outperforming the existing classical algorithms. In particular, we pinpoint critical complexity bottlenecks that negate prior claims of quantum advantage in the literature. We then introduce a new approach that aims at overcoming these bottlenecks. Our approach is based on a shifted version of the incompressible lattice Boltzmann equation~\cite{he1997lattice,guo2000lattice} that is linearized using a discrete Carleman embedding method~\cite{berkolaiko1998analysis,pruekprasert2020moment,pruekprasert2024moment} and solved using a quantum linear solver~\cite{dalzell2024shortcut}. Through a careful analysis of query and gate complexities, supported by extensive numerical investigations, we show that our algorithm can achieve a modest quantum advantage. However, we also highlight the challenges that algorithms based on Carleman embedding face when applied to industrially relevant problems, as the convergence of the method may be limited to low–Reynolds-number CFD simulations.

The paper is organized as follows. In the remainder of this section, we first present the necessary background material on the lattice Boltzmann equation, and describe recent quantum algorithms for solving it. We then explain a number of difficulties that these algorithms face, and summarize the results we obtained while trying to overcome them. Next, in Sec.~\ref{sec:problem_formulation}, we give the details of our approach to the LBE problem that avoids the difficulties mentioned, formulating a linear system of equations whose solution describes the LBE dynamics. With the LBE problem specified, in Sec.~\ref{sec:algorithm} we present the details of our quantum algorithm for solving it and extracting the information about the drag force. Then, in Sec.~\ref{sec:performance}, we analyze the performance of the algorithm, estimating its query and gate complexity costs. In Sec.~\ref{sec:conclusions}, we discuss potential for quantum advantage and present conclusions. Finally, some of the technical details of our work can be found in appendices, in particular Appendix~\ref{app:notation} contains the summary of the notation used throughout the paper.


\subsection{Lattice Boltzmann equation}
\label{sec:LBE}

\emph{Lattice Boltzmann equation} (LBE) is a kinetic model of fluid dynamics that
describes the fluid mesoscopically~\cite{kruger2016lattice}. It conserves mass and momentum by construction and its low-order moments recover the Navier-Stokes equations in the regime where the characteristic length scales of the problem are much larger than the molecular mean path. Within LBE, the fluid is modeled at time $t$ and position $\v{r}$ in $D$-dimensional space by a vector-valued function $\v{f}(\v{r},t)$, whose components $f_m(\v{r},t)$, with $m\in\{1,\dots,Q\}$, are proportional to the probability density of finding a fluid particle with a discrete velocity $\v{e}_m$. The local mass density $\rho$ and momentum density $\rho\v{u}$ of the fluid are then given by:
\begin{equation}
	\rho(\v{r},t):=\sum_{m=1}^Q f_m(\v{r},t),\qquad
	\rho\v{u}(\v{r},t):=\sum_{m=1}^Q f_m(\v{r},t)\v{e}_m.
\end{equation}

LBE is derived from the \emph{discrete Boltzmann equation} (DBE) describing continuous dynamics of $\v{f}$ via~\cite{succi2001lattice, kruger2016lattice}
\begin{align}
	\label{eq:DBE}
	\frac{\partial f_m(\v{r},t)}{\partial t} +\v{e}_m\cdot \nabla f_m(\v{r},t) = \Omega_m(\v{r},t),
\end{align}
where $\v{e}_m\cdot \nabla f_m$ is an advection term that describes \emph{streaming}, i.e., the flow of the fluid in space, while $\Omega$ describes \emph{collisions} that are local in space. The simplest choice of $\Omega$ that we will adopt is the Bhatnagar-Gross-Krook (BGK) collision operator given by~\cite{bhatnagar1954model}
\begin{equation}
	\label{eq:BGK}
	\Omega_m(\v{r},t) = -\frac{1}{ \tau} \left[f_m(\v{r},t)-f_m^\eq(\v{r},t)\right],
\end{equation}
where $\tau$ is the relaxation time determining the speed of equilibration, and $\v{f}^\eq$ is the Taylor expansion of the local Maxwell equilibrium up to second order terms: 
\begin{align}
	\label{eq:eq}
	f_m^\eq(\v{r},t)=  \rho(\v{r},t) \left(1 + \frac{\v{e}_m\cdot \v{u}(\v{r},t)}{c_s^2} + \frac{(\v{e}_m\cdot \v{u}(\v{r},t))^2}{2c_s^4} -\frac{|\v{u}(\v{r},t)|^2}{2c_s^2}\right)w_m + O(\ma^3).
\end{align}
In the above, $c_s$ denotes the speed of sound, $\ma:=|\v{u}|/c_s$ is the Mach number, and $\v{w}$ is a vector of Maxwell weightings for the different discrete velocities satisfying $w_m\geq 0$ and \mbox{$\sum_m w_m=1$}. These weightings are chosen to ensure the lowest-order velocity moments are reproduced and conservation laws are respected. Since the error in the above second-order expansion of the equilibrium density is $O(\ma^3)$, one assumes a small Mach number, typically of order $\ma\sim 10^{-2}$. 

In order to address the DBE with numerical methods, one needs to discretize space and time. In \emph{lattice Boltzmann methods}, spatial discretization is achieved by choosing a uniform and regular lattice with spacing $\Delta x$ and a discrete velocity set $\{\v{e}_m\}$ whose components are integer multiples of $\Delta x  / \Delta t$, where $\Delta t$ is the duration of one discrete time step. Now, using the method of characteristics, one can integrate the DBE over one time step to obtain
\begin{equation}
	\label{eq:DBE_integrated}
	f_m(\v{r}+\v{e}_m\Delta t, t+\Delta t) = f_m(\v{r},t) + \int_0^{\Delta t} \Omega_m(\v{r}+\v{e}_m s,t+s) ds.
\end{equation}
Using the first-order (rectangular) discretization to approximate the integral on the right hand side, 
\begin{equation}
	\label{eq:discretization1}
	\int_0^{\Delta t} \Omega_m(\v{r}+\v{e}_m s,t+s) ds = \Delta t \Omega_m(\v{r},t)  + O(\Delta t^2),
\end{equation}
one arrives at the original \emph{lattice Boltzmann equation} (LBE):
\begin{equation}    
	\label{eq:LBE_1st}
	f_m(\v{r}+\v{e}_m\Delta t, t+\Delta t) = f_m(\v{r},t) -\frac{\Delta t}{\tau}  \left[f_m(\v{r},t)-f^\eq_m(\v{r},t)\right].
\end{equation}
The above, however, is only first-order accurate in the step size $\Delta t$, and so requires very small $\Delta t\ll \tau$ to accurately simulate the fluid dynamics. This in turn translates into a large number of time steps and usually an inefficient simulation (we will discuss this in more detail in Sec.~\ref{sec:problem_timestep}).

To overcome this first-order accuracy issue, one uses the second-order (trapezoidal) discretization to approximate the considered integral with
\begin{equation}
	\label{eq:discretization2}
	\int_0^{\Delta t} \Omega_m(\v{r}+\v{e}_m s,t+s) ds = \frac{\Delta t}{2}\left[\Omega_m(\v{r}+\v{e}_m \Delta t,t+\Delta t)+\Omega_m(\v{r},t)\right] + O(\Delta t^3).
\end{equation}
One then follows the original idea of Refs.~\cite{he1998discrete,he1998novel}, and introduces a new set of variables
\begin{equation}
	\label{eq:fbar}
	\bar{f}_m(\v{r},t) = f_m(\v{r},t)  -\frac{\Delta t}{2}\Omega_m(\v{r},t),
\end{equation}
as well as the shifted relaxation time
\begin{equation}
	\label{eq:shiftedtau}
	\bar{\tau} = \tau+\frac{\Delta t}{2}.
\end{equation}
Using these together with Eq.~\eqref{eq:discretization2} in Eq.~\eqref{eq:DBE_integrated}, one arrives at the commonly used lattice Boltzmann equation
\begin{equation}
	\label{eq:LBE_2nd}
	\bar{f}_m(\v{r}+\v{e}_m\Delta t,t+\Delta t) = \bar{f}_m(\v{r},t) - \frac{\Delta t}{\bar{\tau}}[\bar{f}_m(\v{r},t)-f^\eq_m(\v{r},t)],
\end{equation}
which is second-order accurate and does not restrict one to $\Delta t \ll \tau$. Although in this approach one follows the transformed variables $\bar{f}_m$ rather than the original ones $f_m$, the macroscopic variables $\rho$ and $\v{u}$ stay unchanged, as
\begin{equation}
	\sum_{m=1}^Q \bar{f}_m = \sum_{m=1}^Q f_m,\qquad \sum_{m=1}^Q \bar{f}_m\v{e}_m = \sum_{m=1}^Q f_m \v{e}_m.  
\end{equation}

Next, ``lattice units'' are used to represent all physical parameters by dimensionless numbers. Denoting physical quantities in lattice units by a $\star$ superscript, the time steps are chosen to be of unit size, $\Delta t^\star=1$. Moreover, for the spatial discretization to also be of unit size, $\Delta x^\star=1$, the velocities $\v{e}_m^\star$ are chosen to have unit entries. The typical choice of discrete velocities for one-, two- and three-dimensional problems is given by the D1Q3, D2Q9 and D3Q27 models, where the velocities $\v{e}_m^\star$ and their corresponding weights $w_m$ are given in Table~\ref{tab:velocities}. 
\begin{table}[t]
	\centering
	\begin{tabular}{|c|lll|}
		\hline
		\multirow{2}{*}{D1Q3} & $m=1$:&$\v{e}_m^\star=0$,& $w_m=2/3$, \\
		&$m\in\{2,3\}$:&$\v{e}_m^\star\in\{\pm 1\}$,&$w_m=1/6$.\\\hline
		\multirow{3}{*}{D2Q9} & $m=1$:& $\v{e}_m^\star=(0,0)$,&$w_m=4/9$, \\
		& $m\in\{2,\dots,5\}$: &$\v{e}_m^\star\in\{(\pm 1,0), (0,\pm 1)\}$,&$w_m=1/9$,\\
		& $m\in\{6,\dots,9\}$:& $\v{e}_m^\star\in\{(\pm 1, \pm 1)\}$,& $w_m=1/36$.\\ \hline
		\multirow{4}{*}{D3Q27} & $m=1$:& $\v{e}_m^\star=(0,0,0)$,& $w_m=8/27$, \\
		& $m\in\{2,\dots,7\}$: & $\v{e}_m^\star\in\{(\pm 1,0,0), (0,\pm 1,0),(0,0,\pm 1)\}$,& $w_m=2/27$,\\
		& $m\in\{8,\dots,19\}$:& $\v{e}_m^\star\in\{(\pm 1, \pm 1,0),(\pm 1, 0,\pm 1),(0,\pm 1, \pm 1)\}$,& $w_m=1/54$,\\
		&$m\in\{20,\dots,27\}$:& $\v{e}_m^\star\in\{(\pm 1, \pm 1,\pm 1)\}$,& $w_m=1/216$.\\\hline
	\end{tabular}
	\caption{Discrete velocities in lattice units for standard LBE models~\cite{kruger2016lattice}. The discrete weightings $w_m$ are chosen so as to reproduce low-order velocity moments of the Maxwell-Boltzmann distribution. }
	\label{tab:velocities}
\end{table}

Using lattice units, the first-order and second-order accurate lattice Boltzmann equations, Eqs.~\eqref{eq:LBE_1st} and \eqref{eq:LBE_2nd}, read
\begin{aligns}
	\label{eq:LBE_lat}
	f_m(\v{r}^\star+\v{e}^\star_m,t^\star+1) &= f_m(\v{r}^\star,t^\star) - \frac{1}{\tau^\star}[f_m(\v{r}^\star,t^\star)-f^\eq_m(\v{r}^\star,t^\star)], \\
	\bar{f}_m(\v{r}^\star+\v{e}^\star_m,t^\star+1) &= \bar{f}_m(\v{r}^\star,t^\star) - \frac{1}{\bar{\tau}^\star}[\bar{f}_m(\v{r}^\star,t^\star)-f^\eq_m(\v{r}^\star,t^\star)],
\end{aligns}
where $t^\star$ is an integer between 0 (initial time) and $T^\star$ (final time), whereas each $r_i^\star$ is an integer between 1 (one edge of the simulation region) and $N_i$ (the other edge). The conversion to physical units is given by:
\begin{subequations}
	\begin{align}
		&t=t^\star \Delta t,\quad T = T^\star \Delta t,\quad \tau= \tau^\star \Delta t = \left(\bar{\tau}^\star-\frac{1}{2}\right)\Delta t, \label{eq:LBE1}\\
		&\v{r}=\v{r}^\star\Delta x,\quad L_i = N_i \Delta x,\\
		&\v{e}_m=\v{e}_m^\star \frac{\Delta x}{\Delta t},\quad \v{u}=\v{u}^\star \frac{\Delta x}{\Delta t},\quad c_s = c_s^\star \frac{\Delta x}{\Delta t}
		\label{eq:LBE2}
	\end{align}
\end{subequations}
where $T$ is the total evolution time, $L_i$ is the spatial dimension in direction $i$, and $\Delta x$ is the physical distance between neighboring lattice points. 

The LBE is then usually divided into separate collision and streaming steps in the following way:
\begin{align}   
	\bar{\v{f}}^C(\v{r}^\star,t^\star) &= \bar{\v{f}}(\v{r}^\star,t^\star)-\frac{1}{\bar{\tau}^\star}[\bar{\v{f}}(\v{r}^\star,t^\star)-\v{f}^\eq(\v{r}^\star,t^\star)],\label{eq:LBE_col} \\
	\bar{f}_m(\v{r}^\star+\v{e}_m^\star,t^\star+1) &= \bar{f}_m^C(\v{r}^\star,t^\star),\label{eq:LBE_str} 
\end{align}
with analogous equations for the unbarred variables corresponding to the first-order accurate method. 

Finally, the connection between the LBE and the macroscopic equations of fluid mechanics is provided by the Chapman-Enskog analysis~\cite{kruger2016lattice}. It is based on a perturbation expansion of $\bar{f}_m = \bar{f}_m^{\eq} + \epsilon \bar{f}^{(1)}_m + \epsilon^2 \bar{f}^{(2)}_m + O(\epsilon^3)$ around the equilibrium distribution with the expansion parameter $\epsilon \propto \mathrm{Kn}$, the \emph{Knudsen number} (the ratio of the mean free path to the characteristic length-scale of the problem). Then, by taking moments of this expansion and keeping only the two lowest order terms, one recovers the Navier-Stokes equations in the $\mathrm{Kn}\ll 1$ regime, with the kinematic shear viscosity $\nu$ given by the LBE relaxation time $\tau$ as
\begin{equation}
	\label{eq:linktoNS}
	\nu = c_s^2 \tau = c_s^2\left(\bar{\tau}-\frac{\Delta t}{2}\right).
\end{equation}
We can rewrite the right hand side of the above using the lattice units and physical discretizations,
\begin{equation}
	\label{eq:linktoNS_lattice}
	\nu = {c_s^\star}^2\left(\bar{\tau}^\star-\frac{1}{2}\right)\frac{\Delta x^2}{\Delta t}.
\end{equation}
Employing the fact that the lattice speed of sound for D1Q3, D2Q9 and D3Q27 models is given by $1/\sqrt{3}$, one obtains a relation between a physical value of kinematic viscosity $\nu$ and the simulation parameters:
\begin{equation}
	\label{eq:chapman}
	\Delta t = \frac{\Delta x^2}{3 \nu}\left(\bar{\tau}^\star-\frac{1}{2}\right).
\end{equation}
To model the same physical scenario, characterized by viscosity $\nu$, one then has the freedom to choose two out of three model parameters $\bar{\tau}^\star$, $\Delta x$, and $\Delta t$, so that they satisfy the above relation. Although different choices do not affect the modeled physics, they influence the accuracy, stability and efficiency of both the classical and quantum LBE algorithms.


\subsection{Quantum algorithms for fluid dynamics}
\label{sec:recent_approaches}

Let us now briefly summarize the most recent quantum approaches to simulating the lattice Boltzmann equation presented in Refs.~\cite{li2025potential,penuel2024feasibility}. These works focus on the LBE indirectly, by developing a quantum algorithm for solving the underlying discrete Boltzmann equation. They start from the partial differential equation given by the DBE, Eq.~\eqref{eq:DBE}, and perform spatial discretization to transform it into an ordinary differential equation (ODE). Since the obtained ODE is nonlinear, Carleman linear embedding is then employed, and a linear ODE describing the fluid dynamics in a large linear space is derived. Finally, the resulting linear ODE is solved using known quantum algorithms for solving linear ODEs~\cite{berry2017quantum,krovi2023improved,jennings2024cost}, and the output quantum state, coherently encoding the solution, can be then measured to extract information about the fluid system.  

In order to transform the DBE into an ordinary differential equation, the authors of Refs.~\cite{li2025potential,penuel2024feasibility} consider a spatial discretization of the simulation region, assumed to be a 3-dimensional rectangular cuboid of size $L_x\times L_y\times L_z$, into a grid of $N = N_x N_y N_z$ points with nearest neighbors separated by $\Delta x$. As a result, instead of $Q$-dimensional vectors $\v{f}(\v{r},t)$ at each point in space and time, they are dealing with $d$-dimensional vectors $\v{f}(t)$ at each point in time with $d=NQ$, where
\begin{equation}   
	\label{eq:discretization}
	f_{m,\v{r}}(t) : = f_m(\v{r}, t),\qquad \frac{r_i}{\Delta x}\in\{1,\dots,N_i\},\qquad \Delta x:=\frac{L_i}{N_i},\qquad i\in\{x,y,z\}.
\end{equation}
With such a discretization, and assuming weak incompressibility, the DBE in Eq.~\eqref{eq:DBE} can be represented by an ordinary nonlinear differential equation for a vector $\v{f}(t)$ as 
\begin{align}
	\label{eq:DBEmat}
	\frac{d\v{f}(t)}{dt} = (G+F_1)\v{f}(t) + F_2 \v{f}^{\otimes 2}(t)+F_3 \v{f}^{\otimes 3}(t).
\end{align}
The details of this derivation can be found in Appendix~\ref{app:DBE_as_ODE}, together with the explicit forms of matrices $G$ and $F_k$.

Next, the authors of Refs.~\cite{li2025potential,penuel2024feasibility} replace the obtained nonlinear ODE from Eq.~\eqref{eq:DBEmat} with a linear ODE in a larger space using Carleman embedding. To explain how this can be achieved, let us introduce an infinite vector ${\v{y}}^\infty (t)$ of size $d+d^2+d^3+\dots$,
\begin{align}
	{\v{y}}^\infty(t):=[\v{y}_1(t),\v{y}_2(t),\dots]\label{eq:y_infty}.
\end{align}
Then, formally the non-linear DBE evolution for $\v{f}(t)$ from Eq.~\eqref{eq:DBEmat} is equivalent to the following infinite linear ODE for ${\v{y}}^\infty$:
\begin{align}
	\label{eq:carl_evol}
	\frac{d{\v{y}}^\infty(t)}{dt} = \C^\infty \v{y}^\infty(t),\qquad \v{y}_k(0)={\v{f}}^{\otimes k}(0),
\end{align}
where
\begin{equation}
	\C^\infty = 
	\begin{pmatrix}
		C^1_1 & C^1_2 & C^1_3 & 0     & \dots\\
		0     & C^2_2 & C^2_3 & C^2_4 & \dots\\
		0     & 0     & C^3_3 & C^3_4 & \dots\\
		\vdots&\vdots & \vdots& \vdots& \ddots
	\end{pmatrix},
	\quad C^k_{j} :=
	\sum\limits_{i=0}^{k-1} I^{\ot i}\ot\left(F_{j-k+1}+\delta_{j,k} G\right) \otimes  I^{\otimes k-1-i},
\end{equation}
is the infinite Carleman matrix with 3 non-zero blocks in each row, and $\delta_{j,k}$ is the Kronecker delta. Truncating this infinite set of equations at level $N_C$, so that one deals with the finite Carleman vector $\v{y}:=[\v{y}_1,\v{y}_2,\dots,\v{y}_{N_C}]$, gives
\begin{align}
	\label{eq:carl_evol_trunc}
	\frac{d{\v{y}}(t)}{dt} = {\C} {\v{y}}(t),\qquad \v{y}_k(0)={\v{f}}^{\otimes k}(0),
\end{align}
with $\C$ being the Carleman matrix given by $\C^\infty$ restricted to the first $N_C$ blocks. Such truncation introduces an error $\epsilon_C$, so that 
\begin{align}
	\label{eq:Carlemanconvergence}
	\v{y}_1(t) \approx_{\epsilon_C} \v{f}(t)
\end{align} 
only under certain conditions and for large enough $N_C$. Identifying conditions under which this occurs is a crucial, and often involved, question one needs to answer towards an efficient quantum algorithm.

Under the assumption that the problem from Eq.~\eqref{eq:carl_evol_trunc} approximates well enough the discrete Boltzmann system in the sense of Eq.~\eqref{eq:Carlemanconvergence}, one can attempt to run a quantum linear solver (QLS) algorithm to output its solution. More precisely, modulo time-discretization errors, the solver outputs a normalized Carleman state at the final time $T$:
\begin{equation}
	\label{eq:carleman_final}
	\ket{\v{y}(T)} = \frac{1}{\|\v{y}(T)\|}  [\v{y}_1(T),\dots.\v{y}_{N_C}(T)],
\end{equation}
or a \emph{history state} encoding a superposition of the solution at different times:
\begin{align}
	\label{eq:carleman_history}
	\ket{\v{y}_H} = \sum_{m=1}^{\left\lceil T/\Delta t \right\rceil } \frac{\| \v{y}(m \Delta t)\|}{\sqrt{\N}} \ket{\v{y}((m \Delta t))} \otimes \ket{m \Delta t},\qquad \N = \sum_{m=1}^{\left\lceil T/\Delta t \right\rceil } \| \v{y}(m \Delta t)\|^2.
\end{align}
The cost of obtaining these solution states depends crucially on the condition number of the linear system of equations obtained after time-discretization, see Refs.~\cite{berry2017quantum,
	krovi2023improved, jennings2024cost, an2024fast} and references therein, but also on norm properties. This cost is typically defined in terms of number of times we need to apply two core unitaries, one that encodes $\mathcal{C}$ in one of its blocks, and another that encodes the preparation of a quantum state based on $\v{f}(0)$. Establishing constructions for these unitaries, presenting evidence for a favorable scaling of the condition number and favorable norm properties, are necessary conditions for any claim of quantum advantage. Finally, non-trivial post-processing is required to extract data about the fluid system from the output of the QLS, i.e., from the quantum state $\ket{\v{y}(T)}$ or $\ket{\v{y}_H}$. 

The works in Refs.~\cite{li2025potential,penuel2024feasibility} are important, as they push our understanding forward by analyzing various components of this pipeline. However, as we shall see in the next section by identifying bottlenecks preventing quantum advantage, they do not present a convincing argument for the suitability of this approach.


\subsection{Current limitations}
\label{sec:limitations}


\subsubsection{Problem 1: Applicability of the Carleman approach}
\label{sec:problem_carleman}

The Carleman linear embedding approach to fluid simulation involves extending the nonlinear fluid system into a large, but linear system of equations. The core parameter here is $N_C$, the Carleman truncation order. The resulting ODE system has dimension $O(d^{N_C})$, and the two core questions are:
\begin{enumerate}
	\item Does the Carleman truncation error $\epsilon_C$ converge? This means that $\epsilon_C$ (in the sense of Eq.~\eqref{eq:Carlemanconvergence}) should decrease with increasing $N_C$. Certain sufficient conditions are known for special classes of systems (dissipative~\cite{forets2017explicit, liu2021efficient, krovi2023improved}, nonresonant~\cite{wu2025quantum}, and conservative~\cite{jennings2025quantum}), but these are not directly applicable to our setting, which is why we resort to numerical evidence.
	\item Assuming the Carleman truncation error $\epsilon_C$ converges, what is a suitable choice of $N_C$ in practice?
\end{enumerate}

Limited numerical analysis is currently available for the Carleman truncation error~$\epsilon_C$ for the discrete Boltzmann equation:
\begin{itemize}
	\item The authors of Ref.~\cite{li2025potential} presented a preliminary numerical study of $\epsilon_C$ for a collisions-only D1Q3 model, with $\tau^\star =1$, a few hundred time steps, $N=1$ (since no streaming is included, a single node is taken), suggesting that $N_C=3$ suffices to get errors below $10^{-14}$. The issue with this approach is that, in such a purely collisional setting, one can actually prove analytically that the Carleman error for truncation order $N_C\geq 3$ is exactly zero, see Appendix~\ref{app:vanishing_carleman} for details. Nontrivial $\epsilon_C$ for $N_C \geq 3$ can only be generated through the interplay of both collisions and streaming, and hence cannot be assessed via numerics on collisions only. The reported figure $\epsilon_C \approx 10^{-14}$ figure seems then to be due to numerical scheme precision rather than the actual Carleman error. 
	\item An $N_C=3$ truncation was chosen in the study in Ref.~\cite{penuel2024feasibility}, but the question of truncation errors is left to further analysis. 
	\item The authors of Ref.~\cite{sanavio2024lattice} instead analyzed a Kolmogorov flow in the D2Q9 model with $N_x = N_y = 32$ and around $100$ time steps, setting a truncation $N_C=2$, which entirely foregoes the cubic term in the nonlinearity. Root mean square errors of $10^{-4}-10^{-3}$ were observed in the low Reynolds number regime of $\re\in[10,100]$.
	\item A similar setup was analyzed in Ref.~\cite{turro2025practical} with $N_C\in\{1,2\}$, $N_x=N_y\in\{16,\dots,64\}$, $\re \in[25,50]$, where errors of around $10^{-3}$ were found. Importantly, in the settings considered, it was found that pure linearization, $N_C=1$, outperforms the $N_C=2$ Carleman truncation.  
\end{itemize}

It is therefore problematic to draw from these results any conclusions concerning the ability of the Carleman procedure to capture genuinely nonlinear effects for two reasons. First, the pure linearization errors are already fairly small, and so the analysed flows must exhibit very weak nonlinearities. Second, the increase of the Carleman error when going from $N_C=1$ to $N_C=2$ suggests that one may already be operating beyond the convergence radius of the Carleman procedure. Thirdly, there is no obvious trend to be gathered from these few scattered studies.

We thus conclude that the question of applicability of the Carleman embedding method to the lattice Boltzmann equation is still mostly unanswered. Since a linear embedding of the nonlinear dynamics is pivotal to any quantum algorithm, in this work we shall consider this problem closely. 


\subsubsection{Problem 2: Scaling of the condition number}
\label{sec:problem_condition}

Let us now assume that we are in a scenario where Problem~1 is solved, i.e., for the use-case of interest a high enough Carleman truncation order $N_C$ leads to a good approximation of the first Carleman block: $\v{y}_1(t) \approx \v{f}(t)$. The next step is to design a quantum algorithm that prepares the Carleman state $\ket{\v{y}(T)}$ (or the Carleman history state $\ket{\v{y}_H}$) describing the fluid system at the final time $T$ (or during the whole evolution between $t=0$ and $t=T$). Recent quantum approaches to the discrete Boltzmann equation rely on quantum linear solvers applied to a linear system arising from time-discretization of Eq~\eqref{eq:carl_evol_trunc}. A crucial parameter to be evaluated here is the scaling of the condition number $\kappa$ of that linear system, since quantum linear solvers have complexity scaling at best linearly with it~\cite{costa2022optimal}. Checking this scaling is crucial for any claim of potential quantum advantage, since under an unfavorable scaling the quantum algorithm may end up performing substantially worse than the best known classical algorithms.

A standard approach that has been followed in recent works~\cite{li2025potential,penuel2024feasibility} is based on discretizing time via a truncated Taylor series method applied to Eq.~\eqref{eq:carl_evol_trunc}. Several works presented condition number upper bounds for this setting under slight variations~\cite{berry2015simulating, krovi2023improved, jennings2024cost}. For the class of problems at hand, it was found that the condition number scales as
\begin{equation}
	\kappa = O\left(T,\|\C\|,\max_{t \in [0,T]} \| e^{\C t}\|\right),
\end{equation}
where we recall that $T$ is the total simulation time, $\C$ is the truncated Carleman matrix, and $\| \cdot \|$ denotes the operator norm (i.e., the largest singular value).

Concerning the norm of $\C$, in Ref.~\cite{li2025potential} [Eq.~(93)], it is claimed that $\|\C\| = O(\tau^{-1})$, where $\tau$ is a  constant. However, the top-left block of the matrix $\mathcal{C}$ includes the finite-difference discretization of the streaming operator $\v{e}_m \cdot \nabla f_m$. In turn, this means that the largest element in $\mathcal{C}$ scales as $\Delta x^{-1} = O(N_x,N_y,N_z)$, and hence $\|\mathcal{C}\|$ must also scale  as $O(N_x,N_y,N_z)$ or worse. This negates the headline exponential advantage in $N$, independently of the question of the initial state preparation and data extraction. Also, note that in the context of the lattice Boltzmann equation, one has $\|\C\| = O((\taus)^{-1})$ (which does not scale with $N$), but this is simply because we moved to lattice units, where the advection time is $N$-dependent, so we just shifted the $N$-scaling to the number of required time steps.  A more detailed analysis of $\|\C\|$ for this case can be found in~Ref.~\cite{penuel2024feasibility}. Overall, this tells us that the best we can hope for the quantum algorithm under current approaches is a polynomial (in fact cubic) speedup, not an exponential one.\footnote{This limitation could potentially be handled via a different algorithm implementing an `interaction-picture' in the frame of streaming, but this would require further analysis.} 

Now, consider $\max_{t \in [0,T]} \| e^{\C t}\|$. In Ref.~\cite{li2025potential} [Sec. 5e], it is claimed that this quantity is $O(1)$ under stability assumptions. Formally, the claim is that the norm of $e^{\C t}$ is $O(1)$ when $\C$ is marginally stable, i.e., when the largest real part of the eigenvalues of $\C$ is nonpositive. While it is true that $\|e^{\C t}\|$ can be upper bounded by a quantity independent of $t$, $\max_{t \in [0,T]} \|e^{\C t}\|$ generally can depend on $N$.\footnote{$\| e^{\C t}\| \leq \kappa_P$ for any $P>0$ satisfying the Lyapunov condition $P \C + \C^\dag P < 0$, where $\kappa_P$ is the condition number of $P$. The parameter $\kappa_P$ can bring in an $N$-dependence.} The work in Ref.~\cite{penuel2024feasibility} instead sets $\max_{t \in [0,T]} \|e^{\C t}\|$ equal to $1$, as a placeholder until further analysis becomes available.

In summary, pinning down the $N$ dependence of the condition number is crucial for determining the size of a quantum advantage, if any exists. In this work, we shall look at this problem in detail.


\subsubsection{Problem 3: Prohibitively small time steps}
\label{sec:problem_timestep}

Previous quantum algorithmic approaches to the lattice Boltzmann equation focused on the discrete Boltzmann equation (DBE) from Eq.~\eqref{eq:DBE}. This is a natural starting point to apply the Carleman linear embedding formalism, as well as quantum algorithms whose analysis normally assumes a time-continuous model. However, accuracy of a direct time-discretization of the DBE requires $\Delta t \ll \tau$. For air at atmospheric pressure and room temperature, one has $\tau \sim 10^{-10}s$, so a typical target evolution timescale of $10^{-2} - 10^{-1}s$ leads to a prohibitively large number of time steps. That is the reason why classical algorithms normally target the lattice Boltzmann equation (LBE) in Eq.~\eqref{eq:LBE_col}-\eqref{eq:LBE_str}, which allows one to take $\Delta t \gg \tau$, as discussed in Sec.~\ref{sec:LBE}. 

The authors of Ref.~\cite{li2025potential} (arXiv version) take the approach of direct time-discretization of the DBE. The aforementioned cost appears via the scaling with the inverse Knudsen number $\mathrm{Kn}^{-1}$, which is a prohibitive overhead (typical values in the atmosphere are $\mathrm{Kn} > 10^8$). The approach followed in Ref.~\cite{penuel2024feasibility}, and the published version of Ref.~\cite{li2025potential}, is instead to argue that one can consider a time-continuous evolution (as in DBE), but with the physical relaxation time $\tau$ replaced by the shifted relaxation time in lattice units $\taus$.\footnote{In Ref.~\cite{penuel2024feasibility}, the model considered can be glanced from Eqs. (16)-(17), where comparing our Eq.~\eqref{eq:linktoNS_lattice} with Eq.~(55) of Ref.~\cite{penuel2024feasibility} we see that their $\tau$ is our $\taus$. Similarly, the model considered in Ref.~\cite{li2025potential} can be seen from Eq.~(7), and again their $\tau$ is our $\taus$, as can be seen from the comment just after Eq.~(92) that $\tau > 1/2$.} Since $\taus$ is always above $1/2$ due to stability requirements, the time step problem seemingly disappears. The issue here is that, if one wants to retain a time-continuous discrete Boltzmann formulation as in Eq.~\eqref{eq:DBE}, but wishes to replace $\tau$ with $\taus$ to avert the prohibitive number of time steps, then an extra second-order memory term has to be introduced, leading to a generalized Boltzmann equation (see, e.g., Ref.~\cite{ubertini2010three}, Eq.~(24)). The approach of Refs.~\cite{li2025potential, penuel2024feasibility} is then seen to be equivalent to dropping the memory term from this generalized equation, which has the drawback of losing second-order accuracy and reintroducing the need for a prohibitively large number of time steps. 

Therefore, the currently proposed quantum approaches may not be competitive in practice. In this work, we shall address this time-step problem.


\subsubsection{Problem 4: Inefficient data extraction}
\label{sec:problem_info}

Finally, assuming that one somehow deals with the Carleman convergence problem and the complexity of obtaining the final time Carleman state $\ket{\v{y}(T)}$ from Eq.~\eqref{eq:carleman_final}, there is still the problem of extracting data from the quantum state encoding the dynamics. We will focus on the problem of extracting classical information about the fluid at time $T$ from $\ket{\v{y}(T)}$. Note that analogous issues as discussed below appear when one wants to extract information from the history state $\ket{\v{y}_H}$.

First, note that it is not difficult to estimate that in the weakly compressible (or incompressible) limit, at all times $t$ one has 
\begin{equation}
	\sqrt{\frac{N}{Q}} \lesssim \|\v{f}(t)\| \lesssim \sqrt{N}. 
\end{equation}
As a result, the norm of the $k$-th Carleman block of $\ket{\v{y}(T)}$ scales as
\begin{equation}
	\label{eq:crucialratio} \frac{\|\v{y}_k(T)\|}{\|\v{y}(T)\|} = O(\|\v{f}(T)\|^{k-N_C}) = O\left(\sqrt{N}^{k-N_C}\right).
\end{equation}
This, in turn, means that in order to extract information from the first Carleman block (which approximates $\v{f}(T)$ well enough), a quantum algorithm incurs a cost $O(\sqrt{N}^{N_C-1})$, simply due to the fact that amplitude is exponentially concentrated in the higher order $k$-blocks of the Carleman vector~\cite{costa2025further}. Hence, already for $N_C=3$, which is the smallest order capturing the cubic nonlinearities of DBE, the cost scales linearly in the number of discretization points, so the same as the classical algorithm. Increasing $N_C$ further to decrease the Carleman truncation error renders the quantum algorithm less efficient than its classical counterpart.

Even if we tried to extract information from some higher block $\v{y}_k$ (for the price of getting a higher truncation error), we would still incur an extra scaling $O(\sqrt{N}^{N_C-k})$. This extra cost kills quantum advantage over classical algorithms unless $k=N_C$ (when the dependence on $N$ could disappear), or $k=N_C-1$ (when one could hope for a quadratic improvement from classical $O(N)$ to $O(\sqrt{N})$). However, $\v{y}_{N_C}(t) = \v{f}_{\mathrm{lin}}^{\ot N_C}(t)$ with $\v{f}_{\mathrm{lin}}$ describing the solution to the \emph{linear} ODE 
\begin{align}
	\label{eq:DBElinear}
	\frac{d\v{f}_{\mathrm{lin}}(t)}{dt} = (G+F_1)\v{f}_{\mathrm{lin}}(t),\qquad \v{f}_{\mathrm{lin}}(0) = \v{f}(0),
\end{align}
which corresponds to applying naive linearization procedure (i.e., dropping the nonlinear terms) applied to the original DBE given by Eq.~\eqref{eq:DBE}, which we could have obtained in a simpler and more direct manner without invoking any Carleman procedure. Similarly, $\v{y}_{N_C-1}$ only describes the solution of the \emph{quadratic} ODE, completely ignoring cubic nonlinearities. Moreover, it is not clear whether choosing a high $N_C$ and extracting data from $\v{y}_{N_C-1}$ is any better, in terms of truncation error $\epsilon_C$, than just choosing $N_C=2$ and focusing on $\v{y}_1$. Hence, it is quite likely that even if one is in the regime of converging error, one cannot control it without making the algorithm inefficient.

Most existing literature on quantum algorithms for the lattice Boltzmann equation only discusses how to output the Carleman states $\ket{\v{y}(T)}$ or $\ket{\v{y}_H}$, without discussing the critical problem of how to extract classical information from a coherent encoding of the dynamics~\cite{itani2024quantum,sanavio2024lattice,turro2025practical}. The two works where extraction of data is discussed are Ref~\cite{li2025potential} and, in more detail, Ref.~\cite{penuel2024feasibility}. However, they appear to miss this problem:
\begin{itemize}
	\item In Ref.~\cite{li2025potential}, the crucial ratio from Eq.~\eqref{eq:crucialratio} is claimed to be $O(1/\sqrt{N_C})$ (Eq.~(102)). This can be traced back to the bound of Eq.~(85) in Ref.~\cite{li2025potential}, which implicitly assumes $\| \v{f}(t)\| \leq 1$, which does not hold. Technically, one can perform a rescaling $\v{f} \mapsto \gamma \v{f}$ to ensure the norm is smaller than $1$ as required, but this adds an extra scaling with $N$ to the condition number, removing any quantum advantage, as we shall see below. 
	\item The authors of Ref.~\cite{penuel2024feasibility}, on the other hand, try to avoid the problem by extracting information from all Carleman blocks. The issue with the argument is the assumption that the Carleman procedure outputs a state of the form $\v{y}_k \approx \v{f}^{\otimes k}$. In reality, due to the reasons discussed above,  the amplitude is concentrated in the  block $k=N_C$, which simply contains information about the linear dynamics, $\v{y}_{N_C}(t) = \v{f}_{\mathrm{lin}}^{\ot N_C}(t)$. Since tackling relevant applications requires high values of $N$ (Ref.~\cite{penuel2024feasibility} reports use-cases with $N$ between $10^{19}$ and $10^{25}$), this basically means that the output of the quantum algorithm will just reflect the results one would obtain from a naive linearization. As such, it cannot capture the nonlinearity as required. 
\end{itemize}

As anticipated, one potential way around the problem described above is to rescale the DBE according to:
\begin{align}
	\v{f} \mapsto \gamma \v{f}, \quad G \mapsto G, \quad F_1 \mapsto F_1, \quad F_2 \mapsto F_2/\gamma, \quad F_3 \mapsto F_3/\gamma^2. 
\end{align}
By taking $\gamma < 1/\|\v{f}\|$, one can make sure the norm of $\v{f}$ is rescaled to be smaller than $1$, and so one can avoid the extra complexity factor mentioned. However, as we already mentioned in Sec.~\ref{sec:problem_condition}, the condition number $\kappa$ of the linear system describing time-discretized DBE evolution for time $T$ scales as $O(\|\C\| T)$. With the rescaling above, we also have
\begin{equation}
	\|\C\| = O(1/\gamma^2) = O(\|\v{f}\|^2) = O(N),
\end{equation}
and so $\kappa=O(NT)$. Given that the complexity of the best QLS scales linearly with $\kappa$, we end up with a quantum algorithm whose complexity scales the same way as the classical algorithms with the number of spatial discretization points $N$ and simulation time $T$. We then conclude that  if the inefficient data extraction problem cannot be mitigated, no quantum advantage for the simulation of nonlinear dynamics is possible.


\subsection{Summary of results}
\label{sec:results}

In this work, we propose a novel quantum approach to the lattice Boltzmann equation, addressing the problems described in the previous section. 

First, in Sec.~\ref{sec:problem_formulation}, we reformulate the standard LBE and the existing Carleman embedding approach to it. In Sec.~\ref{sec:shifted}, we introduce a \emph{shifted incompressible lattice Boltzmann equation} which, for the price of restricting to incompressible flows, allows us to represent the fluid using a state vector $\v{g}$, whose norm scales with the magnitude of the lattice velocity $|\v{u}^\star|$. Next, in Sec.~\ref{sec:parameters}, we explain the choice of simulation parameters that forces $|\v{u}^\star|$ to be small enough, so that the norm $\|\v{g}\|$ is of $O(1)$. This way our approach mitigates the problem of inefficient data extraction described in Sec.~\ref{sec:problem_info}. Then, in Sec.~\ref{sec:discrete_carleman}, we describe a \emph{discrete Carleman embedding}~\cite{berkolaiko1998analysis,pruekprasert2020moment,pruekprasert2024moment}, which allows us to approximate a nonlinear discrete-time evolution of the $d$-dimensional system using a discrete-time linear evolution of an enlarged system. By leaving time discretization on the classical side and directly embedding the second-order accurate LBE (instead of embedding the time-continuous DBE), we avoid the problem of prohibitively small time steps described in Sec.~\ref{sec:problem_timestep}. Finally, in Sec.~\ref{sec:linear_system}, we construct linear systems $A$, whose solutions carry information about fluid systems during LBE evolutions with $T^\star$ time steps. 

Section~\ref{sec:algorithm} contains the details on implementing the quantum algorithm solving the shifted incompressible LBE problem. First, in Sec.~\ref{sec:encoding}, we explain how to encode the Carleman embedded shifted LBE data in quantum registers. Then, in Sec.~\ref{sec:block_encodings}, we construct the unitary block-encoding of $A$ and unitary state preparations required by a quantum linear solver to produce $\ket{\v{y}(T)}$ or $\ket{\v{y}_H}$. In Sec.~\ref{sec:QLS}, we present the details concerning the query complexity $q_Q$ of running the quantum linear solver algorithm for our problem, which scales roughly as $q_Q=O(\alpha_A \kappa_A / \|A\|)$, where $\alpha_A$ is the block-encoding prefactor of~$A$. Finally, in Sec.~\ref{sec:measurement}, we explain how to perform measurements to extract classical information about the drag force. This extraction procedure brings additional multiplicative factor overhead $q_M$, scaling as $O(\re^{3/8})$, to the complexity of our quantum algorithm.

The central Sec.~\ref{sec:performance} discusses the performance of the proposed quantum algorithm: how well it reproduces the lattice Boltzmann dynamics and at what complexity cost. First, in Sec.~\ref{sec:error}, we numerically investigate the applicability of the Carleman approach for the lattice Boltzmann equation, thus addressing the problem discussed in Sec.~\ref{sec:problem_carleman}. We perform extensive numerical simulations for D1Q3 model with Carleman truncation orders $N_C\in\{1,\dots,5\}$ and Reynolds numbers $\re\in[10,1000]$; and for D2Q9 model with $N_C\in\{1,\dots,3\}$ and $\re\in [10,250]$. We assume periodic boundary conditions, no walls, and initial states given by one-dimensional sinusoidal and colliding states, as well as two-dimensional Taylor-Green vortices and Gaussian vortex dipoles. We find that for low enough Reynolds numbers, the Carleman truncation error $\epsilon_C$ indeed converges, in the sense that increasing $N_C$ decreases $\epsilon_C$. However, we also observe the existence of a threshold Reynolds number $\re_T$, such that the Carleman procedure does not converge for $\re > \re_T$, i.e., increasing truncation order $N_C$ increases the error $\epsilon_C$ instead of decreasing it. For one-dimensional systems, this threshold is relatively low, $\re_T\sim 10^2$, for two-dimensional systems it is higher by an order of magnitude or more, and we expect it to be even higher for three-dimensional systems.

Note, that this threshold behavior is consistent with physical interpretation of the Reynolds number $\re$, which can be seen as a measure of the relative strength of nonlinear (inertial) effects compared to linear (viscous) effects in a fluid flow. One can then expect that, as $\re$ grows, the strength of the nonlinearity grows as well, eventually bringing the system outside of the Carleman convergence radius. This seems to be contrary to what was suggested in Ref.~\cite{li2025potential}, where the authors claimed that solving the lattice Boltzmann form of the Navier-Stokes equations, one reduces the nonlinearity from $\re$-determined to $\ma$-determined, where $\ma$ is the Mach number characterizing the fluid flow. Our numerical investigations suggest that the picture is more intricate, with both $\re$ and $\ma$ playing their role in convergence, especially when the lattice velocity is chosen to scale with $\re$, as we do in our simulations. We conclude that the applicability of the Carleman approach to the simulation of highly turbulent flows with high Reynolds number may be limited due to the Carleman error not converging for these cases.

Next, in Sec.~\ref{sec:BE} we estimate the first complexity component $\alpha_A/\|A\|$. We find that it has a negligible contribution compared to $\kappa_A$. Nevertheless, we show that it can be well approximated by an exponential in $N_C$, and we find the best fitting parameters. Section~\ref{sec:condition} is devoted to the analysis of the condition number $\kappa_A$. We first derive lower bounds, proving that at best one can hope for $\kappa_A=O(\sqrt{N}T^\star)$. Then, we present the results of our extensive numerical simulations to estimate the condition number for $D=1$ and $D=2$, and $N_C\in\{1,\dots,4\}$, using the Lanczos method for Carleman matrices of dimension up to $\sim 10^8$. While sizes corresponding to Carleman systems for high-impact uses-cases would be much larger, and beyond the reach of classical hardware for the foreseeable future, these are -- to our knowledge -- the largest numerical simulations ever done to establish the scaling of Carleman-based algorithms for fluids (and in fact, for Carleman-based algorithms more generally). We find that the condition number $\kappa_A$ scales with the power $\chi$ of the Reynolds number $\re$, with $\chi$ growing with both the problem dimension $D$ and the Carleman truncation order $N_C$. As we discuss in the further section, this negatively affects the potential for a quantum advantage. In Sec.~\ref{sec:query}, we then put together both query complexity cost components and derive both lower and upper bounds for the query cost $q_Q$ in terms of the Reynolds number $\re$ and Carleman truncation order $N_C$. This is followed by Sec.~\ref{sec:gates}, where we estimate the $T$ gate cost of a single query, proving that for a fixed $N_C$ this just brings a constant overhead over the query cost $q_Q$. For low $N_C\leq 5$, this overhead is roughly $10^6$ gates per query for $D=1$, and $10^8$ gates for $D=2$. 

Finally, in Sec.~\ref{sec:conclusions}, we analyze the obtained complexity results and compare them with the complexity of classical LBE algorithms. We conclude that while modest quantum advantage could be expected for $D=2$ (and, by extrapolation, in $D=3$), it will be limited to low $N_C$, and so to low Reynolds numbers or high error outputs. 


\section{Problem formulation}
\label{sec:problem_formulation}

Throughout the paper, we set the base of $\log$ to 2, and when we use the natural logarithm, we denote it by $\ln$. We will also use the Dirac ket notation, e.g., $\ket{\v{Y}}$ will denote the normalized quantum state encoding the corresponding vector $\v{Y}$. 


\subsection{Shifted incompressible lattice Boltzmann equation}
\label{sec:shifted}

We now discuss a formulation of the problem that strives to overcome the obstacles that we have described in the previous section. Consider a new vector-valued function $\v{g}$ that is a shifted version of LBE state vector $\bar{\v{f}}$ from Eq.~\eqref{eq:fbar}:
\begin{equation}
	\label{eq:g}
	\v{g}(\v{r}^\star,t^\star) = \bar{\v{f}}(\v{r}^\star,t^\star) - \v{w},
\end{equation}
and note that $\v{g}$ carries information about the displacement of the fluid state from the zero velocity equilibrium state. The mass and momentum densities can be expressed using these new variables via:
\begin{equation}
	\delta\rho(\v{r}^\star,t^\star)= \sum_{m=1}^Q g_m(\v{r}^\star,t^\star),\qquad
	\rho\v{u}^\star(\v{r}^\star,t^\star)=\sum_{m=1}^Q g_m(\v{r}^\star,t^\star)\v{e}_m,
\end{equation}
where we expand $\rho$ at every lattice site around the reference density as $\rho = 1+\delta \rho$.

From Eqs.~\eqref{eq:LBE_col}-\eqref{eq:LBE_str}, we see that this shifted LBE state vector evolves according to the following equations:
\begin{aligns}
	\label{eq:LBE_col_shift}
	\v{g}^C(\v{r}^\star,t^\star) & = \v{g}(\v{r}^\star,t^\star)- \frac{1}{\bar{\tau}^\star}[\v{g}(\v{r}^\star,t^\star)-\v{g}^\eq(\v{r}^\star,t^\star)],\\
	\label{eq:LBE_str_shift}
	g_m(\v{r}^\star+\v{e}_m^\star,t^\star+1) & = g_m^C(\v{r}^\star,t^\star),
\end{aligns}
with
\begin{align}
	\label{eq:eq_shift}
	g_m^\eq= \delta \rho \; w_m + \rho \left(3\v{e}_m^\star\cdot \v{u}^\star + \frac{9}{2}(\v{e}_m^\star\cdot \v{u}^\star)^2 -\frac{3}{2}|\v{u}^\star|^2\right)w_m,
\end{align}
where we used the fact that for the considered discrete velocity models from Table~\ref{tab:velocities} one has $c^\star_s=1/\sqrt{3}$, and for clarity we omitted the dependence on $\v{r}^\star$ and $t^\star$. Note that here we restrict our attention to a simple streaming step without walls (just periodic boundary conditions), and without external driving. We discuss and analyze the extension to non-trivial geometries and driven flows elsewhere~\cite{jennings2025simulating}.

Next, instead of following the recent proposals of Refs.~\cite{li2025potential,penuel2024feasibility} that exploit the weak incompressibility assumption and approximate $1/\rho\approx 2-\rho$ to express $\v{g}^\eq$ as a cubic function of $\v{g}$, we focus on the incompressible formulation of LBE developed in Ref.~\cite{he1997lattice}. As we explain in Appendix~\ref{app:incompressible}, this results in replacing $\rho$ in Eq.~\eqref{eq:eq_shift} by the reference density (i.e., setting $\rho$ to 1), while retaining the density fluctuation term $\delta\rho$, and allows one to express macroscopic velocity as
\begin{equation}
	\v{u}^\star(\v{r}^\star,t^\star)=\sum_{m=1}^Q g_m(\v{r}^\star,t^\star)\v{e}_m^\star.
\end{equation}
As a result, the equilibrium density takes the following form:
\begin{align}
	\label{eq:eq_shift_incompressible}
	g_m^\eq= \left(\delta \rho+3\v{e}_m^\star\cdot \v{u}^\star + \frac{9}{2}(\v{e}_m^\star\cdot \v{u}^\star)^2 -\frac{3}{2}|\v{u}^\star|^2\right )w_m,
\end{align}
which can be rewritten explicitly as a quadratic function of $\v{g}$:
\begin{align}
	g_m^\eq& = w_m \left(\sum_{m_1=1}^Q  g_{m_1}+3 \sum_{m_1=1}^Q E_{m,m_1} g_{m_1}+ \frac{9}{2}\left(\sum_{m_1=1}^Q E_{m,m_1}g_{m_1}\right)^2 -\frac{3}{2}\left|\sum_{m_1=1}^Q g_{m_1}\v{e}_{m_1}^\star\right|^2\right),\\
	& =   w_m \left(\sum_{m_1=1}^Q  g_{m_1}+\sum_{m_1=1}^Q 3 E_{m,m_1} g_{m_1}+ \sum_{m_1=1}^Q \sum_{m_2=1}^Q \left(\frac{9}{2} E_{m,m_1} E_{m,m_2}-\frac{3}{2} E_{m_1;m_2}\right) g_{m_1} g_{m_2}\right), \label{eq:eq_shift_explicit}
\end{align}
where we introduced a Gram matrix $E$ with matrix elements 
\begin{equation}
	E_{m,m_1}=\v{e}_m^\star \cdot \v{e}_{m_1}^\star.
\end{equation} 

Given that the $D$-dimensional position vector $\v{r}^\star$ in lattice units takes only integer values, the state of the system at each time step $t^\star$ can be represented by the following vector
\begin{equation}
	\v{g}(t^\star) = \sum_{\v{r}^\star} \sum_{m=1}^Q  g_m(\v{r}^\star,t^\star) \ket{\v{r}^\star} \ot \ket{m},
\end{equation}
where we explicitly introduced position and velocity registers, and the dimension of $\v{g}$ is
\begin{equation}
	d:= N Q,
\end{equation}
where 
\begin{equation}
	N:=\prod_{i=1}^D N_i
\end{equation}
is the total number of spatial lattice points. We can then rewrite the two steps of the shifted LBE, Eqs.~\eqref{eq:LBE_col_shift}-\eqref{eq:LBE_str_shift}, as
\begin{aligns}
	\label{eq:LBE_col_shift_matrix}
	\v{g}^C(t^\star) & = \left(  I+ F_1\right)\v{g}(t^\star)+ F_2 \v{g}(t^\star)^{\ot 2},\\
	\label{eq:LBE_str_shift_matrix}
	\v{g}(t^\star+1) & = S\v{g}^C(t^\star),
\end{aligns}
where $S$ is a $d\times d$ permutation matrix (so also a unitary matrix) encoding streaming:
\begin{equation}
	\label{eq:streaming}
	S = \sum_{\v{r}^\star}\sum_{m=1}^Q \ketbra{\v{r}^\star+\v{e}_m^\star}{\v{r}^\star} \ot \ketbra{m}{m},
\end{equation}
whereas $F_k$ are $d\times d^k$ matrices describing linear and quadratic contributions from collisions, the form of which can be read off from Eq.~\eqref{eq:eq_shift_explicit} to be:
\begin{aligns}
	\label{eq:F1}
	&F_1 =  \sum_{\v{r}^\star} \ketbra{\v{r}^\star}{\v{r}^\star} \ot \sum_{m,m_1=1}^Q  \frac{1}{\taus} (w_m + 3 w_{m} E_{m,m_1} - \delta_{m,m_1})\ketbra{m}{m_1},\\[12pt]
	\label{eq:F2}
	&F_2 = \sum_{\v{r}^\star} \ketbra{\v{r}^\star}{\v{r}^\star, \v{r}^\star} \ot \sum_{m,m_1,m_2=1}^Q \frac{w_m}{\bar{\tau}^\star}  \left(\frac{9}{2}E_{m,m_1}E_{m,m_2}-\frac{3}{2}E_{m_1,m_2}\right)\ketbra{m}{m_1,m_2}.
\end{aligns}
Notice that the above can be rewritten as
\begin{aligns}
	\label{eq:F1_loc}
	F_1 &= I_{\v{r}^\star} \ot \tilde{F}_1, \\
	F_2 &= I_{\v{r}^\star,\v{r}^\star} \ot \tilde{F}_2, 
	\label{eq:F2_loc}
\end{aligns}
where $I_{\v{r}^\star}$ and $I_{\v{r}^\star,\v{r}^\star}$ are $N\times N$ and $N \times N^2$ identity matrices on the position registers, whereas $\tilde{F}_1$ and $\tilde{F}_2$ are $Q\times Q$ and $Q \times Q^2$ collision matrices acting on the velocity register.

Now, as we explain in Appendix~\ref{app:incompressible}, the term $\delta\rho$ in Eq.~\eqref{eq:eq_shift_incompressible} scales as $O(|\v{u}^\star|^2)$, while the remaining terms in that equation have explicit dependence on $\v{u}^\star$, so that we can conclude that
\begin{equation}
	g_{m}^\eq =O(|\v{u}^\star|).
\end{equation}
Thus, making a simplifying assumption that during the evolution the state of the fluid at different lattice sites is close to the corresponding local equilibrium state, we have
\begin{equation}
	g_{m}(\v{r}^\star,t^\star) \approx  g^\eq_{m}(\v{r}^\star,t^\star) = O(|\v{u}^\star|).
\end{equation}
Then, denoting the upper bound on the magnitude of lattice velocity $|\v{u}^\star|$ during the evolution by $u^\star_\mx$, we get
\begin{equation}
	\|\v{g}\| = \sqrt{\sum_{\v{r}^\star} \sum_{m=1}^Q |g_m(\v{r}^\star,t^\star)|^2} = O\left(\sqrt{N}u^\star_\mx\right).
\end{equation}
Therefore, in order to keep the norm of the solution bounded, $\|\v{g}\|=O(1)$, one can simply enforce the following value of lattice velocity
\begin{equation}
	\label{eq:u_norm_condition}
	u^\star_\mx = \frac{u_0^\star}{\sqrt{N}} 
\end{equation}
for some constant $u_0^\star$. This way one can control the norm of $\v{g}$, without a direct negative impact on the condition number as in the rescaling approach (as we shall see in Sec.~\ref{sec:condition}), thus overcoming the main obstacle for data extraction that we discussed in Sec.~\ref{sec:problem_info}. 

Finally, although we analyze the inclusion of non-trivial geometries elsewhere~\cite{jennings2025simulating}, let us briefly explain here how to include walls in our model. One can introduce an indicator function $\mathcal{W}$ encoding the wall boundary conditions in the following way: 
\begin{equation}
	\label{eq:boundary}
	\mathcal{W}_{\v{r}^\star} = \left\{
	\begin{array}{cc}
		1&:~\v{r}^\star \mathrm{~is~a~wall~node,}   \\
		0& :~\v{r}^\star \mathrm{~is~a~fluid~node,}
	\end{array}
	\right.
\end{equation}
Then, the bounce-back effect of the walls can be accounted for by modifying the streaming matrix in the following way:
\begin{align}
	\label{eq:streaming_walls}
	S = &\sum_{\v{r}^\star}\sum_{m=1}^Q (1-\mathcal{W}_{\v{r}^\star+\v{e}_m^\star})(1-\mathcal{W}_{\v{r}^\star})\ketbra{\v{r}^\star+\v{e}_m^\star}{\v{r}^\star} \ot \ketbra{m}{m}\\
	& + \sum_{\v{r}^\star}\sum_{m=1}^Q \mathcal{W}_{\v{r}^\star+\v{e}_m^\star}(1-\mathcal{W}_{\v{r}^\star})\ketbra{\v{r}^\star}{\v{r}^\star} \ot \ketbra{-m}{m}\\
	& + \sum_{\v{r}^\star}\sum_{m=1}^Q \mathcal{W}_{\v{r}^\star} \ketbra{\v{r}^\star}{\v{r}^\star} \ot \ketbra{m}{m},
\end{align}
where, with a slight abuse of notation, $-m$ is a velocity index corresponding to discrete velocity $-\v{e}^\star_{m}$. In the above, the first term corresponds to unmodified streaming when both nodes considered are fluid; the second term encodes bouncing, when a particle in a fluid node hits a wall node; and the third one simply means no evolution (identity matrix) inside the wall nodes. Crucially, the streaming matrix modified according to the above recipe is still a permutation matrix, so $S$ stays unitary.


\subsection{Parameter selection}
\label{sec:parameters}

We will now explain how to select simulation parameters $N_x$ (spatial extent in lattice units), $T^\star$~(temporal duration in lattice units), and $\bar{\tau}^\star$ (relaxation time in lattice units) in order to guarantee the required scaling of the lattice velocity from Eq.~\eqref{eq:u_norm_condition} and, at the same time, to be able to simulate general dynamics of a fluid described by kinematic viscosity $\nu$. As we shall see, our main assumptions here are as follows:
\begin{itemize}
	\item The number of discretization points $N$ should scale with a power of the Reynolds number, typically $\re^{3D/4}$, in order to resolve the \emph{Kolmogorov microscale} and capture the smallest dynamical features of turbulent flows.
	\item The simulation has to be run till a steady flow develops, which happens after some constant multiple of the advection time.
\end{itemize}

First, let us denote the upper bound on the maximal physical velocity during the evolution by $u_\mx$ and introduce the ratio $\eta_u$ of the flow velocity $u$ to $u_\mx$,
\begin{equation}
	\eta_u = \frac{u}{u_\mx}.
\end{equation}
Next, assume for simplicity a cubic simulation region, so that all $L_i$ and $N_i$ are equal, and denote by $\eta_L$ the ratio of the characteristic length $L$ to the simulation region size $L_x$,
\begin{equation}
	\eta_L := \frac{L}{L_x}.
\end{equation}
Thus, the Reynolds number of the considered fluid problem can be expressed as
\begin{equation}
	\re := \frac{L u}{\nu} = \frac{L_x u_\mx}{\nu} \eta_L \eta_u.
\end{equation}

Next, using Chapman-Enskog relation, Eq.~\eqref{eq:chapman}, in the expression for the conversion between physical and lattice velocity, we get
\begin{equation}
	{u}^\star_\mx = \frac{\Delta t}{\Delta x} u_\mx = \frac{\Delta x u_\mx}{\nu } \frac{(\bar{\tau}^\star-\frac{1}{2})}{3} = \frac{L_x u_\mx}{\nu } \frac{(\bar{\tau}^\star-\frac{1}{2})}{3N_x} = \frac{(\bar{\tau}^\star-\frac{1}{2})}{3N_x}\frac{\re}{\eta_L\eta_u}. 
\end{equation}
Thus, enforcing Eq.~\eqref{eq:u_norm_condition} means
\begin{equation}
	\bar{\tau}^\star = \frac{1}{2} +\frac{3 u_0^\star}{\re}\frac{\eta_L\eta_u}{N_x^{D/2-1}}.
\end{equation}
Next, we recall that in order to resolve the Kolmogorov microscale, one requires $\Delta x \simeq L /\re^{3/4}$. We thus choose the number of discretization points per direction to be the following function of the Reynolds number:
\begin{equation}
	\label{eq:NxandRe}
	N_x = \frac{\re^{\beta}}{\eta_L}.
\end{equation}
In what follows, we will keep $\beta$ general to allow the choice of a better or worse resolution, but when we perform simulations or want to compare with the performance classical algorithms, we will always choose $\beta = 3/4$, which guarantees resolving the Kolmogorov microscale. 

With that, the expression for the relaxation constant becomes
\begin{equation}
	\bar{\tau}^\star = \frac{1}{2} + \frac{3u_0^\star \eta_L^{D/2}\eta_u}{\re^{\beta(D/2-1)+1}}.
\end{equation}
Since $\Delta x$ is fixed by the condition to resolve the Kolmogorov microscale and $\bar{\tau}^\star$ is fixed by the norm condition from Eq.~\eqref{eq:u_norm_condition}, $\Delta t$ is also fixed by the Chapman-Enskog relation, Eq.~\eqref{eq:chapman}, to be
\begin{equation}
	\Delta t = \frac{L}{u} \frac{u_0^\star \eta_L^{D/2}\eta_u}{\re^{\beta(D/2+1)}}.
\end{equation}
Finally, assuming that the steady flow will develop after $1/\eta_T$ advection times, 
\begin{equation}
	T = \frac{1}{\eta_T}\frac{L}{u},
\end{equation}
the total number of time steps to simulate the fluid under these conditions is given by
\begin{equation}
	T^\star = \frac{T}{\Delta t} =  \frac{1}{\eta_T\eta_u\eta_L^{D/2}} \frac{\re^{\beta(D/2+1)}}{u_0^\star}.
\end{equation}

In this work we will investigate the simplified scenario, where we put 
\begin{equation}
	\eta_T=\eta_u=\eta_L =1,\qquad u_0^\star = 1,
\end{equation}
i.e., we simulate for one advection time, the characteristic length of the problem is equal to its spatial extent, the maximal velocity is well approximated by the flow velocity, and the constant $u_0^\star$ controlling the norm $\|\v{g}\|$ is 1. In such a case, we get the following simulation parameters:
\begin{equation}
	\label{eq:parameter_choice}
	N_x = \left\lceil \re^\beta \right\rceil ,\qquad T^\star = \left\lceil \re^{\beta(D/2+1)} \right\rceil,\quad \bar{\tau}^\star = \frac{1}{2} + \frac{3}{\re^{\beta(D/2-1)+1}},\quad u_\mx^\star = \frac{1}{\re^{\beta D/2}},
\end{equation}
where we used the ceiling function to guarantee integer values of $N_x$ and $T^\star$. Thus, for a fixed problem dimension $D$ and spatial resolution parameter $\beta$ (with $\beta= 3/4$ resolving the Kolmogorov microscale), the above simulation parameters become a simple function of the Reynolds number~$\re$.


\subsection{Discrete Carleman embedding}
\label{sec:discrete_carleman}

We now have a discrete-time nonlinear problem describing the fluid dynamics via Eqs.~\eqref{eq:LBE_col_shift_matrix}-\eqref{eq:LBE_str_shift_matrix} with the parameter choice given by Eq.~\eqref{eq:parameter_choice}. This provides a well-defined problem, and we now wish to consider how it can be simulated on a quantum computer. As discussed previously, this cannot be done directly, but we must construct a linear representation of the problem that is amenable for the quantum computer.
Therefore, the next step is to embed the nonlinear equations in a larger dimensional space, so that it can be well approximated by a discrete-time linear problem. To achieve this, we start by introducing infinite vectors $\v{y}^{C,\infty}(t^\star)$ and $\v{y}^\infty(t^\star)$ of size $d+d^2+d^3+\dots$,
\begin{aligns}
	\v{y}^{C,\infty}(t^\star) & := [\v{y}_1^C(t^\star),\v{y}_2^C(t^\star),\dots],\qquad \v{y}_k^C(t^\star) = [\v{g}^C(t^\star)]^{\otimes k},\\
	\v{y}^{\infty}(t^\star) & := [\v{y}_1(t^\star),\v{y}_2(t^\star),\dots],\qquad \v{y}_k(t^\star) = [\v{g}(t^\star)]^{\otimes k},
\end{aligns}
which carry information about all powers of the shifted LBE evolution, respectively ending with a streaming or a collision step. As this replacing of higher powers with new variables is inspired by the Carleman linear embedding procedure~\cite{berkolaiko1998analysis,pruekprasert2020moment,pruekprasert2024moment}, we will refer to $\v{y}^{C,\infty}(t^\star)$ and $\v{y}^\infty(t^\star)$ as the \emph{infinite Carleman vectors}. From Eq.~\eqref{eq:LBE_str_shift_matrix}, we straightforwardly obtain the streaming update rule for $\v{y}^{\infty}(t^\star)$:
\begin{align}
	{\v{y}}^{\infty}(t^\star+1) &=\S^\infty {\v{y}}^{C,\infty}(t^\star),
\end{align}
where
\begin{equation}
	\label{eq:embeded_str_bound}
	\mathcal{S}^\infty = 
	\begin{pmatrix}
		S & 0 & 0 & \dots\\
		0     & S^{\otimes 2} & 0 & \dots\\
		0     & 0     & S^{\otimes 3} &\dots\\
		\vdots&\vdots & \vdots& \ddots 
	\end{pmatrix}
\end{equation}
is an \emph{infinite Carleman streaming matrix}, which is unitary. 

To get the collision update rule for $\v{y}^{C,\infty}(t^\star)$, let us start from the LBE collision equation, Eq.~\eqref{eq:LBE_col_shift_matrix}, for $k$ copies of the state vector:
\begin{align}
	[\v{g}^C(t^\star)]^{\otimes k}&=\left[\left( I+ F_1\right)\v{g}(t^\star) + F_2 [\v{g}(t^\star)]^{\otimes 2}\right]^{\otimes k}\nonumber\\
	&=\sum_{k_1=0}^k \left[\left( I+ F_1\right)^{\otimes k_1} \otimes F_2^{\otimes k-k_1}+\mathrm{perms.}\right] [\v{g}(t^\star)]^{\otimes 2k-k_1}\nonumber\\
	&=\sum_{l=k}^{2k} \left[\left( I+ F_1\right)^{\otimes 2k-l} \otimes F_2^{\otimes l-k}+\mathrm{perms.}\right] [\v{g}(t^\star)]^{\otimes l}\nonumber\\
	&=\sum_{l=k}^{2k} C_l^k [\v{g}(t^\star)]^{\otimes l},
\end{align}
where we have introduced $d^k\times d^l$ matrices
\begin{equation}
	\label{eq:Ckl}
	C_l^k:=\left( I+F_1\right)^{\otimes 2k-l} \otimes F_2^{\otimes l-k}+\mathrm{perms.},
\end{equation}
and where $+ \ \mathrm{perms.}$ denotes a sum over distinct permutations over $k$ subsystems, with $2k-l$ subsystems of type 1, and $l-k$ subsystems of type 2. This means that
\begin{equation}
	\v{y}_k^C(t^\star) = \sum_{l=k}^{2k} C^k_l \v{y}_l (t^\star),
\end{equation}
which can be written compactly as
\begin{align}
	\label{eq:carl_evol_d}
	\v{y}^{C,\infty}(t^\star) = \C^{\infty}\v{y}^{\infty}(t^\star),
\end{align}
with the \emph{infinite Carleman collision matrix}
\begin{equation}
	\C^\infty = 
	\begin{pmatrix}
		C^1_1 & C^1_2 & 0 & 0     &\dots&0\\
		0     & C^2_2 & C^2_3 & C^2_4 &\dots&0\\
		0     & 0     & C^3_3 & C^3_4 &\dots&0\\
		\vdots&\vdots & \vdots& \vdots& \ddots &\vdots
	\end{pmatrix}, 
\end{equation}
where in each row $k$ there are $k+1$ non-zero blocks.

The nonlinear update rule for a $d$-dimensional vector $\v{g}(t^\star)$ from Eqs.~\eqref{eq:LBE_col_shift_matrix}-\eqref{eq:LBE_str_shift_matrix},
\begin{equation}
	\v{g}(t^\star+1) = S\left[\left( I+ F_1\right)\v{g}(t^\star) + F_2 \v{g}(t^\star)^{\otimes 2}\right],
\end{equation}
can thus be formally replaced by a linear update rule for an infinite-dimensional vector $\v{y}^{\infty}(t^\star)$,
\begin{equation}
	\v{y}^{\infty}(t^\star+1) = \S^\infty \C^\infty \v{y}^{\infty}(t^\star). 
\end{equation}
This infinite system can be then truncated at level $N_C$, so that we are dealing with finite \emph{Carleman vectors}
\begin{aligns}
	\v{y}^{C}(t^\star) & := [\v{y}_1^C(t^\star),\v{y}_2^C(t^\star),\dots,\v{y}_{N_C}^C(t^\star)],\\
	\v{y}(t^\star) & := [\v{y}_1(t^\star),\v{y}_2(t^\star),\dots,\v{y}_{N_C}(t^\star)],
	\label{eq:truncatedCarlemanstate}
\end{aligns}
of dimension
\begin{equation}
	d_C:= d+d^2+\dots +d^{N_C} = \frac{d(d^{N_C}-1)}{d-1},
\end{equation}
which carry information about all powers of the shifted incompressible LBE evolution up to the $N_C$-th power. The collision and streaming steps are now captured by the \emph{Carleman collision matrix} $\C$ and \emph{Carleman streaming matrix} $\S$, which are given by $\C^\infty$ and $\S^\infty$ restricted to the first $N_C$ blocks. Due to the block-diagonal form of $\S$, coming from linearity of the original transformation, it can be implemented exactly for a finite truncation order $N_C$, i.e., no truncation error is introduced. The collision step, on the other hand, has $\min\{k+1,N_C-k+1\}$ non-zero blocks in each row $k$ and, since truncation removes some of the terms as compared to the original matrix $\C^\infty$, a truncation error is typically expected, i.e., the evolution obtained from a truncated linear system will generally differ from the original shifted LBE dynamics. We will discuss the Carleman truncation error $\epsilon_C$, and in particular its dependence on the truncation order $N_C$, in Sec.~\ref{sec:error}. The update rule for the Carleman vector is thus given by
\begin{equation}
	\label{eq:LBE_recurrence}
	\v{y}(t^\star+1) = \S\C \v{y}(t^\star),
\end{equation}
and the above recurrence relation has a clear solution given by
\begin{equation}
	\label{eq:carlemanSolution}
	\v{y}(t^\star) = (\S\C)^{t^\star} \v{y}_{\mathrm{ini}},
\end{equation}
where $\v{y}_{\mathrm{ini}}$ is the initial Carleman vector obtained from the initial condition $\v{g}(0)$:
\begin{equation}
	\label{eq:carleman_ini}
	\v{y}_{\mathrm{ini}} = [\v{g}(0),\v{g}(0)^{\ot 2},\dots,\v{g}(0)^{\ot N_C}].
\end{equation}
Thus, given the initial state $\v{g}(0)$ of the original fluid system, its state after $T^\star$ collision and streaming steps is given by
\begin{equation}
	\v{g}(T^\star) \approx_{\epsilon_C} \left[(\S\C)^{T^\star} \v{y}_{\mathrm{ini}}\right]_1
\end{equation}
where $\approx_{\epsilon_C}$ captures the introduced truncation error.


\subsection{Linear system formulation}
\label{sec:linear_system}

Since the heart of our quantum algorithm for the lattice Boltzmann equation is based on a quantum linear solver, in this section we explain how to cast the Carleman state evolved under discrete-time dynamics from Eq.~\eqref{eq:carlemanSolution} as a solution to a linear system of equations. More precisely, we first construct a linear system $A_H$, whose solution $\v{Y}_H$ carries information about the whole history of the evolution described by the shifted incompressible LBE (i.e., at all discrete time steps between 0 and $T^\star$). We then construct another system $A_F$, whose solution $\v{Y}_F$ can be used to recover the final state of the evolution (i.e., after $T^\star$ time steps). In what follows, we will use a shorthand notation of $A$ and $\v{Y}$ to denote either $A_H$ and $\v{Y}_H$, or $A_F$ and $\v{Y}_F$, when the discussed results for both systems are identical.

\paragraph*{History state.}

The shifted incompressible LBE problem with $T^\star$ streaming and collision steps can be written in the form of the following linear system of dimension $d_C(T^\star+1)$:
\begin{align}
	\label{eq:AH}
	\begin{pmatrix}
		I& 0 & 0 & \dots&0&0\\
		-\S\C   & I& 0 & \dots&0&0\\
		0     & -\S\C    & I&\dots&0&0\\
		\vdots&\vdots &  \vdots& \ddots&\vdots&\vdots \\
		0&0&0&\dots&I&0\\
		0&0&0&\dots&-\S\C & I
	\end{pmatrix}
	\begin{pmatrix}
		\v{y}(0)\\
		\v{y}(1)\\
		\v{y}(2)\\
		\vdots\\
		\v{y}(T^\star-1)\\
		\v{y}(T^\star)
	\end{pmatrix} = 
	\begin{pmatrix}
		\v{y}_{\mathrm{ini}}\\
		0\\
		0\\
		\vdots\\
		0\\
		0
	\end{pmatrix},
\end{align}
or in a more compact form as
\begin{equation}
	\label{eq:linearHistory}
	A_H\v{Y}_H=\v{b}.    
\end{equation}
The vector $\v{Y}_H$ is the \emph{Carleman history state}, carrying information about Carleman vector at all times during the simulated evolution. It can be verified by direct calculation that the inverse of $A_H$ is given by
\begin{align}
	\label{eq:AHInv}
	A_H^{-1}=\begin{pmatrix}
		I           & 0     & 0      & \dots  & 0       & 0\\
		\S\C        & I     & 0      & \dots  & 0       & 0\\
		(\S\C)^2    & \S\C  & I      & \dots  & 0       & 0\\
		\vdots      &\vdots & \vdots & \ddots & \vdots  &\vdots\\
		(\S\C)^{T^\star-1} & (\S\C)^{T^\star-2} & (\S\C)^{T^\star-3} & \dots & I & 0\\
		(\S\C)^{T^\star} & (\S\C)^{T^\star-1} & (\S\C)^{T^\star-2} & \dots & \S\C & I
	\end{pmatrix},
\end{align}   
which is indeed in accordance with Eq.~\eqref{eq:carlemanSolution}. We thus see the solution to the set of linear equations above,
\begin{equation}
	\v{Y}_H = A_H^{-1} \v{b} = \left(\v{y}_{\mathrm{ini}},\S\C \v{y}_{\mathrm{ini}}, (\S\C)^2 \v{y}_{\mathrm{ini}},\dots, (\S\C)^{T^\star-1} \v{y}_{\mathrm{ini}}, (\S\C)^{T^\star} \v{y}_{\mathrm{ini}}, \right)^\top,
	\label{eq:YH}
\end{equation}
carries information about the whole evolution described by the shifted incompressible LBE. 

\paragraph*{Final state.}

For the recovery of just the final state, instead of the whole history state, we will consider an alternative linear system, inspired by Ref.~\cite{berry2014high}, with additional $(2^{W}-1)(T^\star+1)$ idling steps after the original $T^\star+1$ evolution steps:
\begin{align}
	\label{eq:AF}
	\left(
	\begin{array}{ccccccccccc}
		I& 0 & 0 & \dots&0&0& 0& 0&\dots & 0 & 0\\
		-\S\C   & I& 0 & \dots&0&0& 0& 0&\dots & 0 & 0\\
		0     & -\S\C    & I&\dots&0&0& 0& 0&\dots & 0 & 0\\
		\vdots&\vdots &  \vdots& \ddots&\vdots&\vdots&\vdots&\vdots &\dots & 0 & 0\\
		0&0&0&\dots&I&0& 0& 0&\dots & 0 & 0\\
		0&0&0&\dots&-\S\C&I& 0& 0&\dots & 0 & 0\\
		0&0&0&\dots& 0& -I & I & 0&\dots & 0 & 0\\
		0&0&0&\dots& 0& 0 & -I & I &\dots & 0 & 0\\
		\vdots&\vdots&\vdots&\dots& \vdots& \vdots & \vdots & \vdots &
		\ddots&\vdots&\vdots \\
		0& 0 & 0 & \dots & 0 & 0&0&0&\dots& -I&I
	\end{array}
	\right)
	\begin{pmatrix}
		\v{y}(0)\\
		\v{y}(1)\\
		\v{y}(2)\\
		\vdots\\
		\v{y}(T^\star-1)\\
		\v{y}(T^\star)\\
		\v{y}(T^\star)\\
		\v{y}(T^\star)\\
		\vdots\\
		\v{y}(T^\star)  
	\end{pmatrix} = 
	\begin{pmatrix}
		\v{y}_{\mathrm{ini}}\\
		0\\
		0\\
		\vdots\\
		0\\
		0\\
		0\\
		0\\
		\vdots\\
		0
	\end{pmatrix},
\end{align}
where W is an integer to be determined. Again, the above can be rewritten in a more compact form as
\begin{equation}
	\label{eq:linearFinal}
	A_F\v{Y}_F=\v{b},
\end{equation}
whose solution is
\begin{equation}
	\label{eq:YF}
	\v{Y}_F = A_F^{-1} \v{b} = \left(\v{y}_{\mathrm{ini}},\S\C \v{y}_{\mathrm{ini}}, (\S\C)^2 \v{y}_{\mathrm{ini}},\dots, (\S\C)^{T^\star} \v{y}_{\mathrm{ini}}, \underbrace{(\S\C)^{T^\star} \v{y}_{\mathrm{ini}}, \dots, (\S\C)^{T^\star} \v{y}_{\mathrm{ini}}}_{(2^{W}-1) (T^* +1) \mathrm{ times}} \right)^\top.
\end{equation}
Clearly, the first $T^\star+1$ blocks of $\v{Y}_F$ coincide with the history state $\v{Y}_H$, but the remaining $(2^{W}-1)(T^\star+1)$ blocks contain the Carleman vector corresponding to the state of the system at the final time $T^\star$. As we will explain in Sec.~\ref{sec:recovering_final}, this will help our quantum algorithm to recover the final state more efficiently than from the history state. Also, the inverse $A_F^{-1}$ has its first $T^\star+1$ rows the same as for $A_H^{-1}$ (see Eq.~\eqref{eq:AHInv}), after which there follow rows, where each row $k$ is given by the last row of $A_H^{-1}$ appended by $k$ identities. We note that the choice of $(2^{W}-1)(T^\star+1)$ idling steps was made for practical implementation reasons (it can be achieved using $W$ extra qubits) and, as we shall see, the value of $W$ can be chosen to control the probability of recovering the final state.


\section{Quantum algorithm implementation}
\label{sec:algorithm}

In the previous section, we explained how to rephrase the solution of the shifted incompressible lattice Boltzmann equation as a solution $\v{Y}$ of a set of linear equations described by either Eq.~\eqref{eq:linearHistory} or Eq.~\eqref{eq:linearFinal}. In this section, we will construct a quantum algorithm that first employs a quantum linear solver to prepare the solutions $\v{Y}$ coherently encoded in the amplitudes of a quantum state; and then post-processes them to extract classical information about the fluid.

First, in Sec.~\ref{sec:encoding}, we will describe how we encode the data in quantum registers, and then we will present the following three steps of our algorithm:

\begin{enumerate}
	
	\item \textbf{Input.} In order for a quantum linear solver to output the solution to $A\v{Y}=\v{b}$, the matrix $A$ an vector $\v{b}$ must be passed to it in a way that a quantum computer can deal with. We achieve this in Sec.~\ref{sec:block_encodings} by constructing a unitary block-encoding $U_A$ of $A$ and a unitary state preparation $U_{\v{b}}$ of $\v{b}$.
	
	\item \textbf{Processing.} In Sec.~\ref{sec:QLS}, we explain how to use the quantum linear solver, with the best theoretical guarantees currently available, to bound the query complexity to the unitary block-encoding $U_{A}$ and unitary state preparation $U_{\v{b}}$ to obtain a state $\epsilon_Q$ away from the normalized solution $\ket{\v{Y}}$.
	
	\item \textbf{Output.} In Sec.~\ref{sec:measurement}, we describe necessary post-processing steps in the form of measurements performed on $\ket{\v{Y}}$ that allow one to extract classical information about the fluid system.
\end{enumerate}    


\subsection{Data encoding}
\label{sec:encoding}

We will describe our data encoding process in a few steps, starting from the smallest data structure of a single state vector $\v{g}$ encoded as a quantum state $\ket{\v{g}}$; then we will explain how to encode a Carleman vector $\v{y}$ from Eq.~\eqref{eq:truncatedCarlemanstate} into a quantum state $\ket{\v{y}}$; and finally how to encode the full data $\v{Y}$ including the time register  (meaning $\v{Y}_H$ in Eq.~\eqref{eq:linearHistory} or $\v{Y}_F$ in Eq.~\eqref{eq:linearFinal}) into a quantum state $\ket{\v{Y}}$. When writing the encodings, we will assume $D=3$, and the smaller dimensions can be straightforwardly obtained by dropping the unnecessary registers. The total number of spatial points is $N=N_xN_yN_z$, with $N_i$ corresponding to the number in the $i$'th direction. To simplify expressions, when computing qubit counts we give expressions like $\log N_x$, which should be taken to mean $\lceil \log N_x \rceil$, namely we round up to the nearest integer. 

To encode the shifted LBE state vector into a quantum state $\ket{\v{g}}$, we use three spatial registers composed of $\log N_x$, $\log N_y$, and $\log N_z$ qubits, and three velocity registers, each composed of two qubits:
\begin{equation}
	\ket{\v{g}(t^\star)} = \frac{1}{\sqrt{\N}}\sum_{r_x^\star=1}^{N_x} \sum_{r_y^\star=1}^{N_y} \sum_{r_z^\star=1}^{N_z} \sum_{v_x,v_y,v_z\in\{00,01,10\}} g_{m(v_x,v_y,v_z)}(\v{r}^\star,t^\star) \ket{r_x^\star,r_y^\star,r_z^\star}\ot \ket{v_x,v_y,v_z},
\end{equation}
where $\N$ is the normalization, $r_i^\star$ are discrete spatial positions along axis $i$ in the binary representation, whereas $(v_x,v_y,v_z)$ encode velocity indices according to:
\begin{equation}
	v_i = \left\{\begin{array}{ll}
		00 & :\quad (\v{e}_m^\star)_i ~=0, \\
		10 & :\quad (\v{e}_m^\star)_i ~=-1, \\
		01 & : \quad (\v{e}_m^\star)_i ~=1.
	\end{array}
	\right.
\end{equation}
Note that this velocity encoding is almost optimal, as it requires $2D$ qubits, while the optimal one uses 2, 4, and 5 for $D\in\{1,2,3\}$. The total number of qubits required to encode an LBE state vector is thus $\log N+2D$.

Next, we need to encode the Carleman state $\ket{\v{y}}$. While the original Carleman vector $\v{y}$ is composed of a direct sum of vectors $\v{y}_1,\dots,\v{y}_{N_C}$, for practical reasons, it is better to use a tensor product encoding~\cite{liu2021efficient}. More precisely, we add an extra Carleman block register $\ket{n_C}$ composed of $\log N_C$ qubits and pad the lower blocks with zeros:
\begin{equation}
	\ket{\v{y}(t^\star)} = \frac{1}{\sqrt{N_C}} \sum_{n_C=1}^{N_C} \ket{n_C} \ot \ket{\v{y_{n_C}}(t^\star)}\ot \ket{0}^{\ot N_C-n_C}.
\end{equation}
Above, $\ket{\v{y_{n_C}}}$ is composed of $n_C$ single LBE state vector registers (i.e., $n_C (\log N+2D)$ qubits), and a single $\ket{0}$ is composed of a single LBE state vector register, i.e., of $\log N +2D$ qubits. Since $\v{y}_{n_C}$ lives in a vector space consisting of $n_C$ copies of the space for the LBE state vector, we will denote its single state components by $\v{y}^{[l]}$ for $l\in\{1,\dots,n_C\}$.

Using the proposed tensor product encoding, the structure of the Carleman matrices changes, so that all their Carleman blocks become square matrices with a fixed dimension. More precisely, the action of the Carleman collision matrix can be extended to
\begin{equation}
	\label{eq:collision_encoded}
	\C =  \sum_{k=1}^{N_C} \sum_{l=k}^{\min\{2k,N_C\}} \ketbra{k}{l} \ot C^k_l \ot \ket{0}^{\ot l-k} \ot I^{\ot N_C-l} =: \sum_{k=1}^{N_C} \sum_{l=k}^{\min\{2k,N_C\}} \ketbra{k}{l} \ot \bar{C}^k_l \ot I^{\ot N_C-l},
\end{equation}
so that every rectangular matrix $C^k_l$ becomes a square matrix $\bar{C}^k_l$ by adding in the output $l-k$ zero states, and then we extend it to a square matrix with a fixed dimension for all blocks by padding it with additional $N_C-l$ identity matrices. Similarly, the Carleman streaming matrix is extended to
\begin{equation}
	\S =  \sum_{k=1}^{N_C} \ketbra{k}{k} \ot S^{\ot k} \ot I^{\ot N_C-k}.
\end{equation}   

Finally, we need to encode states $\ket{\v{Y}_H}$ and $\ket{\v{Y}_F}$. We achieve this by adding an extra time register $t^\star$ composed of $\log(T^\star+1)$ qubits and, in the case of $\ket{\v{Y}_F}$, an extra waiting register $w$ composed of $W$ qubits:
\begin{aligns}
	\ket{\v{Y}_H} &= \frac{1}{\sqrt{T^\star+1}}\sum_{t^\star=0}^{T^\star} \ket{t^\star}\ot \ket{\v{y}(t^\star)}, \\
	\ket{\v{Y}_F} &= \frac{1}{\sqrt{2^W(T^\star+1)}}\left(\sum_{t^\star=0}^{T^\star} \ket{0}\ot \ket{t^\star}\ot \ket{\v{y}(t^\star)} + \sum_{w=1}^{2^W-1} \sum_{t^\star=0}^{T^\star} \ket{w}\ot \ket{t^\star}\ot \ket{\v{y}(T^\star)}\right).
\end{aligns}
Summarizing, the full data register is composed of the following:
\begin{itemize}
	\item A single time register $t^\star$ consisting of $\log (T^\star+1)$ qubits.
	\item A single waiting register $w$ consisting of $W$ qubits (absent in the case of history state).
	\item A single Carleman block register $n_C$ consisting of $\log N_C$ qubits.
	\item $N_C$ single state vector registers $\v{y}^{[k]}$, each composed of $D$ spatial registers $r_i^\star$ (consisting of $\log N_i$ qubits each) and $D$ velocity registers $v_i$ (consisting of $2$ qubits each).
\end{itemize}

Adding this all together, we get that the number $n_D$ of qubits in the data register is given by
\begin{equation}
	n_D = \log (T^\star+1) + W + \log N_C + N_C(\log N + 2D).
\end{equation}
Using the simulation parameter choice from Eq.~\eqref{eq:parameter_choice}, with $N_x=N_y=N_z$, we can represent it as a function of the Reynolds number $\re$:
\begin{equation}
	\label{eq:data_qubits}
	n_D = \beta \left[ D\left(N_C+\frac{1}{2}\right)+1\right] \log \re + 2D N_C +\log N_C +W,
\end{equation}
and so the size of the data register scales efficiently with the Reynolds number $\re$ and the Carleman truncation order $N_C$. For example, for a $D=3$ system with Reynolds number $\re=10^8$, parameter $\beta=3/4$ to resolve the Kolmogorov microscale, $W=10$ to extract the final state with high probability, and a relatively high truncation order $N_C=10$, one needs 722 qubits.


\subsection{Unitary block-encodings and state preparations}
\label{sec:block_encodings}


The quantum algorithm involves a range of non-unitary matrices that define the problem instance, as well as state preparation data. Quantum computers are primarily concerned with quantum circuits that generate unitary operations. We must therefore encode the matrices defining the problem within a unitary context that the quantum computer can process. This is achieved through the concept of a \emph{unitary block-encoding} of a matrix $A$. In what follows, we give the details of these encodings, as well as key results that are central to the algorithm construction.

\subsubsection{Preliminaries}
\label{sec:BE_preliminaries}

Recall that a unitary state preparation of a vector $\v{b}$ is any unitary  $U_{\v{b}}$ such that
\begin{equation}
	U_{\v{b}}\ket{0} = \ket{\v{b}}:= \frac{\v{b}}{\|\v{b}\|},
\end{equation}
while a block-encoding of a rectangular matrix $A$, mapping $m_{\mathrm{in}}$ into $m_{\mathrm{out}}\leq m_{\mathrm{in}}$ qubits, is defined as any unitary $U_A$ of the form
\begin{align}
	U_A = \begin{bmatrix}
		A/\alpha_A & * \\
		* & *
	\end{bmatrix}.
\end{align}
In the above, $*$ indicate other non-essential components of the unitary matrix $U_A$, and $\alpha_A$ is implicitly defined by the equation and is the block-encoding rescaling prefactor. More formally, an $(\alpha_A,n_A,\epsilon)$-block-encoding of $A$ is an $(m_{\mathrm{in}}+n_A)$-qubit unitary $U_A$ such that
\begin{equation}
	\label{eq:blockencodingdefinition}
	\left\| A - \alpha_A(I_{m_{\mathrm{out}}}\ot \bra{0}^{\ot n_A + m_{\mathrm{in}}-m_{\mathrm{out}}} )U_A(I_{m_\mathrm{in}}\ot \ket{0}^{\ot n_A})\right\|\leq \epsilon.
\end{equation}
The parameter $\alpha_A$ typically enters the quantum algorithmic cost, so we ideally devise block-encodings with $\alpha_A$ as small as possible. Note that $\alpha_A \geq \|A\|$, since $U_A$ is unitary, so an \emph{optimal} block-encoding has $\alpha_A = \|A\|$.

We now note the following, where we assume $\epsilon=0$ for simplicity. Suppose we have a rectangular matrix $A$ and a block-encoding $U_A$ as defined above. This implies that for any $\ket{\psi}$ we have
\begin{equation}
	U_A \ket{\psi}\otimes \ket{0}^{\otimes n_A} = \frac{1}{\alpha_A} A \ket{\psi}\otimes \ket{0}^{\otimes (n_A +{m_{\mathrm{in}}-m_{\mathrm{out}}})} + \ket{\perp},
\end{equation}
where $\ket{\perp}$ is a vector orthogonal to the first term on the right hand side. Given this, for any rectangular matrix $A$ we can define the \emph{square} matrix $\bar{A}$ in the embedding tensor product space as
\begin{equation}
	\bar{A} \ket{\psi} := A\ket{\psi} \otimes \ket{0}^{\otimes (m_{\mathrm{in}}-m_{\mathrm{out}})}.
\end{equation}
It follows that if $U_A$ is a unitary block-encoding of $A$ with $n_A$ incoming ancillary zero states, $n_A + m_{\mathrm{in}}-m_{\mathrm{out}}$ outgoing ancillary zero states, and a prefactor $\alpha_A$, then $U_A$ is also a unitary block-encoding of $\bar{A}$ with $n_A$ ancillary qubits beginning and ending in the zero state, and with a rescaling prefactor $\alpha_{\bar{A}}=\alpha_A$. In what follows, we will use this fact in the context of the tensor-product embedding of the Carleman matrix.

In the following sections, we construct a unitary block-encoding $U_{A_H}$ of $A_H$ from Eq.~\eqref{eq:linearHistory} and a block-encoding $U_{A_F}$ of $A_F$ from Eq.~\eqref{eq:linearFinal}, which both consist of a series of nested block-encodings:

\begin{enumerate}
	\item First, we will explain how to construct $U_{A_H}$ and $U_{A_F}$ given the block-encoding $U_{\C}$ of $\C$ (Sec.~\ref{sec:blockLin}).
	\item  Next, we show how to construct $U_{\C}$ given the block-encodings $U_{\bar{C}^k_{k+l}}$ of $\bar{C}^k_{k+l}$ (Sec. \ref{sec:blockCarleman}).
	\item  Then, we construct the block-encoding $U_{\bar{C}^k_{k+l}}$ of $\bar{C}^k_{k+l}$ given block-encodings of collision matrices $I+F_1$ and $\bar{F}_2$ (Sec.~\ref{sec:blockCarleman}).
	\item Finally, we present a way to block-encode collision matrices $I+F_1$ and $\bar{F}_2$ (Sec.~\ref{sec:blockCollisions}).
\end{enumerate}  
This way we close the construction and provide the final expressions for the block-encoding prefactors $\alpha_{A_H}$ and $\alpha_{A_F}$, and the corresponding number of ancillary qubits needed, $n_{A_H}$ and $n_{A_F}$. Finally, we explain the construction of a unitary state preparation of $\v{b}$.

Before we proceed, we present two crucial results concerning unitary block-encodings that will be useful in block-encoding the investigated linear systems describing the shifted LBE. Although these are well known results, we provide their proofs, as we will use the construction presented there in the coming sections. 
\begin{lemma}[LCU Block-encodings]
	\label{lem:LCU}
	Given an $N$-qubit matrix
	\begin{equation}
		A=\sum_{i=1}^k A_i,
	\end{equation}
	and $(\alpha_i,n_i,0)$-block-encodings $U_{A_i}$ of $A_i$, one can construct $(\alpha,n,0)$-block-encoding $U_A$ of $A$ with
	\begin{equation}
		\alpha=\sum_{i=1}^k\alpha_i,\qquad
		n=\log k + \max_i n_i. 
	\end{equation}
\end{lemma}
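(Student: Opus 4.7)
The plan is to use the standard Linear Combination of Unitaries (LCU) construction. First, I would introduce a ``prepare'' unitary $P$ on $\log k$ ancilla qubits satisfying
\begin{equation}
P \ket{0}^{\ot \log k} = \frac{1}{\sqrt{\alpha}} \sum_{i=1}^k \sqrt{\alpha_i}\, \ket{i}, \qquad \alpha := \sum_{i=1}^k \alpha_i,
\end{equation}
which exists because the right-hand side has unit norm and can be completed to a unitary (e.g.\ by Gram-Schmidt). In parallel, set $n := \max_i n_i$ and, for each $i$, pad the given block-encoding $U_{A_i}$ by tensoring with the identity on the extra $n - n_i$ ancilla qubits, obtaining $\tilde U_{A_i}$. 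Projecting these extra qubits onto $\ket{0}$ at both ends leaves the original block-encoded matrix intact, so $\tilde U_{A_i}$ remains an $(\alpha_i, n, 0)$-block-encoding of $A_i$.

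Second, I would assemble the ``select'' unitary
\begin{equation}
\mathrm{SEL} := \sum_{i=1}^k \ketbra{i}{i} \ot \tilde U_{A_i},
\end{equation}
which acts on the $\log k$-qubit prepare register, the $n$-qubit ancilla register, and the $N$-qubit system register, and then set
\begin{equation}
U_A := (P^\dagger \ot I \ot I)\, \mathrm{SEL}\, (P \ot I \ot I).
\end{equation}
This circuit uses $\log k + n = \log k + \max_i n_i$ ancilla qubits in total, matching the claimed bound.

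Third, I would verify the block-encoding property by direct calculation. Using $\bra{0}^{\ot \log k} P^\dagger \ket{i} = \sqrt{\alpha_i / \alpha}$ together with the block-encoding property of each $\tilde U_{A_i}$, one obtains
\begin{equation}
(\bra{0}^{\ot \log k + n} \ot I_N)\, U_A\, (\ket{0}^{\ot \log k + n} \ot \ket{\psi}) = \sum_{i=1}^k \frac{\alpha_i}{\alpha} \cdot \frac{A_i}{\alpha_i}\ket{\psi} = \frac{A}{\alpha} \ket{\psi},
\end{equation}
which is exactly the $(\alpha, \log k + \max_i n_i, 0)$-block-encoding of $A$ with $\alpha = \sum_i \alpha_i$. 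There is no genuine obstacle beyond bookkeeping; the only mild subtlety is that $P$ is only constrained on the single input $\ket{0}^{\ot \log k}$, so its action on the other computational-basis states may be chosen freely, and the case of $k$ not a power of two is handled by allocating $\lceil \log k \rceil$ ancilla qubits and letting $P$ send the unused basis states to any orthonormal completion.
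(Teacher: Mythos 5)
Your proposal is correct and follows essentially the same prepare--select--unprepare construction as the paper's proof, including the same padding of each $U_{A_i}$ to $\max_i n_i$ ancillas and the same direct verification of the block-encoding identity. The only differences are cosmetic (ordering of tensor factors and your extra remarks on completing $P$ to a unitary and on $k$ not being a power of two).
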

\begin{proof}
	Introduce a state preparation unitary $V$ on $\log k$ qubits, such that
	\begin{equation}
		V\ket{0}=\frac{1}{\sqrt{\alpha}}\sum_{i=1}^{k} \sqrt{\alpha_i}\ket{i},
	\end{equation}
	and denote $n^*=\max_i n_i$.
	Then, extend $(\alpha_i,n_i,0)$-block-encodings of $A_i$ to $(\alpha_i,n^*,0)$-block-encodings $U_{A_i}^*$ of $A_i$ simply by adding extra qubits on which $U_{A_i}^*$ acts trivially. Now, construct the following unitary
	\begin{equation}
		U_A=(I_N\ot I_{n^*}\ot V^\dagger)\left(\sum_{i=1}^k U_{A_i}^*\ot \ketbra{i}{i} \right) (I_N\ot I_{n^*}\ot V),
	\end{equation}
	where $I_m$ is the identity matrix acting on $m$ qubits. We can directly verify that
	\begin{align}
		&(I_N\ot\bra{0}^{\ot n^*+\log k })U_A(I_N\ot\ket{0}^{\ot n^*+\log k})\\
		&\quad=\frac{1}{\alpha}\sum_{j,j'=1}^k \sqrt{\alpha_{j}\alpha_{j'}}(I_N\ot\bra{0}^{\ot n^*}\ot \bra{j'})\left(\sum_{i=1}^k U_{A_i}^*\ot \ketbra{i}{i} \right)(I_N\ot\ket{0}^{\ot n^*}\ot\ket{j})\\
		&\quad=\frac{1}{\alpha}\sum_{i=1}^k \alpha_i(I_N\ot\bra{0}^{\ot n^*})U_{A_i}^*(I_N\ot\ket{0}^{\ot n^*})=\frac{1}{\alpha}\sum_{i=1}^k A_i=\frac{A}{\alpha},
	\end{align}
	so that $U_A$ is an $(\alpha,n,0)$-block-encoding of $A$.
\end{proof}

\begin{lemma}[Block-diagonal LCU]
	\label{lem:blockLCU}
	Given a $k\cdot 2^N\times k\cdot 2^N$ matrix
	\begin{equation}
		A=\sum_{i=k_1}^{k_2} \ketbra{i}{i}\ot A_i,
	\end{equation}
	with $1\leq k_1\leq k_2\leq k$, and $(\alpha_i,n_i,0)$-block-encodings $U_{A_i}$ of $N$-qubit matrices $A_i$, one can construct $(\alpha,n,0)$-block-encoding $U_A$ of $A$ with
	\begin{equation}
		\alpha=\max_{i\in\{k_1,\dots,k_2\}}\alpha_i,\qquad
		n=\max_{i\in\{k_1,\dots,k_2\}} n_i +1. 
	\end{equation}
\end{lemma}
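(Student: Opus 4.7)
The plan is to combine two standard techniques: a controlled application that exploits the orthogonal support of the $\ketbra{i}{i}$ blocks, plus a single-qubit rescaling to equalise the block-encoding prefactors. Because the blocks $\ketbra{i}{i}\otimes A_i$ live on mutually orthogonal subspaces of the index register, a controlled block-encoding does not incur the additive prefactor overhead of the generic LCU construction in Lemma~\ref{lem:LCU}; however we still need one extra ancilla to lift each individual prefactor $\alpha_i$ to the common value $\alpha=\max_i \alpha_i$, which accounts for the $+1$ in the final ancilla count.

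Concretely, I would first set $n^\ast=\max_i n_i$ and pad each $U_{A_i}$ with $n^\ast-n_i$ trivially-acting ancillas, producing $(\alpha_i,n^\ast,0)$-block-encodings $\tilde U_{A_i}$. Next, for each $i\in\{k_1,\dots,k_2\}$, I would construct an $(\alpha,n^\ast+1,0)$-block-encoding $V_i$ of $A_i$ as follows: introduce one new ancilla prepared by the rotation $R_i\ket{0}=\sqrt{\alpha_i/\alpha}\,\ket{0}+\sqrt{1-\alpha_i/\alpha}\,\ket{1}$, apply $\tilde U_{A_i}$ controlled on this qubit being $\ket{0}$, apply an ancilla flip (e.g.\ $X$ on one of the $n^\ast$ original ancillas) controlled on $\ket{1}$ so that this branch contributes nothing to the zero-ancilla block, and finally uncompute $R_i$. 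A short bra--ket computation then verifies that $(I_N\otimes\bra{0}^{\otimes n^\ast+1})V_i(I_N\otimes\ket{0}^{\otimes n^\ast+1})=A_i/\alpha$, as required.

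Finally, I would assemble the full unitary as the controlled operation
$$U_A=\sum_{i=k_1}^{k_2}\ketbra{i}{i}\otimes V_i+\sum_{i\notin\{k_1,\dots,k_2\}}\ketbra{i}{i}\otimes W,$$
where $W$ is any fixed unitary on the $n^\ast+1$ ancillas and the $N$ data qubits with $\bra{0}^{\otimes n^\ast+1}W\ket{0}^{\otimes n^\ast+1}=0$ (for instance $X\otimes I$ on the new ancilla). Since $U_A$ is a direct sum of unitaries indexed by $i$, it is itself unitary, and projecting onto $\ket{0}^{\otimes n^\ast+1}$ in the ancilla register recovers $A/\alpha$ block-by-block with no cross terms, because the $\ketbra{i}{i}$ control destroys any off-diagonal coupling between different $i$.

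The only delicate point is bookkeeping in the rescaling step: one must confirm that the $\ket{1}$-branch does not leak back into the zero-ancilla subspace, which is why the controlled ancilla flip is inserted rather than leaving that branch as the identity. Everything else reduces to direct verification that the new ancilla ends in $\ket{0}$ on the relevant block and that the $i$-orthogonality carries through unitary composition, so I do not anticipate any genuine obstacle beyond careful indexing.
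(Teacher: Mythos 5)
Your construction is essentially the paper's: one extra ancilla carrying a rotation that rescales each block's prefactor from $\alpha_i$ to $\alpha=\max_i\alpha_i$, combined with control on the index register so that the orthogonality of the $\ketbra{i}{i}$ blocks avoids the additive LCU overhead, and the out-of-range indices handled by forcing a zero $\bra{0}\cdots\ket{0}$ block. The one point where your variant is weaker: because you conjugate by $R_i$ (apply and then uncompute), you must kill the $\ket{1}$ branch with an $X$ on one of the \emph{original} $n^\ast$ ancillas, which silently assumes $n^\ast\geq 1$; the lemma permits $n_i=0$ and is in fact invoked in that regime (e.g.\ for the diagonal matrix $\Sigma_1$ in Sec.~\ref{sec:blockCollisions}). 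The paper sidesteps this by applying the rescaling unitary $V_i$ only once and never uncomputing it, so the $\ket{1}$ branch is automatically annihilated by the final $\bra{0}$ projection on the new ancilla; adopting that one-sided version fixes your edge case with no other changes.
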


\begin{proof}
	For $i<k_1$ and $i>k_2$ set $\alpha_i=0$. Then, for all $i\in\{1,\dots,k\}$, introduce a set of single-qubit state preparation unitaries $V_i$, such that
	\begin{equation}
		V_i\ket{0}=\frac{\alpha_i}{\alpha}\ket{0} +\sqrt{1-\frac{\alpha_i^2}{\alpha^2}}\ket{1},
	\end{equation}
	and denote $n^*=\max_i n_i$. Next, for $i\in\{k_1,\dots,k_2\}$, extend $(\alpha_i,n_i,0)$-block-encodings $U_{A_i}$ of $A_i$ to $(\alpha_i,n^*,0)$-block-encodings $U_{A_i}^*$ of $A_i$ simply by adding extra qubits on which $U_{A_i}^*$ acts trivially. Moreover, for $i<k_1$ and $i>k_2$, define $U_{A_i}^*=I_N\ot I_{n^*}$. Now, construct the following unitary
	\begin{equation}
		U_A=(I_{\log k}\ot I_N\ot I_{n^*}\ot I_1)\left(\sum_{i=1}^k  \ketbra{i}{i}\ot U_{A_i}^* \ot I_1 
		\right) \left(\sum_{j=1}^k \ketbra{j}{j}\ot I_N\ot I_{n^*}\ot V_j\right),
	\end{equation}
	where $I_m$ is the identity matrix acting on $m$ qubits. We can directly verify that
	\begin{align}
		&(I_{\log k}\ot I_N\ot\bra{0}^{\ot n^*+1 })U_A(I_{\log k}\ot I_N\ot\ket{0}^{\ot n^*+1})\\
		&\quad=(I_{\log k}\ot I_N\ot\bra{0}^{\ot n^*})\left(\sum_{i=1}^k \ketbra{i}{i}\ot U_{A_i}^* \bra{0} V_i \ket{0} \right)(I_{\log k}\ot I_N\ot\ket{0}^{\ot n^*}) \\
		&\quad=(I_{\log k}\ot I_N\ot\bra{0}^{\ot n^*})\left(\sum_{i=k_1}^{k_2} \ketbra{i}{i}\ot U_{A_i}^* \frac{\alpha_i}{\alpha} \right)(I_{\log k}\ot I_N\ot\ket{0}^{\ot n^*}) \\
		&\quad= \sum_{i=k_1}^{k_2} \ketbra{i}{i}\ot \frac{A_i}{\alpha_i}\frac{\alpha_i}{\alpha} = \frac{1}{\alpha}\sum_{i=1}^k \ketbra{i}{i}\ot A_i=\frac{A}{\alpha},
	\end{align}
	so that $U_A$ is an $(\alpha,n,0)$-block-encoding of $A$.
\end{proof}


\subsubsection{Block-encoding the linear system}
\label{sec:blockLin}

\paragraph*{History state.}

First of all, note that we can decompose $A_H$ into a sum over the diagonal and off-diagonal blocks:
\begin{align}
	A_H &=\sum_{t^\star=0}^{T^\star} \ketbra{t^\star}{t^\star}\ot I - \sum_{t^\star=0}^{T^\star-1} \ketbra{t^\star+1}{t^\star}\ot \S\C  =I_{t^\star} \ot I - \Delta_{t^\star} \ot \S\C, \label{eq:L_block}
\end{align}
where $\Delta_t^\star$ is just a $(T^\star+1)$-dimensional shift operator on the first (time) register. Recall that in this section we want to show how to block-encode $A_H$ given a block-encoding of $\C$. So, we need to show how to block-encode $\Delta_{t^\star}$ and how to realize $\S$.

It is easy to construct a $(1,1,0)$-block-encoding $U_{\Delta^k}$ of $\Delta^k$ (here we just need $k=1$) in the following way:
\begin{equation}
	\label{circ:Delta}
	\begin{quantikz}[row sep={\the\qrow,between origins}, column sep=\the\qcol, baseline=(current bounding box.center), wire types={b}]
		\lstick{$\ket{t^\star}$}& \gate{\Delta^k} & \\
	\end{quantikz}
	=
	\begin{quantikz}[row sep={\the\qrow,between origins}, column sep=\the\qcol, baseline=(current bounding box.center), wire types={q,b}]
		\lstick{$\ket{0}_{a}$} & \gate[wires=2]{+k} 
		\gategroup[2,steps=1,style={dashed,rounded corners,fill=blue!10, inner xsep=2pt},background,label style={label position=below,anchor=north,yshift=-0.2cm}]{$U_{\Delta^k}$} 
		& \rstick{$\bra{0}_{a}$} \\
		\lstick{$\ket{t^\star}$}& & 
	\end{quantikz},
\end{equation}
where the $+k$ gate is a quantum adder adding $k$ to a register consisting of $\log(T^\star+1)+1$ qubits. Note that the extra qubit, initialized in $\ket{0}_a$, plays the role of the most significant bit, and the notation $\bra{0}_a$ in Eq.~\ref{circ:Delta} indicates that $\Delta^k$ is encoded in the $\ket{0}_a$ subspace, in the sense of Eq.~\eqref{eq:blockencodingdefinition} with $n_A=1$ and $m_{\mathrm{in}}= m_{\mathrm{out}}$. The block-encoding of $(\Delta^\dagger)^k$ is analogous, just using a $-k$ quantum adder.

Next, since $\S$ is unitary, it has a trivial block-encoding and can be implemented as a quantum circuit in the following way. Note that the action of $S$ on a single state vector register $\v{y}^{[k]}$ can be constructed using quantum adders on the corresponding spatial registers controlled by the velocity registers:
\begin{equation}
	\label{circ:S}
	\begin{quantikz}[row sep={\the\qrow,between origins}, column sep=\the\qcol, wire types={b}]
		\lstick{$\ket{\v{y}^{[k]}}$}& \gate{S} & \qw
	\end{quantikz}
	\quad = \quad 
	\begin{quantikz}[row sep={\the\qrow,between origins}, column sep=\the\qcol, baseline=(current bounding box.center), wire types={b,b,b,q,q,q,q,q,q}]
		\lstick{$\ket{r_x^\star}$} & \gate{-1} & \gate{+1} & \qw & \qw & \qw & \qw & \qw\\
		\lstick{$\ket{r_y^\star}$} & \qw       & \qw       & \gate{-1} & \gate{+1} & \qw & \qw & \qw\\
		\lstick{$\ket{r_z^\star}$} & \qw       & \qw       & \qw & \qw & \gate{-1} & \gate{+1} & \qw\\
		\lstick[wires=2]{$\ket{v_x}$} & \ctrl{-3} & \qw & \qw & \qw & \qw & \qw& \qw \\[-0.5cm]
		& \qw & \ctrl{-4} & \qw & \qw & \qw & \qw & \qw\\
		\lstick[wires=2]{$\ket{v_y}$} & \qw & \qw & \ctrl{-4} & \qw & \qw & \qw& \qw \\[-0.5cm]
		& \qw & \qw & \qw & \ctrl{-5} & \qw & \qw& \qw \\
		\lstick[wires=2]{$\ket{v_z}$} & \qw & \qw & \qw & \qw & \ctrl{-5} & \qw & \qw\\[-0.5cm]
		& \qw & \qw & \qw & \qw & \qw & \ctrl{-6}& \qw
	\end{quantikz}    
\end{equation}
Then, the direct way to construct $\S$ is to consecutively apply $S$ to the $k$-th single state vector register, controlled on the Carleman block register $n_C$ being smaller or equal to $k$. However, given that we encode the data in a way that the single state vector registers $\v{y}^{[k]}$ are all set to zero states when $k>n_C$, and that $S$ acts on these simply as the identity matrix,  we can as well remove the controls and encode $\S$ by just applying $S$ to each single state vector register,
\begin{equation}
	\label{circ:SS}
	\begin{quantikz}[row sep={\the\qrow,between origins}, column sep=\the\qcol, wire types={b}]
		\lstick{$\ket{\v{y}}$}& \gate{\S} & \qw
	\end{quantikz}
	\quad = \quad 
	\begin{quantikz}[row sep={\the\qrow,between origins}, column sep=\the\qcol, baseline=(current bounding box.center), wire types={b,b,n,b}]
		\lstick{$\ket{n_C}$} &  & \\
		\lstick{$\ket{\v{y}^{[1]}}$} & \gate{S} & \\
		\lstick{$\vdots$} & & \\
		\lstick{$\ket{\v{y}^{[N_C]}}$} & \gate{S} &
	\end{quantikz}    
\end{equation}
We shall follow this cheaper option.

Finally, let us assume that we have access to $(\alpha_\C,n_\C,0)$-block-encoding $U_\C$ of~$\C$, so that we can construct an $(\alpha_\C,n_\C+1,0)$-block-encoding of the second term in Eq.~\eqref{eq:L_block} using the constructions presented above. Then, using Lemma~\ref{lem:LCU}, we can combine this with the first identity term to construct an $(\alpha_{A_H},n_{A_H},0)$-block-encoding $U_{A_H}$ of $A_H$ with
\begin{equation}
	\label{eq:alphaLH}
	\alpha_{A_H}=\alpha_\C+1,\quad n_{A_H}=n_\C+2.
\end{equation}
The explicit circuit for this is given by:
\begin{equation}
	\label{circ:A_H}
	\begin{quantikz}[row sep={\the\qrow,between origins}, column sep=\the\qcol, baseline=(current bounding box.center), wire types={b,b}]
		\lstick[wires=2]{$\ket{\v{Y}_H}$}& \wireoverride{n}&\lstick{$\ket{t^\star}$} \wireoverride{n} & \gate[wires=2]{A_H} &  \\
		& \wireoverride{n}&     \lstick{$\ket{\v{y}}$} \wireoverride{n} &  &\\
	\end{quantikz}
	\propto 
	\begin{quantikz}[row sep={\the\qrow,between origins}, column sep=\the\qcol, baseline=(current bounding box.center), wire types={q,b,b}]
		\lstick{$\ket{0}_{a}$} & & & \gate{V_{\alpha_{\C}}} 
		\gategroup[3,steps=5,style={dashed,rounded corners,fill=blue!10, inner xsep=2pt},background,label style={label position=below,anchor=north,yshift=-0.2cm}]{$U_{A_H}$} 
		& \ctrl{1} &  \ctrl{2} & \ctrl{2} & \gate {V_{\alpha_{\C}}^\dagger} & \rstick{$\bra{0}_a$}\\
		\lstick[wires=2]{$\ket{\v{Y}_H}$}& \wireoverride{n}&\lstick{$\ket{t^\star}$} \wireoverride{n} & \qw & \gate{U_{\Delta}} & \qw & \qw & \qw &\\
		& \wireoverride{n}&     \lstick{$\ket{\v{y}}$} \wireoverride{n} & \qw & \qw & \gate{U_{\C}} & \gate{\S} & \qw &
	\end{quantikz}
\end{equation}
where $V_\alpha$ is a single qubit gate defined via:
\begin{equation}
	\label{eq:V_alpha}
	V_\alpha \ket{0} \propto \ket{0} + \sqrt{\alpha}\ket{1},
\end{equation}
and, with a slight abuse of notation that we will use throughout this section, the block-encoded operators $U_\Delta$ and $U_\C$ also act on ancillary qubits that are not shown in the circuit.

\paragraph*{Final state.} Similarly to the case of $A_H$, we can decompose $A_F$ as follows:
\begin{align}
	\label{eq:AF_block}
	A_F =& I_w\ot I_{t^\star} \ot I - \left[\ketbra{0}{0}_w \ot I_{t^\star} \ot \S\C + (I_w-\ketbra{0}{0}_w) \ot I_{t^\star} \ot I\right][ \Delta_{w,t^\star}\ot I],
\end{align}
where $\Delta_{w,t^\star}$ is the shift operator on the waiting and time registers acting as:
\begin{equation}
	\ket{0}_w\ot\ket{0}_{t^\star} \rightarrow \ket{0}_w\ot\ket{1}_{t^\star} \rightarrow \dots \rightarrow \ket{0}_w\ot\ket{T^\star}_{t^\star} \rightarrow \ket{1}_w\ot\ket{0}_{t^\star} \rightarrow \dots \rightarrow \ket{2^W-1}_w\ot\ket{T^\star}_{t^\star}.
\end{equation}
Assuming we have access to $(\alpha_\C,n_\C,0)$-block-encoding of $\C$ and to $(1,1,0)$-block-encoding of $\Delta$, we can use Lemma~\ref{lem:blockLCU} to construct an $(\alpha_\C,n_\C+2,0)$-block-encoding of the second term in Eq.~\eqref{eq:AF_block}. Thus, using Lemma~\ref{lem:LCU} as before, we end up with an $(\alpha_{A_F},n_{A_F},0)$-block-encoding of $A_F$ with
\begin{equation}
	\alpha_{A_F}=\alpha_\C+1,\quad n_{A_F}=n_\C+3,
\end{equation}
i.e., with the same block-encoding prefactor as in the case of $A_H$, but with one more ancillary qubit needed.  The explicit circuit for $U_{A_F}$ is given by:

\begin{equation}
	\label{circ:A_F}
	\!\!\!\!\!\!
	\begin{quantikz}[row sep={\the\qrow,between origins}, column sep=\the\qcol, baseline=(current bounding box.center), wire types={b,b,b}]
		\lstick[wires=3]{$\ket{\v{Y}_F}$}& \wireoverride{n}&\lstick{$\ket{w}$} \wireoverride{n} & \gate[wires=3]{A_F} &  \\
		& \wireoverride{n}&\lstick{$\ket{t^\star}$} \wireoverride{n} &  &  \\
		& \wireoverride{n}&     \lstick{$\ket{\v{y}}$} \wireoverride{n} &  &\\
	\end{quantikz}
	\propto 
	\begin{quantikz}[row sep={\the\qrow,between origins}, column sep=\the\qcol, baseline=(current bounding box.center), wire types={q,q,b,b,b}]
		\lstick{$\ket{0}_{a_1}$} & & & \gate{V_{\alpha_{\C}}} 
		\gategroup[5,steps=6,style={dashed,rounded corners,fill=blue!10, inner xsep=2pt},background,label style={label position=below,anchor=north,yshift=-0.2cm}]{$U_{A_F}$} 
		& \ctrl{2} & \ctrl{1} &  \ctrl{4} & \ctrl{4} & \gate{V_{\alpha_{\C}}^\dagger} & \rstick{$\bra{0}_{a_1}$}\\
		\lstick{$\ket{0}_{a_2}$} & & &  &  & \gate{V_{\alpha_\C^2-1}} & & &  & \rstick{$\bra{0}_{a_2}$}\\[0.5cm]
		\lstick[wires=3]{$\ket{\v{Y}_F}$}& \wireoverride{n}&\lstick{$\ket{w}$} \wireoverride{n} & \qw & \gate[wires=2]{U_{\Delta}}& \ctrl[wire style={"{\ensuremath{I-\ketbra{0}{0}}}"}]{-1} & \octrl{1} & \octrl{1} & \qw &\\
		& \wireoverride{n}&\lstick{$\ket{t^\star}$} \wireoverride{n} & \qw & &  & \qw & \qw & \qw &\\
		& \wireoverride{n}&     \lstick{$\ket{\v{y}}$} \wireoverride{n} & \qw & \qw & & \gate{U_{\C}} & \gate{\S} & \qw &
	\end{quantikz}
\end{equation}
where both single qubit $V_\alpha$ unitaries are defined by Eq.~\eqref{eq:V_alpha}, whereas $I-\ketbra{0}{0}$ denotes a control that works unless all qubits are in a zero state. Note that this controlled gate can be implemented via
\begin{equation}
	\label{circ:w_control}
	\begin{quantikz}[row sep={\the\qrow,between origins}, column sep=\the\qcol, baseline=(current bounding box.center), wire types={q,q,b}]
		\lstick{$\ket{0}_{a_1}$} &\ctrl{1}&   \\
		\lstick{$\ket{0}_{a_2}$} & \gate{V_{\alpha_\C^2-1}}  & \\[0.5cm]
		\lstick{$\ket{w}$} & \ctrl[wire style={"{\ensuremath{I-\ketbra{0}{0}}}"}]{-1} & 
	\end{quantikz}  
	\quad = \quad 
	\begin{quantikz}[row sep={\the\qrow,between origins}, column sep=\the\qcol, baseline=(current bounding box.center), wire types={q,q,q,q,n,q}]
		\lstick{$\ket{0}_{a_1}$} &\ctrl{1}& \ctrl{1} & \\
		\lstick{$\ket{0}_{a_2}$} & \gate{V_{\alpha_\C^2-1}} & \gate{V^\dagger_{\alpha_\C^2-1}} & \\
		\lstick{$\ket{w_1}$} & & \octrl{-1} & \\
		\lstick{$\ket{w_2}$} & & \octrl{-2} & \\
		\lstick{$\vdots$} & && &\\
		\lstick{$\ket{w_W}$} & &\octrl[wire style={dotted}]{-4} &
	\end{quantikz}  
\end{equation}

\paragraph*{Summary.} In summary, block-encodings of $A_H$ and $A_F$ can be realized with a single call to a block-encoding of $\C$ and with block-encoding prefactors and auxiliary qubits equal to
\begin{align}
	\alpha_{A_F} = \alpha_{A_H} := \alpha_{\C} +1, \quad n_{A_H} \leq n_{A_F} = n_\C +3,
\end{align}
where $n_\C$ is the number of qubits required to block-encode $\C$. Thus, we will simply denote $\alpha_{A_H}$ and $\alpha_{A_F}$ as $\alpha_A$, and we will use $n_A=n_{A_F}$ as the ancillary qubit cost for both systems.


\subsubsection{Block-encoding the Carleman collision matrix}
\label{sec:blockCarleman}

First of all, note that we can decompose $\C$ from Eq.~\eqref{eq:collision_encoded} into a sum over the diagonal and off-diagonal blocks:
\begin{align}
	\C =&  \sum_{k=1}^{N_C} \sum_{l=k}^{\min\{2k,N_C\}} \ketbra{k}{l} \ot \bar{C}^k_l  \ot I^{\ot N_C-l}\\
	=& \sum_{k=1}^{N_C} \ketbra{k}{k} \ot \bar{C}^k_{k} \ot I^{\ot N_C-k} + \sum_{k=1}^{N_C} \sum_{l=1}^{\min\{k,N_C-k\}} \ketbra{k}{k+l} \ot \bar{C}^k_{k+l} \ot I^{\ot N_C-k-l} \\
	= &\sum_{k=1}^{N_C} \ketbra{k}{k} \ot \bar{C}^k_{k} \ot I^{\ot N_C-k} + \sum_{l=1}^{\lfloor \frac{N_C}{2}\rfloor} \sum_{k=l}^{N_C - l} \ketbra{k}{k}(\Delta^\dagger)^l \ot \bar{C}^k_{k+l} \ot I^{\ot N_C-k-l}.
\end{align}
Introducing $l':=\max\{1,l\}$, we rewrite $\C$ as
\begin{align}
	\C &=\sum_{l=0}^{\lfloor\frac{N_C}{2}\rfloor} \sum_{k=l'}^{N_C-l} \ketbra{k}{k}(\Delta^\dagger)^l \ot \bar{C}^k_{k+l} \ot I^{\ot N_C-k-l}=: \sum_{l=0}^{\lfloor\frac{N_C}{2}\rfloor}B_l.
\end{align}
Thus, assuming we found $(\alpha_{B_l},n_{B_l},0)$-block-encodings $U_{B_l}$ for each $B_l$, we can use Lemma~\ref{lem:LCU} to construct $(\alpha_{\C},n_{\C},0)$-block-encoding $U_\C$ of $\C$ with
\begin{equation}
	\label{eq:BE1}
	\alpha_\C =\sum_{l=0}^{\lfloor\frac{N_C}{2}\rfloor} \alpha_{B_l},\qquad     n_\C=\log(\lfloor N_C/2\rfloor+1) + \max_{l\in\{0,\dots,\lfloor N_C/2\rfloor\}}n_{B_l}.
\end{equation}
One can then explicitly construct $U_\C$ via a standard LCU circuit:
\begin{equation}
	\label{circ:C}
	\begin{quantikz}[row sep={\the\qrow,between origins}, column sep=\the\qcol, baseline=(current bounding box.center), wire types={b}]
		\lstick{$\ket{\v{y}}$}& \gate{\C} & \\
	\end{quantikz}
	\propto
	\begin{quantikz}[row sep={1.4cm,between origins}, column sep=\the\qcol, baseline=(current bounding box.center), wire types={b,b}]
		\lstick{$\ket{0}_{a}$} & \gate{V_B}  
		\gategroup[2,steps=5,style={dashed,rounded corners,fill=blue!10, inner xsep=2pt},background,label style={label position=below,anchor=north,yshift=-0.2cm}]{$U_{\C}$} 
		& \ctrl[wire style={"{\ensuremath{\ket{0}_a}}"}]{1}& ~\dots ~& \ctrl[wire style={"{\ensuremath{\ket{\lceil N_C/2 \rceil}_a}}"}]{1}  & \gate{V_B^\dagger} & \rstick{$\bra{0}_{a}$} \\
		\lstick{$\ket{\v{y}}$}& &\gate{U_{B_0}} & ~\dots ~ & \gate{U_{B_{\lfloor N_C/2\rfloor}}} &&
	\end{quantikz},
\end{equation}
where each $U_{B_l}$ is controlled on the ancillary register, consisting of $\log(\lceil N_C/2 \rceil +1)$ qubits, being in the state $\ket{l}$, and $V_B$ is defined via
\begin{equation}
	V_B\ket{0} \propto \sum_{l=0}^{\lfloor\frac{N_C}{2}\rfloor} \sqrt{\alpha_{B_l}} \ket{l}.
\end{equation}

In order to block-encode each $B_l$, 
\begin{equation}
	{B_l}=\left(\sum_{k=l'}^{N_C-l} \ketbra{k}{k} \ot \bar{C}^k_{k+l}\ot I^{\ot N_C-k-l}\right) [(\Delta^\dagger)^{l}\ot I^{\ot N_C}],
\end{equation}
we can block-encode the two terms separately. We have already shown in Sec.~\ref{sec:blockLin} how to construct a $(1,1,0)$-block-encodings of the second term. From Lemma~\ref{lem:blockLCU}, we know that we can construct a $(\max_k \alpha_{\bar{C}_{k+l}^k},\max_k n_{\bar{C}_{k+l}^k}+1,0)$-block-encoding of the first term, given a $(\alpha_{\bar{C}_{k+l}^k},n_{C_{k+l}^k},0)$-block-encodings of $\bar{C}^k_{k+l}$. We thus conclude that we can construct $(\alpha_{B_l},n_{B_l},0)$-block-encodings of each $B_l$ with
\begin{equation}    
	\label{eq:BE2}
	\alpha_{B_l}=\max_{k\in\{l',\dots, N_C-l\}} \alpha_{\bar{C}^k_{k+l}},\qquad n_{B_l}= \max_{k\in\{l',\dots, N_C-l\}}  n_{\bar{C}^k_{k+l}}+1.
\end{equation}
The explicit circuit for each $U_{B_l}$ is given by
\begin{equation}
	\label{circ:B_l}
	\!\!\!  \!\! \begin{quantikz}[row sep={\the\qrow,between origins}, column sep=\the\qcol, baseline=(current bounding box.center), wire types={b,b}]
		\lstick[wires=2]{$\ket{\v{y}}$}& \wireoverride{n}&\wireoverride{n}&\lstick{$\ket{n_C}$} \wireoverride{n} & \gate[wires=2]{B_l} &  \\
		& \wireoverride{n}&\wireoverride{n}&     \lstick{$\ket{\tilde{\v{y}}}$} \wireoverride{n} &  &\\
	\end{quantikz}
	\propto \!
	\begin{quantikz}[row sep={1.2cm,between origins}, column sep=\the\qcol, baseline=(current bounding box.center), wire types={q,b,b}]
		\lstick{$\ket{0}_a$} & & & & \gategroup[3,steps=7,style={dashed,rounded corners,fill=blue!10, inner xsep=2pt},background,label style={label position=below,anchor=north,yshift=-0.2cm}]{$U_{B_l}$}& \gate{V_{C^{l'}_{l'+l}}}  & ~ \dots ~ & \gate{V_{C^{N_C-l}_{N_C}}} && &&\rstick{$\bra{0}_a$} \\
		\lstick[wires=2]{$\ket{\v{y}}$}& \wireoverride{n}&\wireoverride{n}&\lstick{$\ket{n_C}$} \wireoverride{n} & \gate{U_{(\Delta^\dagger)^l}} &\ctrl[wire style={"{\ensuremath{\ket{n_C=l'}}}"}]{1}\wire[u]{q} &~ \dots ~ & \ctrl[wire style={"{\ensuremath{\ket{n_C=N_C-l}}}"}]{1}\wire[u]{q} &&&&\\
		& \wireoverride{n}& \wireoverride{n}&    \lstick{$\ket{\tilde{\v{y}}}$} \wireoverride{n} & &\gate{U_{\bar{C}^{l'}_{l'+l}}} & ~ \dots ~  & \gate{U_{\bar{C}^{N_C-l}_{N_C}}} &&&&
	\end{quantikz}
\end{equation}
where $\tilde{\v{y}}$ denotes all the registers of $\v{y}$ except for the Carleman block register (i.e., single state vector registers $\v{y}^{[1]}$ through $\v{y}^{[N_C]}$), and $V_{C^k_{k+l}}$ are single qubit unitaries defined by
\begin{equation}
	V_{C^k_{k+l}}\ket{0} \propto  \ket{0} + \sqrt{\frac{\alpha_{B_l}^2}{\alpha_{\bar{C}^k_{k+l}}^2}-1}\ket{1}.
\end{equation}

Next, we introduce
\begin{equation}
	F(k,l):=( I+ F_1)^{\ot k-l} \ot \bar{F}_2^{\ot l},
\end{equation}
where $\bar{F}_2$ acts on two single state vector registers as $F_2\ot \ket{0}$, analogously to how $\bar{C}^k_l$ depends on $C^k_l$. We can then rewrite $\bar{C}^k_{k+l}$ as
\begin{align}
	\bar{C}^k_{k+l}&= \left(\sum_\pi \Pi_\pi F(k,l)\Pi^\dagger_\pi\right),
\end{align}
where the summation is over all permutations $\pi$ of a total of $k$ systems with $k-l$ systems of type~1 (each consisting of 1 subsystem) and $l$ systems of type 2 (each consisting of 2 subsystems). Moreover, $\Pi_\pi$ is a unitary permutation over all $k+l$ subsystems that:
\begin{enumerate}
	\item Acts on the $k-l$ systems of type 1, and on all the first subsystems of the $l$ systems of type $2$, as $\pi$;
	\item Then, each second subsystem of the $m$-th system of type $2$ is shifted forward by to a position $k+m$.
\end{enumerate} 

To block-encode $\bar{C}^{k}_{k+l}$ given block-encodings $U_{F(k,l)}$ of $F(k,l)$ we will use the LCU construction from Lemma~\ref{lem:LCU} again. Importantly, note that given a $(\alpha_{F(k,l)},n_{F(k,l)},0)$-block-encoding of $F(k,l)$, all its permuted versions appearing in the expression for $\bar{C}^{k}_{k+l}$ have block-encodings with the same prefactors and numbers of ancillary qubits. Thus, assuming we can construct $(\alpha_{F(k,l)},n_{F(k,l)},0)$-block-encodings $U_{F(k,l)}$ of each $F(k,l)$, we can use Lemma~\ref{lem:LCU} to construct $(\alpha_{\bar{C}^k_{k+l}},n_{\bar{C}^k_{k+l}},0)$-block-encodings $U_{\bar{C}^k_{k+l}}$ of $\bar{C}^k_{k+l}$ with 
\begin{equation}
	\alpha_{\bar{C}^k_{k+l}}=\binom{k}{l} \alpha_{F(k,l)},\qquad
	n_{\bar{C}^k_{k+l}}=\log\binom{k}{l}+ n_{F(k,l)}.
\end{equation}
The straightforward circuit decomposition of $U_{\bar{C}^k_{k+l}}$ is given by
\begin{equation}
	\label{circ:Ckl}
	\!\!\!
	\begin{quantikz}[row sep={\the\qrow,between origins}, column sep=0.75em, baseline=(current bounding box.center), wire types={b}]
		\lstick{$\ket{\tilde{\v{y}}}$}& \gate{\bar{C}^k_{k+l}} & \\
	\end{quantikz}
	\propto\!\!\!\!
	\begin{quantikz}[row sep={1.4cm,between origins}, column sep=0.75em, baseline=(current bounding box.center), wire types={b,b}]
		\lstick{$\ket{0}_{a}$} & \gate{V_\Pi}  
		\gategroup[2,steps=9,style={dashed,rounded corners,fill=blue!10, inner xsep=2pt},background,label style={label position=below,anchor=north,yshift=-0.2cm}]{$U_{\bar{C}^k_{k+l}}$} 
		& \ctrl[wire style={"{\ensuremath{\ket{0}_a}}"}]{1}& ~\dots ~& \ctrl[wire style={"{\ensuremath{\ket{M}_a}}"}]{1}  & & \ctrl[wire style={"{\ensuremath{\ket{M}_a}}"}]{1} & ~\dots ~ & \ctrl[wire style={"{\ensuremath{\ket{0}_a}}"}]{1} &\gate{V_\Pi^\dagger} & \rstick{$\bra{0}_{a}$} \\
		\lstick{$\ket{\tilde{\v{y}}}$}& &\gate{\Pi_1} & ~\dots ~ & \gate{\Pi_M} & \gate{U_{F(k,l)}}& \gate{\Pi_M^\dagger} & ~\dots ~ & \gate{\Pi_1^\dagger} & &
	\end{quantikz}\!\!\!\!\!\!\!
\end{equation}
where we introduced a shorthand notation $M=\binom{k}{l}$, $\Pi_k$ means $\Pi_{\pi_k}$ with $\pi_k$ being the $k$-th permutation using some standard ordering, and $V_\Pi$ prepares a uniform superposition over $M$ states. Note, however, that in this construction the number of $(\log M)$-controlled gates is $M$, which can grow almost exponentially with $N_C$, rendering it an inefficient encoding for large $N_C$. However, in the sub-threshold regime where the Carleman method converges, we expect that $N_C$ will grow only logarithmically with the error, and so will result in an efficient encoding (as we shall see, the cost of these gates is subleading). However, we point out that one could encode the action of all permutations in the circuit above efficiently with growing $N_C$ using the quantum
Fisher-Yates shuffle~\cite{berry2018improved}. 

Finally, assume we have a $(\alpha_{ I+F_1},n_{ I+F_1},0)$-block-encoding of $( I+F_1)$ and  a $(\alpha_{\bar{F}_2},n_{\bar{F}_2},0)$-block-encoding of $\bar{F}_2$. Then, by simply applying these to subsequent single state vector registers, we can construct $(\alpha_{F(k,l)},n_{F(k,l)},0)$-block-encoding of $F(k,l)$ with
\begin{aligns}
	\alpha_{F(k,l)} &= \alpha_{ I+F_1}^{k-l} \alpha_{\bar{F}_2}^{l},\\
	n_{F(k,l)}&=(k-l) n_{ I+F_1}+l n_{\bar{F}_2}.
\end{aligns}
Putting it all together, we get $(\alpha_\C,n_\C,0)$-block-encoding of $\C$ with
\begin{aligns}
	\label{eq:alphaC}
	\alpha_\C &=  \sum_{l=0}^{\lfloor\frac{N_C}{2}\rfloor}\max_{k\in\{l',\dots,N_C-l\}} \binom{k}{l} \alpha_{ I+F_1}^{k-l} \alpha_{\bar{F}_2}^{l} = \sum_{l=0}^{\lfloor\frac{N_C}{2}\rfloor} \binom{N_C-l}{l} \alpha_{ I+F_1}^{N_C-2l} \alpha_{\bar{F}_2}^{l},\\
	n_\C & = 1 + \log\left(\left\lfloor \frac{N_C}{2}\right\rfloor+1\right) + \max_{l\in\{0,\dots, \lfloor N_C/2\rfloor\}}  \max_{k\in\{l',\dots, N_C-l\}}  \left(\log\binom{k}{l} + (k-l) n_{ I+F_1} + l  n_{\bar{F}_2}\right)\nonumber\\
	& = 1 + \log\left(\left\lfloor \frac{N_C}{2}\right\rfloor+1\right) + \max_{l\in\{0,\dots, \lfloor N_C/2\rfloor\}}  \left(\log\binom{N_C-l}{l} + (N_C-2l) n_{ I+F_1} + l n_{\bar{F}_2}\right).
\end{aligns}


\subsubsection{Block-encoding collision matrices}
\label{sec:blockCollisions}

We now proceed to the final step of block-encoding the collision matrices $I+F_1$ and $\bar{F}_2$. Starting from $I+F_1$, we recall Eq.~\eqref{eq:F1_loc} and note that this matrix is local in space. Therefore, it acts non-trivially only on the velocity register,
\begin{align}
	I + F_1 & = I_{\v{r}^\star} \ot (I+\tilde{F}_1), 
\end{align}
where $\tilde{F}_1$ is of size $2^{2D} \times 2^{2D}$. In fact, since out of 4 basis states in each velocity register we only use 3, for $I+\tilde{F}_1$ it is enough to block-encode a $Q\times Q$ matrix (with $Q=3^D$ denoting the number of discrete velocities in the considered LBE model), as the rest of the unitary can be set to identity. Our problem then reduces to block-encoding a $Q \times Q$ matrix $I+\tilde{F}_1$.

To achieve this, we will use block-encoding based on singular value decomposition (SVD). More precisely, we can always write
\begin{equation}
	(I+\tilde{F}_1)= L_1 \Sigma_1 R_1^\dagger,\qquad
\end{equation}
where $L_1$ and $R_1$ are $Q\times Q$ unitary matrices, while $\Sigma_1$ is a $Q\times Q$ diagonal matrix (in fact, all these matrices have size $2^{2D}\times 2^{2D}$, but only the $Q\times Q$ block is nontrivial). Then, assuming access to the SVD unitaries, which act only on $2D$ qubits and whose explicit construction we postpone to Sec.~\ref{sec:gates_F}, we only need to block-encode the diagonal matrix $\Sigma_1$. Thus, using Lemma~\ref{lem:blockLCU}, we get $(\alpha_{ I+F_1},1,0)$-block-encoding of $(I+F_1)$ with the block-encoding prefactor given by largest element $\sigma_*$ of $\Sigma_1$. This can be found using the explicit formula for $\tilde{F}_1$ and, in the relevant regime of $\taus > 1/2$, we get
\begin{equation}
	\label{eq:alpha_F1}
	\alpha_{I+F_1} = \sigma_* = \frac{
		\sqrt{
			3^{D}
			+ 2^{D+1}\left(\bar{\tau}^{\star 2} - \taus\right)
			+ \sqrt{
				9^{D}
				+ 4 \cdot 6^{D}\left(\bar{\tau}^{\star 2} - \taus\right)
			}
		}
	}{\sqrt{2^{D+1}}\,\taus},
\end{equation}
which is a non-increasing function of $\taus$ with values:
\begin{equation}
	\label{eq:alpha_F1_bounds}
	\renewcommand{\arraystretch}{2}
	0\leq \alpha_{I+F_1} \leq \left\{
	\begin{array}{ll}
		\sqrt{2 + \sqrt{3}}& \quad \mathrm{for}\quad D=1, \\
		\sqrt{\frac{1}{2}(7+3\sqrt{5})}& \quad \mathrm{for}\quad D=2,\\
		\frac{1}{2}\sqrt{23 + 3\sqrt{57}}& \quad \mathrm{for}\quad D=3.
	\end{array}
	\right.
\end{equation}
The explicit circuit block-encoding $I+F_1$ is given by:
\begin{equation}
	\label{circ:F1}
	\!\!\!  \!\! \begin{quantikz}[row sep={\the\qrow,between origins}, column sep=\the\qcol, baseline=(current bounding box.center), wire types={b,b}]
		\lstick[wires=2]{$\ket{\v{y}^{[k]}}$}& \wireoverride{n}&\wireoverride{n}&\lstick{$\ket{\v{r}^\star}$} \wireoverride{n} & \gate[wires=2]{I+{F}_1} &  \\
		& \wireoverride{n}&\wireoverride{n}&     \lstick{$\ket{\v{v}}$} \wireoverride{n} &  &\\
	\end{quantikz}
	\propto \!
	\begin{quantikz}[row sep={1.2cm,between origins}, column sep=0.85em, baseline=(current bounding box.center), wire types={b,b,q}]
		\lstick[wires=2]{$\ket{\v{y}^{[k]}}$}& \wireoverride{n}&\wireoverride{n}&\lstick{$\ket{\v{r}^\star}$} \wireoverride{n} & \gategroup[3,steps=7,style={dashed,rounded corners,fill=blue!10, inner xsep=2pt},background,label style={label position=below,anchor=north,yshift=-0.2cm}]{$U_{I+F_1}$} & &~ \dots ~ & &&&&\\
		& \wireoverride{n}& \wireoverride{n}&    \lstick{$\ket{\v{v}}$} \wireoverride{n} & \gate{R_1^\dagger} & \ctrl[wire style={"{\ensuremath{\ket{m=1}}}"}]{1} & ~ \dots ~  & \ctrl[wire style={"{\ensuremath{\ket{m=Q}}}"}]{1} &\gate{L_1}&&& \\
		\lstick{$\ket{0}_a$} & & & & & \gate{V_{\sigma_1}}  & ~ \dots ~ & \gate{V_{\sigma_Q}} && &&\rstick{$\bra{0}_a$} 
	\end{quantikz}
\end{equation}
where $\v{v}$ is the velocity register consisting of $2D$ qubits, controls depend on states encoding consecutive discrete velocities $m$, and $V_{\sigma_m}$ are single qubit unitaries defined via
\begin{equation}
	V_{\sigma_m} \ket{0} \propto \ket{0} + \sqrt{\frac{\sigma_*^2}{\sigma_m^2}-1},
\end{equation}
with $\sigma_m$ denoting the $m$-th singular value of $I+\tilde{F}_1$, i.e., the $m$-th diagonal element of $\Sigma_1$.

We now proceed to block-encoding $\bar{F}_2$. From Eq.~\eqref{eq:F2_loc}, we have
\begin{align}
	\bar{F}_2 & = I_{\v{r}^\star,\v{r}^\star} \ot \ket{0}_{\v{r}^{\star}} \ot \tilde{F_2} \ot \ket{0},
\end{align}
where $I_{\v{r}^\star,\v{r}^\star}$ is of size $N\times N^2$ and $\tilde{F_2}$ is of size $2^{2D}\times 2^{4D}$. As before, for the latter matrix we can in fact consider a restriction to a $Q\times Q^2$ matrix. We then need to construct block-encodings of two square matrices: $N^2\times N^2$ matrix $\bar{I}_{\v{r}^\star,\v{r}^\star}:=I_{\v{r}^\star,\v{r}^\star} \ot \ket{0}_{\v{r}^{\star}}$, and $Q^2\times Q^2$ matrix $\tilde{F}_2\otimes\ket{0}$. The first of these can be easily block-encoded with a unit prefactor and a single qubit ancilla using the following permutation matrix:
\begin{equation}
	U_{\bar{I}_{\v{r}^\star,\v{r}^\star}} = \bar{I}_{\v{r}^\star,\v{r}^\star} \ot I_a +(\bar{I}_{\v{r}^\star,\v{r}^\star}-I)\ot X_a,
\end{equation}
where $X_a$ is a Pauli $X$ matrix acting on the ancilla qubit. Concerning the second matrix, as we explained in Sec.~\ref{sec:BE_preliminaries}, its block-encoding coincides with a block-encoding of a rectangular matrix $\tilde{F}_2$, and the prefactors are the same, $\alpha_{\tilde{F}_2\ot\ket{0}} = \alpha_{\tilde{F}_2}$. Therefore, assuming access to $(\alpha_{\tilde{F}_2},n_{\tilde{F}_2},0)$-block-encoding of $\tilde{F}_2$, we can construct $(\alpha_{\tilde{F}_2},n_{\tilde{F}_2}+1,0)$-block-encoding of $\bar{F}_2$.

As for the block-encoding of $\tilde{F}_2$, we will use a method based on higher-order singular value decomposition (HOSVD). More precisely, we aim at decomposing the $Q \times Q^2$ matrix $\tilde{F}_2$ as
\begin{equation}
	\tilde{F}_2 = L_2 \Sigma_2 (R_2\ot R_2)^\dagger,
\end{equation}
where $L_2$ and $R_2$ are $Q\times Q$ unitary matrices, while $\Sigma_2$ is a $Q\times Q^2$ with as small rank as possible. Then, again, we assume access to $2D$-qubit unitaries $L_2$ and $R_2$ (that we construct in Sec.~\ref{sec:gates_F}), and we only need to block-encode a sparse matrix $\Sigma_2$. The exact block-encoding cost of $\Sigma_2$ will depend on the particular HOSVD used. However, as we shall see in Sec.~\ref{sec:gates_F}, we find constructions for $D\in\{1,2\}$ where this coincides with the block-encoding prefactors for the diagonal matrix with singular values of $\tilde{F_2}$, while the number of ancillary qubits needed is 1 for $D=1$ and 8 for $D=2$. Since we expect that also for $D=3$ the block-encoding will not differ significantly from this, we will use the largest singular value of $\tilde{F}_2$ for the block-encoding prefactor $\alpha_{\tilde{F}_2}$, and $4D$ for the number of ancillary qubits needed. As the largest singular value can be explictly calculated, we end up with $(\alpha_{\bar{F}_2},n_{\bar{F}_2},0)$-block-encoding of $\bar{F}_2$ with
\begin{aligns}
	\label{eq:alpha_F2}
	\alpha_{\bar{F}_2} &= \frac{2}{3\taus}\left(\frac{3}{\sqrt{2}}\right)^D\sqrt{D+2},\\
	n_{\bar{F}_2} &= 4D + 1.
\end{aligns}
The circuit realizing the block-encoding of $\bar{F}_2$ is given by:
\begin{equation}
	\label{circ:F2}
	\!\!\!  \!\! \begin{quantikz}[row sep={\the\qrow,between origins}, column sep=\the\qcol, baseline=(current bounding box.center), wire types={b,b,b,b}]
		\lstick[wires=2]{$\ket{\v{y}^{[k]}}$}& \wireoverride{n}&\wireoverride{n}&\lstick{$\ket{\v{r}^\star}$} \wireoverride{n} & \gate[wires=4]{\bar{F}_2} &  \\
		& \wireoverride{n}&\wireoverride{n}&     \lstick{$\ket{\v{v}}$} \wireoverride{n} &  &\\
		\lstick[wires=2]{$\ket{\v{y}^{[k+1]}}$}& \wireoverride{n}&\wireoverride{n}&\lstick{$\ket{\v{v}}$} \wireoverride{n} & &  \\
		& \wireoverride{n}&\wireoverride{n}&     \lstick{$\ket{\v{r}^\star}$} \wireoverride{n} &  &
	\end{quantikz}
	\propto \!
	\begin{quantikz}[row sep={1.2cm,between origins}, column sep=\the\qcol, baseline=(current bounding box.center), wire types={b,b,b,b}]
		& \wireoverride{n}& \wireoverride{n}&    \lstick{$\ket{\v{v}}~~$} \wireoverride{n} \gategroup[4,steps=4,style={dashed,rounded corners,fill=blue!10, inner xsep=2pt},background,label style={label position=below,anchor=north,yshift=-0.2cm}]{$U_{\bar{F}_2}$} & \gate{R_2^\dagger} & \gate[wires=2]{U_{\Sigma_2\ot\ket{0}}} &  \gate{L_2}& \\
		& \wireoverride{n}&\wireoverride{n}&\lstick{$\ket{\v{v}}~~$} \wireoverride{n} & \gate{R_2^\dagger} & &  &   \\
		& \wireoverride{n}& \wireoverride{n}&    \lstick{$\ket{\v{r}^\star}~~$} \wireoverride{n} &  & \gate[wires=2]{U_{\bar{I}_{\v{r}^\star,\v{r}^\star}}} & &  \\
		& \wireoverride{n}&\wireoverride{n}&\lstick{$\ket{\v{r}^\star}~~$} \wireoverride{n} & & & &  
	\end{quantikz}
\end{equation}
where $U_{\Sigma_2 \otimes \ket{0}}$ is the block-encoding of the square embedding of $\Sigma_2$ in the full space. We also recall that in the circuit diagram we do not show the auxiliary qubits that begin and end in the zero state, for successful implementation of the block-encoding. 

Finally, putting everything we derived in this section together, we end up with the final expression for the block-encoding prefactor $\alpha_A$ and the number of ancillary qubits $n_A$ needed:
\begin{aligns}
	\label{eq:prefactor_final}
	\alpha_{A} &= 1+ \sum_{l=0}^{\lfloor\frac{N_C}{2}\rfloor} \binom{N_C-l}{l} \alpha_{ I+F_1}^{N_C-2l} \alpha_{\bar{F}_2}^{l},\\
	n_A &= N_C + 1 + \log\left(\left\lfloor \frac{N_C}{2}\right\rfloor+1\right) + \max_{l\in\{0,\dots, \lfloor N_C/2\rfloor\}}  \left[\log\binom{N_C-l}{l} + l(4 D-1)\right],
	\label{eq:ancilla_qubits}
\end{aligns}
where $\alpha_{I+F_1}$ and $\alpha_{\bar{F}_2}$ are given by Eqs.~\eqref{eq:alpha_F1} and \eqref{eq:alpha_F2}, respectively. 


\subsubsection{State preparations}

The final missing component before we can run a quantum linear solver is a unitary state preparation $U_{\v{b}}$ of $\ket{\v{b}}$. For that, we assume access to a unitary state preparation $U_{\psi_{\mathrm{ini}}}$ of the initial state $\ket{\psi_{\mathrm{ini}}}$ for the shifted LBE problem:
\begin{equation}
	\ket{\psi_{\mathrm{ini}}} \propto \sum_{\v{r}^\star} \sum_{m=1}^Q g_{m}(\v{r}^\star,0) \ket{\v{r}^\star,m},
\end{equation}
where we use a concise notation for the velocity registers. In this work, we disregard walls, inlets and outlets (which we analyze elsewhere~\cite{jennings2025simulating}), but consider non-trivial initial states given by $g_{m,\v{r}^\star}(0)$ describing e.g., Taylor-Green vortices and vortex dipoles, see Sec.~\ref{sec:error} for details. The complexity cost of constructing $U_{\psi_{\mathrm{ini}}}$ will clearly depend on the initial state $\ket{\psi_{\mathrm{ini}}}$ considered. Although it may be significant for random or unstructured states, in practice one investigates initial velocity fields with explicit functional dependence on position, and so the corresponding initial states are well-structured. Thus, we will assume that the complexity cost of $U_{\psi_{\mathrm{ini}}}$ is negligible compared to the cost of unitary block-encoding $U_A$ of the linear system.

Moreover, one can also show that the complexity cost of $U_{\v{b}}$ does not differ much from that of $U_{\psi_{\mathrm{ini}}}$, and so can be neglected. Namely, in order to prepare $\ket{\v{b}}$ using $U_{\psi_{\mathrm{ini}}}$, one can employ a result from Ref.~\cite{liu2021efficient}, where Lemma~4 shows how to construct $U_{\v{b}}$ using $O(N_C)$ queries to $U_{\psi_{\mathrm{ini}}}$ and $O(N_C)$ single qubit gates. Hence, each query to $U_{\v{b}}$ translates to $O(N_C)$ queries to $U_{\psi_{\mathrm{ini}}}$ (in fact, exactly $N_C$, so there are no hidden large constant prefactors). 


\subsection{Quantum linear solver}
\label{sec:QLS}

Once we have encoded the data about our fluid dynamics problem in unitaries $U_A$ and $U_{\v{b}}$, we can use a quantum linear solver to prepare a quantum state encoding the solution to our problem in its amplitudes. We shall use the algorithm  from Ref.~\cite{dalzell2024shortcut}, since at the time of writing, it provides the lowest rigorously guaranteed upper bound to the number of required applications of (controlled) $U_A$, $U_{\v{b}}$ and their inverses, i.e., the best upper bound for the query complexity $q_Q$. However, in order to get the correct query complexity upper bound, we need to carefully translate our problem to the standardized linear problem studied in Ref.~\cite{dalzell2024shortcut}. More precisely, in that work the author considers the linear problem of the form
\begin{equation}
	A\v{Y} = \v{b},
\end{equation}
where $\v{Y}$ is the desired solution vector, $\v{b}$ is a normalized vector, and it is assumed that one has access a unitary block-encoding $U_A$ of $A$ with a block-encoding prefactor $\alpha_A=1$, and to a unitary state preparation $U_{\v{b}}$. Under such assumptions, the singular values of $A$ satisfy
\begin{equation}
	\label{eq:singularvaluesstandard}
	\sigma(A)\in[1/\kappa_A,1],
\end{equation}
where $\kappa_A$ denotes the condition number of $A$. In such a setting, the work Ref.~\cite{dalzell2024shortcut} provides a quantum algorithm for outputting a quantum state that is at most $\epsilon_Q$ away from the normalized solution $\ket{\v{Y}}$, with the following query upper bound 
\begin{equation}
	q_Q \leq 56.0 \kappa_A +1.05 \kappa_A \ln\left(\frac{\sqrt{1-\epsilon_Q^2}}{\epsilon_Q}\right)+2.78\ln(\kappa_A)^3+3.17.
\end{equation}
More precisely, $q_Q$ is the total number of queries to $U_{A}$, $U_{A}^\dagger$ and their controlled versions, and $2q_Q$ is the total number of queries to $U_{\v{b}}$, $U_{\v{b}}^\dagger$ and their controlled versions.

The difference then is that our block-encoding prefactor $\alpha_A$ is not 1 and that $\v{b}$ is not normalized. To bring our problem to the standardized form described above, note that the linear problems we consider, $A\v{Y}=\v{b}$, are equivalent to
\begin{equation}
	\tilde{A} \tilde{\v{Y}} = \v{b},
\end{equation}
where
\begin{equation}
	\tilde{A} = \frac{A}{\alpha_{A}},\qquad  \tilde{\v{Y}} = \frac{\alpha_{A}}{\|\v{b}\|}\v{Y}.
\end{equation}
Note that having access to $(\alpha_A,n_A,0)$-block-encoding of $A$ is equivalent to having access to $(1,n_A,0)$-block-encoding of $\tilde{A}$. Also, considering $\tilde{\v{Y}}$ instead of $\v{Y}$ makes no difference, as the QLS prepares a normalized quantum state, so it is insensitive to simple scaling. The only important difference then is that the range of singular values of $\tilde{A}$ differs from that of $A$:
\begin{equation}
	\sigma(\tilde{A}) \in \left[1/\|\tilde{A}^{-1}\|,\|\tilde{A}\|\right] = \left[\frac{1}{\frac{\alpha_A}{\|A\|}\kappa_A}, \frac{\|A\|}{\alpha_A} 
	\right] \subseteq \left[\frac{1}{\frac{\alpha_A}{\|A\|}\kappa_A},1 
	\right],
\end{equation}
where we used that $\alpha_A \geq \|A\|$. Comparing with Eq.~\eqref{eq:singularvaluesstandard}, we then end up with the following query complexity upper bound for our problem:
\begin{equation}
	\label{eq:queryBound}
	q_Q \leq \frac{\alpha_A}{\|A\|} \left(56.0 \kappa_A +1.05 \kappa_A \ln\left(\frac{\sqrt{1-\epsilon_Q^2}}{\epsilon_Q}\right)+2.78\ln(\kappa_A)^3+3.17\right).
\end{equation}

Upon successful execution, the quantum linear solver algorithm outputs a coherent encoding of the solution to $A\v{Y}=\v{b}$ as normalized quantum states. More precisely, for the history state and final state output we get:
\begin{aligns}
	\ket{\v{Y}_H}&=\frac{1}{\|\v{Y}_H\|}\sum_{t^\star=0}^{T^\star} \ket{t^\star}\ot \v{y}(t^\star)=\frac{1}{\|\v{Y}_H\|}\sum_{t^\star=0}^{T^\star} \ket{t^\star}\ot (\S\C)^{t^\star} \v{y}_{\mathrm{ini}},\label{eq:carl_history}\\ 
	\ket{\v{Y}_F}&=\frac{1}{\|\v{Y}_F\|}\left(\sum_{t^\star=0}^{T^\star} \ket{0}\ot\ket{t^\star}\ot \v{y}(t^\star)+\sum_{w=1}^{2^W-1}\sum_{t^\star=0}^{T^\star} \ket{w}\ot\ket{t^\star}\ot \v{y}(T^\star)\right)\nonumber\\
	&=\frac{1}{\|\v{Y}_F\|}\left(\sum_{t^\star=0}^{T^\star} \ket{0}\ot\ket{t^\star}\ot (\S\C)^{t^\star} \v{y}_{\mathrm{ini}} +\sum_{w=1}^{2^W-1}\sum_{t^\star=0}^{T^\star} \ket{w}\ot\ket{t^\star}\ot (\S\C)^{T^\star}\v{y}_{\mathrm{ini}}\right).\label{eq:carl_final}
\end{aligns}
These two outputs describe the coherent encoding of the dynamics, provided in the form of an entangled quantum state over the registers. We next turn to post-processing and information-extraction.


\subsection{Measurements and data extraction}
\label{sec:measurement}


\subsubsection{Recovering final and history LBE states}
\label{sec:recovering_final}

We start by explaining how to recover the Carleman state of the system at the final time $T^\star$ from the state $\ket{\v{Y}_F}$. To achieve this, we perform a measurement of both the waiting register $w$ and time register $t^\star$, discarding them afterwards. From Eq.~\eqref{eq:carl_final}, it is clear that if the $w$ outcome is anything but all zero state, then irrespectively of the $t^\star$ outcome, we collapse the system onto the final Carleman state:
\begin{equation}
	\ket{\v{y}(T^\star)} = \frac{1}{\|\v{y}(T^\star)\|} (\S\C)^{T^\star}\v{y}_{\mathrm{ini}}.
\end{equation}
Introducing
\begin{equation}
	\label{eq:NH}
	\N_H = \sum_{t^\star=0}^{T^\star} \|(\S\C)^{t^\star}\v{y}_{\mathrm{ini}}\|^2, \qquad \N_F=(T^\star+1)\|(\S\C)^{T^\star}\v{y}_{\mathrm{ini}}\|^2,
\end{equation}
we see that this happens with probability
\begin{equation}
	\label{eq:pF}
	p(\v{y}(T^\star)) = \frac{1}{1+ \frac{1}{2^W-1}\cdot\frac{\N_H}{\N_F}}.
\end{equation}
The above approaches 1 exponentially with increasing the number $W$ of qubits in the waiting register. Moreover, under the assumption that the norm of the Carleman state does not vary significantly during the evolution, one has $\N_H \approx \N_F$. Thus, a single waiting qubit gives a constant probability of success, approximately equal to $1/2$. Given the expected high success probability, in practice a repeat-till-success strategy appears to be more convenient than a strategy based on amplitude amplification, although the latter is also possible.

Next, both for the Carleman history state $\ket{\v{Y}_H}$ and the final Carleman state $\ket{\v{y}(T^\star)}$, we measure the Carleman block register $n_C$. We keep the outcome only if $n_C=1$, and discard all the state vector registers except for the first one. Thus, we collapse the system onto
\begin{aligns}
	\ket{\psi_F} & := \ket{\v{y}_1(T^\star)}   \approx \ket{\v{g}(T^\star)}, \\
	\ket{\psi_H} & :=  \sum_{t^\star=0}^{T^\star} \ket{t^\star}\ot \ket{\v{y}_1(t^\star)} \approx \sum_{t^\star=0}^{T^\star} \frac{\|\v{g}(t^\star)\|}{\N_{\v{g}}} \ket{t^\star}\ot \v{g}(t^\star),
\end{aligns}
where the approximation error comes from the Carleman truncation error $\epsilon_C$ and QLS error $\epsilon_Q$, whereas $\N_{\v{g}}$ is the normalization factor. Assuming $\|\v{y}_k\|\approx \|\v{g}\|^k$, one obtains $\ket{\psi_F}$ with with probability
\begin{aligns}
	p(\psi_F)  \approx \frac{\|\v{g}(T^\star)\|^2}{\sum_{k=1}^{N_C} \|\v{g}(T^\star)\|^{2k}} = \frac{1-\|\v{g}(T^\star)\|^2}{1-\|\v{g}(T^\star)\|^{2N_C}}.
\end{aligns}
Moreover, assuming that the norm of $\v{g}(t^\star)$ does not change significantly during the evolution, so that $\|\v{g}(t^\star)\| \approx \|\v{g}(T^\star)\|$ for all $t^\star$, the probability $p(\psi_H)$ of obtaining $\ket{\psi_H}$ is the same, i.e., $p(\psi_H)\approx p(\psi_F)$. Crucially, note that differently from prior proposals, our introduction of a shifted LBE model and our choice of parameters from Eq.~\eqref{eq:parameter_choice} guarantees $\|\v{g}\|=O(1)$. Hence, we get constant success probabilities: in particular, when $\|\v{g}\|\approx 1$, the probabilities become approximately $1/N_C$.\footnote{Also note that one can improve this constant probability quadratically by using amplitude amplification.} To show that this is a genuine solution to Problem 4 of Sec.~\ref{sec:problem_info}, rather than just a way to shift Problem 4 into Problem 2, we will need to analyze how our choice of parameters affects the condition number $\kappa_A$, which we shall do in Sec.~\ref{sec:condition}.


\subsubsection{Extracting the drag force}
\label{sec:info_extraction}

We now proceed to explaining how to perform measurements on  $\ket{\psi_F}$ to estimate the drag force at the final time $T^\star$ on some boundary wall. Our analysis will closely follow the proposal presented in Ref.~\cite{penuel2024feasibility}, with some small changes due to us working with the shifted LBE vector $\v{g}$, and not the LBE vector $\bar{\v{f}}$. Using the the standard momentum exchange approach to the drag force~\cite{kruger2016lattice}, the local momentum density change $\Delta \v{p}^\star(\v{r^\star},T^\star,m)$, at time $T^\star$ and at the fluid node $\v{r^\star}$, with the wall node at $\v{r^\star}+\v{e}^\star_m$ in lattice units is given by
\begin{equation}
	\Delta \v{p}^\star(\v{r^\star},T^\star,m) = \left(\bar{f}_m(\v{r}^\star,T^\star)+\bar{f}_{-m}(\v{r}^\star,T^\star)\right) \v{e}_m^\star,
\end{equation}
where, as before, $-m$ is the velocity index corresponding to discrete velocity $-\v{e}_m^\star$. Using the definition of the shifted LBE vector from Eq.~\eqref{eq:g}, the above can be rewritten as
\begin{equation}
	\Delta \v{p}^\star(\v{r^\star,T^\star,m}) = \Delta \v{p}^\star_0(m)+ \left( g_m(\v{r}^\star,T^\star)+g_{-m}(\v{r}^\star,T^\star)\right) \v{e}_m^\star,
\end{equation}
with
\begin{equation}
	\Delta \v{p}^\star_0 (m) = 2 w_m \v{e}_m^\star
\end{equation}
denoting the zero-velocity equilibrium value of momentum density change.

Next, let us encode the wall boundary of interest using the $\mathcal{W}$ function from Eq.~\eqref{eq:boundary}. Since in lattice units the volume of one spatial cell is 1, the total momentum exchange $\Delta \v{P}^\star$ at the boundary is just the sum over local momentum density changes:
\begin{equation}
	\Delta \v{P}^\star (T^\star) = \sum_{\v{r}^\star} \sum_m \mathcal{W}_{\v{r}^\star+\v{e}^\star_m}(1-\mathcal{W}_{\v{r}^\star}) \Delta \v{p}^\star(\v{r}^\star,T^\star,m). 
\end{equation}
Now, the drag force $\v{F}^\star(T^\star)$ can be calculated by dividing the total momentum change by the time it takes. Since in lattice units one time step has length 1, the drag force is simply equal to $\Delta \v{P}^\star(T^\star)$, which can be explicitly written as
\begin{equation}
	\label{eq:drag_force}
	\v{F}^\star (T^\star)= \v{F}_0^\star + \sum_{\v{r}^\star} \sum_m \mathcal{W}_{\v{r}^\star+\v{e}^\star_m}(1-\mathcal{W}_{\v{r}^\star})  \left( g_m(\v{r}^\star,T^\star)+g_{-m}(\v{r}^\star,T^\star)\right) \v{e}_m^\star,
\end{equation}
where
\begin{equation}
	\v{F}_0^\star = \sum_{\v{r}^\star} \sum_m \mathcal{W}_{\v{r}^\star+\v{e}^\star_m}(1-\mathcal{W}_{\v{r}^\star}) 2 w_m \v{e}_m^\star.
\end{equation}
Hence, if we can estimate $\v{F}^\star(T^\star)$, we can get the drag force in physical units by simply multiplying $\v{F}^\star(T^\star)$ by $\Delta x^2/\Delta t$ and by the reference mass (i.e., the mass of a single unit cell).

We shall assume that $\v{F}^\star_0$ can be efficiently computed classically, and we will use $\ket{\psi_F}$ to estimate the value of the second term on the right hand side of Eq.~\eqref{eq:drag_force}, i.e., to get its components:
\begin{equation}
	\mathcal{F}_k := \sum_{\v{r}^\star} \sum_m \mathcal{W}_{\v{r}^\star+\v{e}^\star_m} (1-\mathcal{W}_{\v{r}^\star})  \left( g_m(\v{r}^\star,T^\star)+g_{-m}(\v{r}^\star,T^\star)\right) (\v{e}_m^\star)_k,
\end{equation}
where $k\in\{x,y,z\}$. To that end, for each component $k$ we prepare the following quantum state encoding the boundary of interest:
\begin{equation}
	\ket{\mathcal{W}_k} = \frac{1}{\sqrt{\N_{\mathcal{W}_k}}}\sum_{\v{r}^\star} \sum_{m} \mathcal{W}_{\v{r}^\star+\v{e}^\star_m}(1- \mathcal{W}_{\v{r}^\star})(\v{e}_m^\star)_k \ket{\v{r}^\star}\ot (\ket{m}+\ket{-m}),
\end{equation}
with $\N_{\mathcal{W}_k}$ denoting normalization constants that can be efficiently precomputed, and the velocity registers written in a concise notation. This quantum state can be efficiently prepared  when certain simplicity criteria on the boundary function are met, e.g., via rejection sampling techniques \cite{lemieux2024quantum}. Broadly speaking, when $\mathcal{W}$ is an analytically defined and relatively simple function, this state preparation will have a subleading complexity cost compared to the rest of the algorithm. 

Now, recalling that
\begin{equation}
	\ket{\psi_F}\approx \ket{\v{g}(T^\star)} = \frac{1}{\|\v{g}(T^\star)\|}\sum_{\v{r}^\star} \sum_{m} g_{m}(\v{r}^\star,T^\star) \ket{\v{r}^\star}\ot \ket{m},
\end{equation}
we see that the overlaps between $\ket{\psi_F}$ and $\ket{\mathcal{W}_k}$ are given by
\begin{equation}
	\braket{\mathcal{W}_k|\psi_F} \approx \frac{\F_k}{\|\v{g}(T^\star)\|\sqrt{\N_{\mathcal{W}_k}}}.
\end{equation}
Note that one can efficiently compute $\N_{\mathcal{W}_k}$ and estimate $\|\v{g}(T^\star)\|$ from the QLS algorithm. More precisely, one can estimate $\|\v{Y}_F\|$ and then use the knowledge of probabilities $p(\v{y}(T^\star))$ and $p(\psi_F)$ to get $\|\v{g}(T^\star)\|$. 

Thus, the only thing left to estimate $\F_k$ is to find the overlaps $\braket{\mathcal{W}_k|\psi_F}$, which can be achieved using quantum amplitude estimation. However, one has to note that the boundary states $\ket{\mathcal{W}_k}$ only have support on $O(N_x^{D-1})$ states, in contrast to $\ket{\psi_F}$ having support on $O(N_x)$ states. Hence, one expects $\braket{\mathcal{W}_k|\psi_F}$ to be of the order $O(1/\sqrt{N_x})$, which translates into a multiplicative factor overhead scaling as $O(\sqrt{N_x})$. Recalling the parameter choices in Eq.~\eqref{eq:parameter_choice}, we conclude that, if one wants to extract the drag force as described above, one needs to multiply the complexity of obtaining the state $\ket{\psi_F}$ by 
\begin{equation}
	q_M:= O(\re^{\beta/2}).
\end{equation}
Moreover, one needs to include the complexity of preparing states $\ket{\mathcal{W}_k}$, which is expected to be negligible for simple boundaries like flat walls, but can become more problematic for complex boundaries.


\section{Quantum algorithm performance}
\label{sec:performance}


\subsection{Carleman error convergence}
\label{sec:error}


\subsubsection{Methodology}
\label{sec:error_methodology}

To numerically investigate Carleman error convergence, we first solve the shifted incompressible LBE directly, by simulating $T^\star$ streaming and collision steps from Eqs.~\eqref{eq:LBE_col_shift}-\eqref{eq:LBE_str_shift} with $\v{g}^{\mathrm{eq}}$ given by Eq.~\eqref{eq:eq_shift_incompressible} (i.e., assuming incompressibility). This way, we obtain the shifted state vector $\v{g}(t^\star)$ for each discrete time $t^\star\in\{0,\dots, T^\star\}$, with $\v{g}(0)$ encoding our selected initial condition (see below). Next, we numerically solve the shifted incompressible LBE employing the discrete Carleman embedding, by simulating Eq.~\eqref{eq:LBE_recurrence} for $T^\star$ time steps. This way, we obtain the Carleman state vector $\v{y}(t^\star)$ for each discrete time $t^\star\in\{0,\dots, T^\star\}$, with $\v{y}(0)=\v{y}_{\mathrm{ini}}$ encoding the initial condition according to Eq.~\eqref{eq:carleman_ini}. From that, we recover the Carleman approximation $\v{y}_1(t^\star)$ of~$\v{g}(t^\star)$.

In previous literature, a measure of Carleman truncation error was proposed that was based on the root mean square error (RMSE) between the actual LBE vector and its Carleman approximation~\cite{sanavio2024lattice,turro2025practical}. Here, we decide to not use this measure for the following reason. Namely, for different Reynolds numbers $\re$, the values of discretization parameters $\Delta x$ and $\Delta t$ vary. Thus, for the same values of RMSE (which are calculated in lattice units), but for two different Reynolds numbers, the actual errors in the physical velocity field will be different. To allow for easy comparison with previous studies, however, we present the results for RMSE-based Carleman error in Appendix~\ref{app:rmse}. For example, one can clearly see there that the Carleman error for $N_C=1$ (so for pure linearization) decreases with increasing $\re$. This behavior may be misleading if one does not take into account unit conversion, as increasing $\re$ means stronger nonlinear effects and so one expects pure linearization to give higher errors for for higher $\re$.

Here, we decided to quantify the Carleman truncation error using the average error in the physical velocity field relative to the characteristic velocity $u$ of the problem, e.g., the flow velocity or, in the absence of driving, maximal velocity in the initial state. More precisely, denote the lattice velocities obtained from $\v{g}$ and its Carleman approximation $\v{y}_1$ by $\v{u}^\star(\v{r}^\star,t^\star)$ and $\tilde{\v{u}}^{\star}(\v{r}^\star,t^\star)$, respectively. Then, for all positions $\v{r}^\star$ and time steps $t^\star$, introduce the velocity error vector:
\begin{equation}
	\Delta \v{u}^\star(\v{r}^\star,t^\star) = \tilde{\v{u}}^{\star}(\v{r}^\star,t^\star) - \v{u}^{\star}(\v{r}^\star,t^\star),
\end{equation}
and its version $\Delta \v{u}(\v{r}^\star,t^\star)$ in physical units. We now define the Carleman truncation error $\epsilon_C$ as follows:
\begin{equation}
	\epsilon_C : = \max_{t^\star\in\{1,\dots,T^\star\}} \frac{1}{N}\sum_{\v{r}^\star} \frac{\|\Delta \v{u}\|}{u},
\end{equation}
i.e., as the relative physical velocity error with respect to the flow velocity $u$, averaged over all $N$ lattice sites, and maximized over all times. Note that we have
\begin{equation}
	\frac{\|\Delta \v{u}\|}{u} = \frac{\|\Delta \v{u}^\star\|}{u}\frac{\Delta x}{\Delta t} =  \frac{\|\Delta \v{u}^\star\|}{u}\frac{3\nu}{\Delta x\left(\tau^\star-\frac{1}{2}\right)} =  \re^{\frac{\beta D}{2}} \| \Delta \v{u}^\star \|,
\end{equation}
where we first converted $\Delta \v{u}$ to the lattice units, then used Chapman-Enskog relation from Eq.~\eqref{eq:chapman}, and finally used the choice of parameter scaling with Reynolds number $\re$ from Eq.~\eqref{eq:parameter_choice}. Thus, one can easily calculate $\epsilon_C$ using the numerical data on $\Delta \v{u}^\star$ as follows: 
\begin{equation}
	\epsilon_C = \frac{\re^{\frac{\beta D}{2}}}{N} \max_{t^\star\in\{1,\dots,T^\star\}} \sum_{\v{r}^\star} \|\Delta \v{u}^\star\| = \re^{\frac{-\beta D}{2}} \max_{t^\star\in\{1,\dots,T^\star\}} \sum_{\v{r}^\star} \|\Delta \v{u}^\star\|.
\end{equation}

Due to immense memory requirements for classically simulating Carleman evolution of three-dimensional LBE systems,\footnote{For a system with $\re=100$, assuming spatial resolution parameter $\beta=3/4$ to resolve the Kolmogorov microscale, the Carleman vector for the smallest nontrivial truncation order $N_C=2$ has dimension approximately equal to $7.8\times 10^{11}$. Assuming each entry is stored as a double precision real number taking 8 bytes, it means that the Carleman vector takes up roughly 6.2 terabytes.} we restrict our investigations to one- and two-dimensional settings. We consider the spatial discretization parameter $\beta=3/4$, which is the value allowing one to resolve the Kolmogorov microscale. The simulation parameters $N_x$, $T^\star$, $\taus$, and $u_{\mathrm{ini}}^\star$ are chosen according to Eq.~\eqref{eq:parameter_choice}, so that
\begin{equation}
	\label{eq:parameter_choice_beta_34}
	N_x = \left\lceil \re^{3/4}\right\rceil ,\qquad T^\star = \left\lceil\re^{\frac{3}{4}(D/2+1)}\right\rceil,\quad \bar{\tau}^\star = \frac{1}{2} + \frac{3}{\re^{\frac{3}{4}(D/2-1)+1}},\quad u_{\mathrm{ini}}^\star = \frac{1}{\re^{3 D/8}}.
\end{equation}
Note that in the original Eq.~\eqref{eq:parameter_choice} it was $u_\mx^\star$, and not $u_{\mathrm{ini}}^\star$ scaling as above. Let us remind that the parameter choice leading to such a scaling was motivated by keeping $\|\v{g}(t^\star)\|\leq 1$ at all times. In all of our numerical simulations, we confirmed that choosing $u_{\mathrm{ini}}^\star$ (that we have control over) to scale as above instead of $u_\mx^\star$ (that we have no control over)  guarantees that $\|\v{g}(t^\star)\|$ is always below~1.\footnote{To be more precise, it is the norm $\|\v{g}(t^\star)\|$ that stays below 1, not $\|\v{y}_1(t^\star)\|$. For most simulations $\|\v{y}_1(t^\star)\|$ also stays below 1, but this stops being the case when $\epsilon_C$ becomes large, as is the case e.g., for $D=1$ system with $N_C>1$ and Reynolds number $\re$ above the threshold value $\re_T$, see the left panel of Fig.~\ref{fig:error_D1}. However, in such cases the Carleman output anyway is very erroneous, so even if one could overcome the data extraction bottleneck created by $\|\v{y}_1(t^\star)\|>1$, the extracted data would not approximate the real fluid system being simulated.} We perform simulations for a range of Reynolds numbers $\re\in[10,1000]$ and, for the convenience of the reader, we list the corresponding numerical values of simulation parameters for selected values of $\re$ in Appendix~\ref{app:parameters}.


\subsubsection{Convergence for \texorpdfstring{$D=1$}{D=1}}
\label{sec:error_d1}

In the one-dimensional case, the compressible Navier–Stokes equations admit non-trivial solutions in the form of pressure waves, since density variations are allowed. By contrast, the incompressible Navier–Stokes equations admit only a trivial, spatially uniform solution, since the velocity field must be divergence-free. Acknowledging that incompressible LBE, to leading order, can only recover a spatially constant state in one-dimension, we nevertheless assess Carleman error convergence for this case, as larger $N_C$ are numerically accessible allowing us to present the main ideas.

\begin{figure}[t!]
	\centering
	\includegraphics[width=0.45\linewidth]{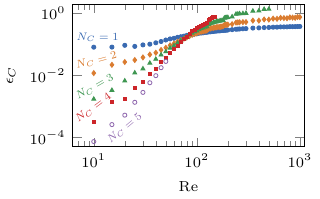}\hspace{1cm}
	\includegraphics[width=0.45\linewidth]{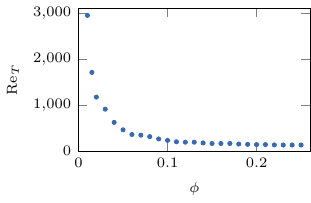}
	\caption{\textbf{Carleman truncation error and convergence for $D=1$.} Left: Truncation error $\epsilon_C$ dependence on the Reynolds number $\re$ and truncation order $N_C$ for D1Q3 model with $\beta=3/4$ (corresponding to spatial discretization resolving the Kolmogorov microscale), $u_0^\star=1$ (enough to keep the norm of the solution below 1 at all times), and the initial sinusoidal state. Right: The threshold value $\re_T$ above which the Carleman truncation error $\epsilon_C$ is larger for truncation order $N_C=2$ than for $N_C=1$. Data for D1Q3 model with $\beta=3/4$, $u_0^\star=1$, and initial states given by colliding states with fraction parameter $\phi$. }
	\label{fig:error_D1}
\end{figure}

We consider two families of initial states, the \emph{sinusoidal states} defined by
\begin{equation}
	\v{g}(\v{r}^\star,0) = \v{g}^{\mathrm{eq}}\left(u^\star = u^\star_{\mathrm{ini}} \sin\frac{2\pi r_x^\star}{N_x} \right),
\end{equation}
and the \emph{colliding states}, parametrized by the fraction parameter $\phi$ as
\begin{equation}
	\v{g}(\v{r}^\star,0) =  \left\{
	\begin{array}{llr}
		\v{g}^{\mathrm{eq}}(u^\star = u^\star_{\mathrm{ini}}) \qquad & \mathrm{for} \qquad & 1\leq r_x^\star \leq \lceil \phi N_x \rceil ,\\
		\v{g}^{\mathrm{eq}}(u^\star = 0) \qquad & \mathrm{for} \qquad & \lceil \phi N_x \rceil <r_x^\star< N_x - \lceil \phi N_x \rceil +1,\\
		\v{g}^{\mathrm{eq}}(u^\star = -u^\star_{\mathrm{ini}}) \qquad &  \mathrm{for}\qquad & N_x - \lceil \phi N_x \rceil +1 \leq r_x^\star \leq N_x.
	\end{array}
	\right.
\end{equation}
In the above, $\v{g}^{\mathrm{eq}}(u^\star)$ is the equilibrium distribution corresponding to zero density fluctuations, $\delta\rho=0$, and lattice velocity $u^\star$ (see Eq.~\eqref{eq:eq_shift_incompressible}), while $u^\star_{\mathrm{ini}}$ depends on the Reynolds number $\re$ according to the parameter choice from Eq.~\eqref{eq:parameter_choice_beta_34}.

Our numerical results on the behavior of the Carleman truncation error $\epsilon_C$ as a function of the Reynolds number $\re$ and the Carleman truncation order $N_C$ are presented in
Fig.~\ref{fig:error_D1}. In the left panel, for the sinusoidal states, one can clearly observe the existence of a threshold Reynolds number $\re_T$, such that $\epsilon_C$ converges (i.e., decreases as $N_C$ increases) for $\re\leq \re_T$, and diverges otherwise. For the colliding state, the same behavior is observed, just with the position of the threshold depending on the fraction parameter~$\phi$. As we illustrate in the right panel of Fig.~\ref{fig:error_D1}, the threshold becomes higher for smaller $\phi$, i.e., when there are less nodes with non-zero velocity, which directly translates to a smaller average magnitude of velocity. The existence of the threshold can be understood, as $\re$ quantifies the strength of nonlinear fluid behavior, and it is known that the Carleman procedure converges only when nonlinearities are weak enough. Similarly, the behavior in $\phi$ is not surprising, as lower $\phi$ means lower average velocity, and so weaker nonlinearities for the same $\re$.

A more detailed analysis of the behavior of Carleman truncation error $\epsilon_C$ with $N_C$ and $\re$ reveals that its convergence (when $\re\leq \re_T$) and divergence (when $\re>\re_T)$ with $N_C$ is, to a good approximation, exponential. As we present in the left panel of Fig.~\ref{fig:damping} for the sinusoidal states, we have $\epsilon\propto \exp(\Gamma N_C)$, where $\Gamma$ is negative for $\re\leq \re_T$ and positive otherwise. In the right panel of Fig.~\ref{fig:damping}, we show the dependence on the convergence parameter $\Gamma$ on the Reynolds number $\re$, where we can again clearly see the position of the threshold $\re_T$ when $\Gamma$ changes sign. We can then model the dependence of the Carleman truncation error $\epsilon$ by
\begin{equation}
	\label{eq:error_model}
	\epsilon_C = E \exp(\Gamma N_C),    
\end{equation}
where $E$ can be assumed approximately constant, and $\Gamma$ becomes negative for low enough $\re$. As we will see in Sec.~\ref{sec:query}, this will allow us to estimate the dependence of the query complexity of our quantum algorithm on the desired level of error. 

\begin{figure}[t!]
	\centering
	\includegraphics[width=0.45\linewidth]{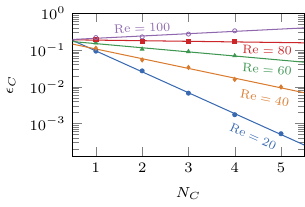}\hspace{1cm}
	\includegraphics[width=0.45\linewidth]{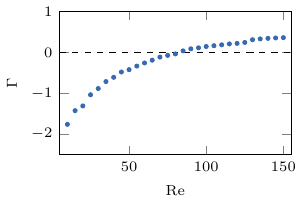}
	\caption{\textbf{Carleman error convergence and divergence for $D=1$.} Left: Dependence on the Carleman error $\epsilon_C$ (points) on the truncation order $N_C$ for selected values of the Reynolds number $\re$ below and above the threshold $\re_T$, together with the best exponential fits (solid lines). Data obtained for the $D=1$ system with the initial sinusoidal state. Right: Values of the corresponding exponential damping parameters $\Gamma$ (see Eq.~\eqref{eq:error_model}) as a function of the Reynolds number $\re$.}
	\label{fig:damping}
\end{figure}


\subsubsection{Convergence for \texorpdfstring{$D=2$}{D=2}}
\label{sec:error_d2}

In contrast to the one-dimensional case, the incompressible Navier-Stokes equations in two dimensions admit spatially non-trivial solutions. To assess Carleman convergence error, we consider the time evolution of the \emph{Taylor-Green vortex} and the \emph{Gaussian vortex dipole} (see Fig.~\ref{fig:initial}):
\begin{align}
	\v{u}_{V}(\v{r}) &= [\sin r_x \cos r_y, -\cos r_x \sin r_y],\\
	\v{u}_{VD}(\v{r}) &= \left[\frac{\partial\psi}{\partial r_y},-\frac{\partial\psi}{\partial r_x}\right],\qquad \psi = e^{-\frac{(r_x-s)^2+r_y^2}{2\sigma^2}}-e^{-\frac{(r_x+s)^2+r_y^2}{2\sigma^2}},
\end{align}
where $r_x,r_y\in[-\pi,\pi)$ are the spatial variables, $2s$ is the dipole separation, and $\sigma$ is the width of the vortex. As there is no forcing to balance viscous dissipation, both solutions decay exponentially in time. The time evolution of the Taylor-Green vortex has a known analytical solution and is therefore used to validate numerical methods~\cite{kruger2016lattice}. While an analytical solution for the time evolution of the Gaussian vortex dipole does not exist, the vortices' counter-rotation causes them to self-advect~\cite{delbende2009dynamics}. Therefore, although the vortex dipole weakens in intensity and ultimately decays, its time evolution is more nonlinear as compared with the Taylor-Green vortex.

We investigate two families of initial states based on the velocity fields defined above. More precisely, we consider
\begin{equation}
	\v{g}(\v{r}^\star,0) = \v{g}^{\mathrm{eq}}\left(\v{u}^\star = u_{\mathrm{ini}}^\star\v{u}_X ([-\pi+2\pi r_x^\star/N_x, -\pi + 2\pi r_y^\star/N_y])\right),
\end{equation}
where $\v{u}_X$ is either $\v{u}_{V}$ or $\v{u}_{VD}$, $\v{g}^{\mathrm{eq}}(\v{u}^\star)$ is the equilibrium distribution corresponding to zero density fluctuations, $\delta\rho=0$, and lattice velocity $\v{u}^\star$ (see Eq.~\eqref{eq:eq_shift_incompressible}), while $u^\star_{\mathrm{ini}}$ depends on the Reynolds number $\re$ according to Eq.~\eqref{eq:parameter_choice_beta_34}.

\begin{figure}[t!]
	\centering
	\includegraphics[width=0.49\linewidth]{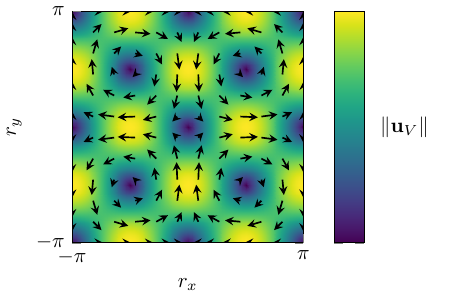}
	\includegraphics[width=0.49\linewidth]{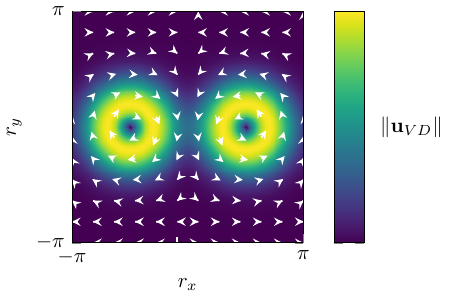}
	\caption{\textbf{Velocity field for the investigated initial states.} Velocity field $\v{u}^\star(\v{r}^\star,0)$ (arrows) and its magnitude $\|\v{u}^\star(\v{r}^\star,0)\|$ (color map) for the Taylor-Green vortex (left) and Gaussian vortex dipole (right) initial states for $D=2$.}
	\label{fig:initial}
\end{figure}

Our numerical results on the dependence of the Carleman truncation error $\epsilon_C$ on the Reynolds number $\re$ and the Carleman truncation order $N_C$ are presented in Fig.~\ref{fig:error_D2}. In contrast to the one-dimensional case, we do not observe the threshold behavior within the numerically accessible regime of Reynolds numbers. For the Taylor-Green vortex, the larger $\re$ gets, the better the improvement in the truncation error $\epsilon_C$ when going from $N_C=1$ to $N_C=2$, hence not only do we not observe the threshold, but we cannot even extrapolate the data to estimate its position. On the other hand, for the Gaussian vortex dipole, the gap between $\epsilon_C$ for $N_C=1$ and $N_C=2$ visibly shrinks as $\re$ gets higher, suggesting that $\re_T$ is around a few hundreds. To confirm that, we extended the range of accessible Reynolds number to $\re=500$ by reducing the resolution parameter to $\beta=1/2$. In such a setting, we observe the threshold behavior for the Gaussian vortex dipole at around $\re_T\approx 400$, while the error behavior for the Taylor-Green vortex stays unchanged (i.e., the error gap widens with growing $\re$).

Let us now briefly discuss the potential explanation for the observed behavior of $\epsilon_C$. First, the much higher threshold for $D=2$ as compared to $D=1$ may be coming from a stronger velocity renormalization, since $u^\star\sim 1/\re^{\beta D/2}$. Smaller velocity could translate into weaker nonlinearity, so that the same $\re$ (physically describing the same strength of nonlinearity) that lies outside the Carleman convergence radius for $D=1$ could lie inside it for $D=2$. This would suggest that for $D=3$ the threshold could be even higher, as $u^\star$ is even smaller for the same $\re$. Also, the fact that initial states for $D=1$ did not satisfy the divergence-free condition could play a role. This is because in that case numerical instabilities might be driving the threshold to lower values. Finally, the difference in the behavior of $\epsilon_C$ between the Taylor-Green vortex and the Gaussian vortex dipole may be coming from their fundamentally different dynamics, with the first one simply decaying, and the second one exhibiting non-trivial advection as well. This would suggest that the Carleman linearization method performance may strongly depend on the type of investigated fluid dynamics problem, not only on its Reynolds number.


\subsection{Block-encoding prefactor analysis}
\label{sec:BE}

We now proceed to analyzing the block-encoding prefactor $\alpha_A / \|A\|$ appearing in the upper bound for the query complexity in Eq.~\eqref{eq:queryBound}. We have already derived the expression for $\alpha_A$ as a function of the relaxation parameter $\taus$ and the Carleman truncation order $N_C$, see Eq.~\eqref{eq:prefactor_final}. To estimate $\| A\|$, we note that the norm of $A_H$ (recall Eq.~\eqref{eq:AH}) can be bounded as follows:
\begin{equation}
	\| A_H\| \leq \|I\| + \| A_H-I\| = 1 + \| A_H-I\|.    
\end{equation}
Then, using the definition of operator norm and unitarity of $\S$, we have 
\begin{equation}
	\| A_H\| \leq 1 + \|\C\|.
\end{equation}
On the other hand, taking a vector $\v{v}=[0,\v{x},0,\dots]^\top$ with $\v{x}$ such that $\|\C\v{x}\|=\|\C\|\|\v{x}\|$, we get
\begin{equation}
	\| A_H\| \geq \frac{\|A_H\v{v}\|}{\|\v{v}\|}= \sqrt{1 +\|\C\|^2}.
\end{equation}
Exactly the same argument applies to the linear system $A_F$ for the final Carleman state, and so we conclude that
\begin{equation}
	\label{eq:A_norm_bound}
	\sqrt{1 +\|\C\|^2}\leq \| A \| \leq 1 + \|\C\|.
\end{equation}
Note that the above bound is pretty tight, in the sense that the difference between the upper and lower bounds is at most 1. 

\begin{figure}[t!]
	\centering
	\includegraphics[width=0.45\linewidth]{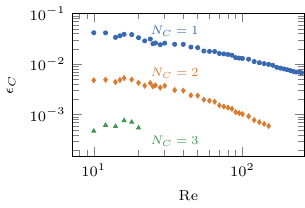}\hspace{1cm}
	\includegraphics[width=0.45\linewidth]{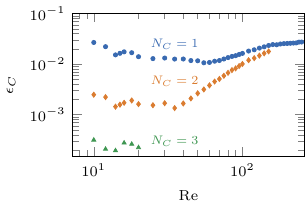}
	\caption{\textbf{Carleman truncation error $\epsilon_C$ for $D=2$.} Data for D2Q9 model with $\beta=3/4$ (corresponding to spatial discretization resolving the Kolmogorov microscale) and $u_0^\star=1$ (enough to keep the norm of the solution below 1 at all times). Initial states given by the Taylor-Green vortex (left) and Gaussian vortex dipole with $s=\pi/2$ and $\sigma=\pi/5$ (right). }
	\label{fig:error_D2}
\end{figure}

Now, combining the above with Eq.~\eqref{eq:prefactor_final}, we get
\begin{equation}
	\label{eq:BE_upper}
	\frac{\alpha_A}{\| A \|} \leq \frac{1+ \sum\limits_{l=0}^{\lfloor\frac{N_C}{2}\rfloor} \binom{N_C-l}{l} \alpha_{ I+F_1}^{N_C-2l} \alpha_{\bar{F}_2}^{l}}{\sqrt{1+\|\C\|^2}} =: \mathrm{BE}_A^<,
\end{equation}
where the approximation is justified as the relevant terms above grow quickly with $N_C$ and become much larger than 1 in the interesting regime.

Now, the only missing component to get the upper bound $\mathrm{BE}_A^<$ for the block-encoding prefactor $\alpha_A / \|A\|$ is the norm of the Carleman collision matrix $\|\C\|$. To estimate it, note that due to the locality of collision matrices $F_k$ (recall Eqs.~\eqref{eq:F1}-\eqref{eq:F2}), the norm of $\C$ for a system with $N$ lattice sites is the same as the norm of a matrix $\C$ for a system with a single lattice site. As such, $\|\C\|$ depends only on the relaxation parameter $\taus$ and the Carleman truncation order $N_C$. Moreover, instead of dealing with the norm of $d_C$-dimensional matrix, one only needs to consider a matrix of dimension
\begin{equation}
	\tilde{d}_C :=\frac{Q(Q^{N_C}-1)}{Q-1},
\end{equation}
where we recall that $Q=3^D$ is the number of discrete velocities in the considered LBE model. This way, we can efficiently calculate $\|\C\|$ for small values of $N_C$ and a range of $\taus\in[1/2,1]$. Using these numerical values of $\|\C\|$, together with the expression for $\alpha_A$, one can calculate upper bounds $\mathrm{BE}^<_A$. We present these in Fig.~\ref{fig:BE} for the extreme cases of $\taus=1/2$ and $\taus=1$, where one can see that $\mathrm{BE}^<_A$ grows exponentially with $N_C$, so that we get the following approximate upper bound:
\begin{equation}
	\frac{\alpha_A}{\| A \|} \lesssim b \exp(a N_C).
\end{equation}
To get the numerical values of coefficients, we first note that for all $D$ the value of $\mathrm{BE}^<_A$ is largest for $\taus=1/2$. Thus, for each dimension, we take the worst case scenario and perform best linear fits to the following equation:
\begin{equation}
	\label{eq:BE_fit}
	\ln \mathrm{BE}^<_A = a N_C + \ln b.
\end{equation}
This way, we obtain the following:
\begin{equation}
	a = \left\{\begin{array}{cc}
		0.273&\mathrm{~for}\quad D=1,  \\
		0.260&\mathrm{~for}\quad D=2,  \\
		0.213&\mathrm{~for}\quad D=3,
	\end{array}
	\right.\qquad 
	b = \left\{\begin{array}{cc}
		0.993&\mathrm{~for}\quad D=1,  \\
		0.942&\mathrm{~for}\quad D=2,  \\
		0.957&\mathrm{~for}\quad D=3,
	\end{array}
	\right.
\end{equation}
so that approximating $a\approx1/4$ and $b\approx1$ for all $D$, we get the following simple approximate bound:
\begin{equation}
	\label{eq:alpha_bound}
	\frac{\alpha_A}{\| A \|} \lesssim \exp\left(\frac{N_C}{4}\right).
\end{equation}

\begin{figure}[t!]
	\centering
	\includegraphics[width=0.325\linewidth]{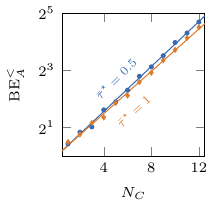}
	\includegraphics[width=0.325\linewidth]{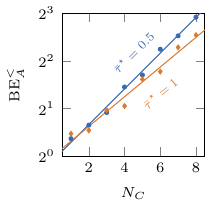}
	\includegraphics[width=0.325\linewidth]{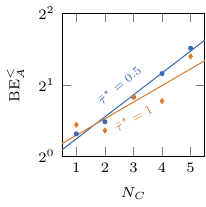}
	\caption{\textbf{Upper bounds for block-encoding prefactors.} The numerical values of $BE^<_A$ from Eq.~\eqref{eq:BE_upper} for different values of the Carleman truncation order $N_C$, relaxation parameter $\taus$, and dimension $D$ (left to right: $D=1$ to $D=3$), together with their best exponential fits as in Eq.~\eqref{eq:BE_fit}. Note that the fits showcase slightly different slopes for the even and odd $N_C$, but we took an overall fit.}
	\label{fig:BE}
\end{figure}


\subsection{Condition number analysis}
\label{sec:condition}


\subsubsection{Lower bound}
\label{sec:condition_bound}

To lower bound the condition number $\kappa_A$ of $A$, we need bounds on both $\| A\|$ and $\| A^{-1}\|$. In the previous section we have already derived a tight bound for $\| A\|$ in Eq.~\eqref{eq:A_norm_bound}, so we just need to bound $\| A^{-1} \|$. We start by noting that, since $A_F^{-1}$ has the same structure as $A_H^{-1}$ (just appended with extra blocks), we have 
\begin{equation}
	\|A^{-1}\| \geq \|A_H^{-1}\|.
\end{equation}
Now, let us introduce a normalized vector
\begin{equation}
	\ket{x_\theta} = \frac{1}{\sqrt{T^\star+1}}\sum_{t^\star=0}^{T^\star}  e^{i\theta t^\star} \ket{t^\star}\ot \ket{\xi_\theta},
\end{equation}
where $\ket{\xi_\theta}$ is the eigenvector of $\S\C$ with eigenvalue $e^{i\theta}$ for some real $\theta$ (we will justify its existence below). 
We then have
\begin{align}
	A^{-1}_H\ket{x_\theta} &=\frac{1}{\sqrt{T^\star+1}} \sum_{r=0}^{T^\star} \sum_{c=0}^{r} \sum_{t^\star=0}^{T^\star}  e^{i\theta t^\star}   \ketbra{r}{c}t^\star\rangle \ot (\S\C)^{r-c} \ket{\xi_\theta}\\
	& = \frac{1}{\sqrt{T^\star+1}} \sum_{r=0}^{T^\star} \sum_{c=0}^{r} e^{i\theta r}   \ket{r}\ot  \ket{\xi_\theta}\\
	& = \frac{1}{\sqrt{T^\star+1}} \sum_{r=0}^{T^\star} e^{i\theta r} (r+1)  \ket{r}\ot  \ket{\xi_\theta},
\end{align}
which can be used to obtain the following lower bound:
\begin{align}
	\|A^{-1}\|\geq \|A_H^{-1}\| \geq \| A_H^{-1} \ket{x}\| = \frac{\sqrt{\sum_{r=0}^{T^\star} (r+1)^2}}{\sqrt{T^\star+1}}= \sqrt{\frac{(2+T^\star)(3+2T^\star)}{6}} \geq \frac{T^\star}{\sqrt{3}}.
\end{align}
Putting the above together with Eq.~\eqref{eq:A_norm_bound}, we end up with
\begin{equation}
	\label{eq:kappa_lower}
	\kappa_A \geq \frac{\|\C\|T^\star}{\sqrt{3}}.
\end{equation}

We now argue for the existence of an eigenvector $\ket{\xi_\theta}$ of $\S\C$ with eigenvalue $e^{i\theta}$. Here, we will just consider the simpler case with no walls, when the streaming matrix $S$ is given by Eq.~\eqref{eq:streaming}, and we will show how to construct $\ket{\xi_0}$ (i.e., the eigenvector with eigenvalue 1). In Appendix~\ref{app:cs_eigenstate}, we consider the harder case, when there are non-trivial boundaries with the streaming operator described by Eq.~\eqref{eq:streaming_walls}, and we construct the eigenstate for $\theta=\pi$, i.e., a state $\ket{\xi_\pi}$ that is a $-1$ eigenstate of $\S\C$.

First, we set all Carleman blocks, except for the first one, to zero:
\begin{equation}
	\ket{\xi_0} = [\ket{\zeta_0},0,\dots,0].
\end{equation}
Thus,
\begin{equation}
	\S\C \ket{\xi_0} = [S(I+F_1)\ket{\zeta_0},0,\dots,0].
\end{equation}
Next, choosing $\ket{\zeta_0}$ to be translationally invariant in space,
\begin{equation}
	\ket{\zeta_0} = \frac{1}{\sqrt{N}}\sum_{\v{r}^\star} \ket{\v{r}^\star} \ot \ket{\psi_0},
\end{equation}
to guarantee that $S\ket{\zeta_0}=\ket{\zeta_0}$. Finally, recalling that $F_1$ acts locally in space (see Eq.~\eqref{eq:F1_loc}), we note that taking $\ket{\psi_0}$ to be the zero eigenstate of $\tilde{F}_1$ (its existence can be checked by direct calculation for $Q\times Q$ matrices), we get $(I+F_1)\ket{\zeta_0}=\ket{\zeta_0}$. We thus conclude that $\ket{\xi_0}$ constructed as described above is indeed an eigenstate of $\S\C$ with eigenvalue 1.


\subsubsection{Numerical evaluation}

While a lower bound on the condition number scaling worse than the classical complexity of solving an LBE problem would rule out the possibility for a quantum advantage, an upper bound scaling better than classical would prove its existence. Thus, deriving such an upper bound is central to any claim of quantum advantage. In Appendix~\ref{app_condition_upper}, we explain how one can derive the following bound for $\kappa_{A_H}$ and $\kappa_{A_F}$:
\begin{aligns}
	\label{eq:kappa_upper_H}
	\kappa_{A_H} & \leq (\|\C\|+1) \sqrt{\sum_{t^\star=0}^{T^\star} (T^\star-t^\star+1) \|(\S\C)^{t^\star}\|^2},\\
	\kappa_{A_F} & \leq (\|\C\|+1) \sqrt{\sum_{t^\star=0}^{T^\star} \left[ ( 2^{W}(T^\star+1)-t^\star)\|(\S\C)^{t^\star}\|^2\right] +  2^{2W-1}(T^\star+1)^2}.
	\label{eq:kappa_upper_F}
\end{aligns}
The problem with analyzing the above bounds is that one needs to estimate the values of $\|(\S\C)^{t^\star}\|$, which turns out to be a non-trivial problem. One can then resort to numerics and evaluate the necessary norms. It turns out, however, that it is not much more computationally demanding to directly calculate the condition number. 

Thus, instead of bounding it, we evaluate the value of $\kappa_{A_H}$ for $D=1$ and $D=2$ systems with Reynolds number $\re\in[10,1000]$, and extrapolate to get the scaling behavior of $\kappa_{A_H}$. Note that $\kappa_{A_H}$ is a lower bound for $\kappa_{A_F}$, so the query complexity upper bound for the final state system will be at least the one we find for the history state case. At this stage, we will use $\kappa_{A_H}$ as an estimate for $\kappa_{A_F}$, but of course this could be improved upon. More precisely, we first approximate $\|A_H\|$ using the upper bound from Eq.~\eqref{eq:A_norm_bound} (recall that this approximation differs by at most 1 from the real value of $\|A_H\|$). Then, we numerically evaluate $\|A_H^{-1}\|$, by using the Lanczos method to find the largest eigenvalue of $(A_H^{-1})^\dagger A_H^{-1}$, and taking its square root. As before, we consider the spatial discretization parameter $\beta=3/4$ (resolving the Kolmogorov microscale), so that all the simulation parameters entering $A_H^{-1}$ (i.e., $N_x$, $T^\star$, and $\taus$) become simple functions of the Reynolds number $\re$ according to Eq.~\eqref{eq:parameter_choice_beta_34}.

We present the dependence of $\kappa_{A_H}$ on the Reynolds number using doubly logarithmic plots in Fig.~\ref{fig:condition}. Using best linear fits to
\begin{equation}
	\ln \kappa_{A_H} = \chi \ln\re + \ln c,
\end{equation}
one can approximate the condition number well by
\begin{equation}
	\label{eq:condition_approx}
	\kappa_{A_H} \approx c\ \re^\chi,
\end{equation}
where parameters $\chi$ and $c$ depend both on problem dimension $D$ and the Carleman truncation order $N_C$:
\begin{equation}
	\label{eq:chi}
	\chi = \left\{\begin{array}{cc}
		1.167&\mathrm{~for}\quad D=1,~N_C=1,  \\
		1.691&\mathrm{~for}\quad D=1,~N_C=2,  \\
		2.283&\mathrm{~for}\quad D=1,~N_C=3, \\
		2.792&\mathrm{~for}\quad D=1,~N_C=4, \\ 
		1.588&\mathrm{~for}\quad D=2,~N_C=1,  \\
		1.936&\mathrm{~for}\quad D=2,~N_C=2,  
	\end{array}
	\right.\qquad 
	c = \left\{\begin{array}{cc}
		1.635&\mathrm{~for}\quad D=1,~N_C=1,  \\
		0.905&\mathrm{~for}\quad D=1,~N_C=2,  \\
		0.459&\mathrm{~for}\quad D=1,~N_C=3, \\
		0.486&\mathrm{~for}\quad D=1,~N_C=4, \\ 
		2.254&\mathrm{~for}\quad D=2,~N_C=1,  \\
		5.460&\mathrm{~for}\quad D=2,~N_C=2.
	\end{array}
	\right.
\end{equation}
Note that, while the increase of $\chi$ with $D$ could be expected and might be handled (after all, classical complexity also increases with $D$), the strong dependence on $N_C$ is troublesome. That is because obtaining low error output may require a high value of $N_C$, which would then directly translate into complexity cost via the dependence of the condition number on $\re^\chi$. Thus, while we postpone a detailed discussion on a possible quantum advantage to Sec.~\ref{sec:conclusions}, we note here that it may be restricted to high error outputs.

\begin{figure}[t!]
	\centering
	\includegraphics[width=0.49\linewidth]{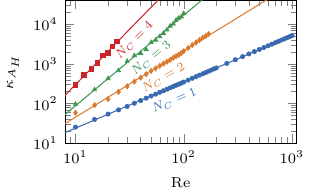}
	\includegraphics[width=0.49\linewidth]{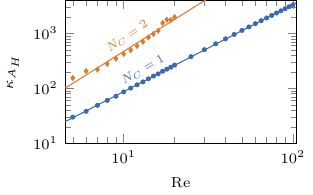}
	\caption{\textbf{Condition number.} Scaling of the condition number $\kappa(A_H)$ with the Reynolds number $\re$ for D1Q3 model (left) and D2Q9 model (right), and different Carleman truncation orders $N_C$. The considered systems have periodic boundary conditions with no walls and $\beta=3/4$.  }
	\label{fig:condition}
\end{figure}


\subsection{Estimating query complexity}
\label{sec:query}

We now have all the necessary ingredients to bound the query complexity of running a quantum linear solver for our shifted lattice Boltzmann problem. First, let us briefly comment on the lower bound for the query complexity $q_Q$. It is known that $q_Q$ can at best scale linearly with the condition number $\kappa_A$ of the investigated linear system~$A$~\cite{costa2022optimal}. As we analytically proved a lower bound on $\kappa_A$ that scales linearly with the number of time steps (recall Eq.~\eqref{eq:kappa_lower}), we then have 
\begin{equation}
	q_Q = O(T^\star).
\end{equation}
Recalling the simulation parameter choice from Eq.~\eqref{eq:parameter_choice}, we conclude that at best (i.e., ignoring all other potential dependence on the Reynolds number not arising from the dependence on $T^\star$) we have:
\begin{equation}
	\label{eq:query_lower}
	q_Q = O\left(\re^{\beta(D/2+1)}\right).
\end{equation}
Thus, even ignoring the block-encoding cost, at best the complexity will scale with the Reynolds number as $\re^{\beta(D/2+1)}$.

Now, let us proceed to the upper bound. We start by simplifying the bound on $q_Q$ from Eq.~\eqref{eq:queryBound}, by slightly weakening it through the observation that for all $\epsilon_Q\geq 10^{-10}$ and all $\kappa\geq 1$, one has
\begin{equation}
	1.05 \ln\left(\frac{\sqrt{1-\epsilon_Q^2}}{\epsilon_Q}\right)+\frac{2.78\ln(\kappa)^3+3.17}{\kappa} \leq 29.0.
\end{equation}
Since the Carleman truncation errors are expected to be much larger than $10^{-10}$, this assumption is justified, and so we get
\begin{equation}
	\label{eq:queryBoundApprox}
	q_Q \leq  \frac{85.0 \ \alpha_A \kappa_A}{\|A\|}.
\end{equation}
Next, we upper bound the block-encoding prefactor using Eq.~\eqref{eq:alpha_bound}, to obtain
\begin{equation}
	q_Q \lesssim  85.0 \exp\left( \frac{N_C}{4}\right) \kappa_A.
\end{equation}
Then, focusing as before on $\beta=3/4$, we approximate the condition number using Eq.~\eqref{eq:condition_approx} to finally arrive at
\begin{equation}
	\label{eq:query_upper}
	q_Q \lesssim  85.0 c \exp\left( \frac{N_C}{4}\right) \re^\chi,
\end{equation}
where, as expected, the main factor comes from $\re^\chi$, with $\chi$ depending $N_C$.

We also note that by rewriting
\begin{equation}
	\exp\left( \frac{N_C}{4}\right) = [\exp(\Gamma N_C)]^{\frac{1}{4\Gamma}}
\end{equation}
and using the error model from Eq.~\eqref{eq:error_model} under the assumption that we are operating below the threshold Reynolds number ($\re\leq \re_T$), so that $\Gamma<0$, we get
\begin{equation}
	\label{}
	\exp\left(\frac{N_C}{4}\right) = \left(\frac{E}{\epsilon_C}\right)^{\frac{1}{4|\Gamma|}}.
\end{equation}
Plugging this back into the query complexity upper bound, we get
\begin{equation}
	q_Q \lesssim  85.0 c \ \left(\frac{E}{\epsilon_C}\right)^{\frac{1}{4|\Gamma|}} \re^\chi,    
\end{equation}
where we note that $\chi$ depends on $N_C$, which in turn should be chosen as
\begin{equation}
	N_C = \left\lceil \frac{1}{|\Gamma|}\ln \frac{E}{\epsilon_C} \right\rceil.
\end{equation}


\subsection{Estimating gate complexity}
\label{sec:gates}


\subsubsection{Preliminaries}
\label{sec:gates_pre}

In this section we aim at estimating the cost of a single query to $U_{A_F}$ or $U_{A_H}$ in terms of the number of $T$ gates needed to construct quantum circuits implementing them. We denote the $T$-cost of implementing exactly a general unitary circuit $U$ by $G[U]$, and the cost of implementing it up to error $\epsilon$ by $G[U^{\epsilon}]$. Moreover, a gate $U$ controlled on $k$ qubits will be denoted by $c^kU$. In what follows, we focus on estimating $G[U_{A_F}]$, which also acts as an upper bound for $G[U_{A_H}]$. 

In the process, we will use the following estimates of $T$-costs based on known results on circuit compilation. First, we focus on estimating the cost of a general unitary $U$ controlled on $k$ qubits, as a function of the cost of implementing just $U$. As we explain in Appendix~\ref{app:gates}, for $k\geq 2$ one can realize $c^kU$ with $2k-2$ Toffoli gates, $k-1$ auxiliary qubits and a single $cU$ gate. Moreover, since each Toffoli gate can be replaced with 7 $T$ gates~\cite{selinger2013quantum,amy2013meet}, we we can realize an arbitrary $c^kU$ with a $T$-count of 
\begin{equation}
	G[c^kU] = 14k-14 + G[cU].
\end{equation}
Next, given a decomposition of $U$ into Clifford and $T$ gates, one can estimate the $T$-cost of $cU$ in the following way. First, all the controlled versions of the single-qubit Clifford gates can be realized with CNOTs and Cliffords, so they do not contribute to $G[cU]$. Then, every CNOT appearing in the decomposition of $U$ becomes a Toffoli in $cU$, and so contributes 7 $T$ gates. Finally, every $T$ gate in $U$ becomes a controlled $T$ in $cU$, and each such gate can be constructed from $9$ $T$ gates~\cite{selinger2013quantum,amy2013meet}. Here, we will make a crude approximation by assuming that the number of CNOTs in the decomposition of $U$ is roughly equal to the number of $T$ gates. We thus approximate the $T$-cost of $cU$ by
\begin{equation}
	G[cU] \approx 16 G[U].
\end{equation}
Consequently, for $k\geq 2$, we get
\begin{equation}
	G[c^kU] \approx 14k- 14 + 16G[U].
\end{equation}
For the special case when $U=X$ is a single qubit NOT gate, we use a slightly better estimate based on Ref.~\cite{khattar2025rise}:
\begin{equation}
	G[c^kX]= 8k -12,
\end{equation}
which requires a single extra ancilla. Moreover, we take the $T$-cost of a controlled Hadamard gate to be given by
\begin{equation}
	G[cH] = 2,
\end{equation}
due to a decomposition $cH=(I\ot R_y(\pi/4))cZ (I\ot R_y^\dagger(\pi/4))$ and $R_y(\pi/4) = S H T H S^\dagger$.

Now, we proceed to the estimates for the cost of implementing an $n$-qubit unitary $U$. For $n=1$, the cost of implementing a single qubit rotation $R_{\hat{n}}(\theta)$ around axis $\hat{n}\in\{x,y,z\}$ up to error $\epsilon$ is given by~\cite{ross2014optimal}:
\begin{equation}
	G[R_{\hat{n}}^\epsilon(\theta)] \approx 3\log\frac{1}{\epsilon}.
\end{equation}
We choose equal error $\epsilon$ on all approximate single-qubit gates used, and assume that the total error arising from gates implemented up to $\epsilon_1$ and $\epsilon_2$ is $\epsilon_1+\epsilon_2$. For $n\geq 2$, we explain in Appendix~\ref{app:gates} how to estimate the $T$-cost for synthesizing $U$ in terms of the $T$-costs $G[\mathrm{PREP}^{\epsilon_j}(u_j)]$, where $\mathrm{PREP}^{\epsilon_j}(u_j)$ is a state preparation of the $j$-th column of $U$ to error $\epsilon_j$. For an $n$-qubit unitary $U$ with $d$ columns $u_j$ specified, this estimate is given by
\begin{equation}
	\label{eq:unitarysynthesiscost}
	G [U^\epsilon] = 32 \sum_{j=1}^{d}G[\mathrm{PREP}^{\epsilon_j} (u_j)] + d(8n -12),
\end{equation}
where the resulting error $\epsilon$ depends on $\epsilon_j$ as
\begin{equation}
	\epsilon = \sqrt{2}\sum_{j=1}^{d} \epsilon_j.
\end{equation}
Note that such unitary state preparations can be constructed from single-qubit rotations, and that a uniform superposition over $2^k$ states can be prepared using only Clifford gates. Moreover, a state preparation unitary $\mathrm{PREP(\psi)}$ of a general $n$-qubit state $\ket{\psi}$ can be realized using roughly $2^{n+1}$ single qubit rotations~\cite{mottonen2004transformation}, and so the cost of implementing it up to error $2^{n+1}\epsilon$ can be approximated by
\begin{equation}
	\label{eq:n_qubit_state_prep}
	G[\mathrm{PREP}^{2^{n+1}\epsilon}(\psi)] \approx 3\cdot 2^{n+1} \log\frac{1}{\epsilon} .
\end{equation}
Finally, we assume that the $T$-cost of a quantum adder acting on $n$ qubits is given by~\cite{gidney2018halving}
\begin{equation}
	G[\add(n)] \approx  4n.
\end{equation}


\subsubsection{Gate cost of the linear system circuit}
\label{sec:gates_A}

We start by using the gate decomposition of $U_{A_F}$ presented in Eq.~\eqref{circ:A_F}, and employing Eq.~\eqref{circ:w_control}, to get
\begin{equation}
	G[U_{A_F}] =  G[cU_\Delta] + G[c^2\S] + G[c^2U_\C] + 2G[V_{\alpha_\C}]+G[cV_{\alpha^2_\C-1}] + G[c^{W+1}V_{\alpha^2_\C-1}].
\end{equation}
All the state preparation unitaries $V$ are single-qubit $y$ rotations, and so implementing them up to error $\epsilon$ has the following cost:
\begin{aligns}
	G[V_{\alpha_\C}^\epsilon] & \approx 3 \log \frac{1}{\epsilon}, \\
	G[cV_{\alpha^2_\C-1}^\epsilon] & \approx 16 G[V^{\epsilon}_{\alpha^2_\C-1}] \approx 48 \log \frac{1}{\epsilon}, \\
	G[c^{W+1}V_{\alpha^2_\C-1}^\epsilon] & \approx 14W + G[cV_{\alpha^2_\C-1}^\epsilon] \approx 14W + 48 \log \frac{1}{\epsilon}.
\end{aligns}
Then, from Eq.~\eqref{circ:Delta}, we have
\begin{equation}
	G[cU_\Delta] \approx 16 G[\add(\log(T^\star+1) + W+ 1)] \approx 64(\log T^\star + W +1).
\end{equation}
Next, from Eqs.~\eqref{circ:S} and \eqref{circ:SS}, we get
\begin{align}
	G[c^2\S] & = N_C G[c^2S] = 2D N_C G[c^2\add(\log N_x)] \approx 2D N_C (14 + 16G[\add(\log N_x)] ) \\
	& \approx 2D N_C (14+ 64\log N_x ).
\end{align}
Putting it together, we get
\begin{equation}
	G[U^{4\epsilon+\epsilon_\C}_{A_F}] \approx G[c^2U^{\epsilon_\C}_\C]+78W+64+102\log\frac{1}{\epsilon} + 64\log T^\star +2DN_C( 14+64\log N_x).
\end{equation}
Moreover, recalling the choice of simulation parameters, $N_x$ and $T^\star$, from Eq.~\eqref{eq:parameter_choice} and restricting to the case $\beta=3/4$ (allowing one to resolve the Kolmogorov microscale), the above can be expressed as a function of the Reynolds number:
\begin{equation}
	\label{eq:UAF_cost}
	G[U_{A_F}^{4\epsilon+\epsilon_\C}] \approx G[c^2U^{\epsilon_\C}_\C]+78 W + 28 D N_C +64+102\log\frac{1}{\epsilon} + 24(D(1+4 N_C)+2)\log \re .
\end{equation}


\subsubsection{Gate cost of the Carleman collision circuit}
\label{sec:gates_C}

We are now facing the central problem of estimating the $T$-cost of $c^2U_\C$. First, let us introduce some shorthand notation:
\begin{equation}
	n_B:=\lceil\log \lfloor N_C/2\rfloor \rceil,\qquad n_C = \lceil\log N_C \rceil,\qquad n_\Pi = \left\lceil \log\binom{k}{l}\right\rceil.
\end{equation}
From Eq.~\eqref{circ:C} we have
\begin{equation}
	G[c^2U_\C] = 2G[V_B] + \sum_{l=0}^{\lfloor\frac{N_C}{2}\rfloor} G[c^{2+n_B}U_{B_l}].
\end{equation}
Since $V_B$ is an $n_B$-qubit unitary, its approximation has the following $T$-cost according to Eq.~\eqref{eq:n_qubit_state_prep}:
\begin{equation}
	G[V_B^{N_C \epsilon}] \approx 3 N_C \log\frac{1}{\epsilon}.
\end{equation}
Then, from Eq.~\eqref{circ:B_l}, we see that
\begin{equation}
	\label{eq:B_l_cost}
	G[c^{2+n_B}U_{B_l}] = G[c^{2+n_B}U_{(\Delta^\dagger)^l}] + \sum_{k=l'}^{N_C-l} \left( G\left[c^{2+n_B+n_C}V_{C^k_{k+l}}\right] + G\left[c^{2+n_B+n_C}U_{C^k_{k+l}}\right]\right).
\end{equation}
The first term can be easily estimated to be
\begin{equation}
	G[c^{2+n_B}U_{(\Delta^\dagger)^l}] \approx 14+14n_B+G[c(\Delta^\dagger)^l]  \approx 14+14n_B+16G[\add(n_C)] \approx 14+14n_B + 64 n_C.
\end{equation}
As $V_{C^k_{k+l}}$ is a single qubit unitary, we get the second term via
\begin{equation}
	G\left[c^{2+n_B+n_C}V_{C^k_{k+l}}^{\epsilon}\right] \approx 14+14(n_B+n_C) + G\left[cV_{C^k_{k+l}}^{\epsilon}\right] \approx 14+14(n_B+n_C) + 48\log\frac{1}{\epsilon}.
\end{equation}

Turning to the last term of Eq.~\eqref{eq:B_l_cost}, we use Eq.~\eqref{circ:Ckl} to get
\begin{equation}
	G[c^{2+n_B+n_C} U_{C^k_{k+l}}] = 2G[V_\Pi] + 2 \sum_{i=1}^{\binom{k}{l}} G[c^{2+n_B+n_C+n_\Pi} \Pi_i] + G[c^{2+n_B+n_C}U_{F(k,l)}].
\end{equation}
Since $V_\Pi$ is a uniform state preparation, we ignore its cost, as it will be negligible. We estimate the cost of each multiply-controlled $\Pi_i$ as
\begin{equation}
	G[c^{2+n_B+n_C+n_\Pi} \Pi_i] \approx 14 +14(n_B+n_C+n_\Pi) + G[c\Pi_i] \approx 14 +14(n_B+n_C+n_\Pi),
\end{equation}
where we made a crude assumption that controlled permutations have zero $T$-cost. The cost of $c^{2+n_B+n_C}F(k,l)$ can be approximated by
\begin{align}
	G[c^{2+n_B+n_C}U_{F(k,l)}] & \approx 14+14(n_B+n_C) + G[cU_{F(k,l)}] \approx 14+14(n_B+n_C) + 16 G[U_{F(k,l)}] \\
	& \approx 14+14(n_B+n_C) + 16(k-l) G[U_{I+F_1}] + 16 l G[U_{\bar{F}_2}]. 
\end{align}

Putting things together, we get:
\begin{align}
	G[c^2U_\C^{\epsilon_\C}] = &~ 6 N_C \log\frac{1}{\epsilon} + \sum_{l=0}^{\lfloor \frac{N_C}{2}\rfloor} (14+14n_B+64n_C) \\
	& +\sum_{l=0}^{\lfloor \frac{N_C}{2}\rfloor} \sum_{k=l'}^{ N_C-l} \left( 14 + 14(n_B+n_C) + 48 \log\frac{1}{\epsilon} \right) \\
	& +2\sum_{l=0}^{\lfloor\frac{N_C}{2}\rfloor} \sum_{k=l'}^{N_C-l} \sum_{i=1}^{\binom{k}{l}} \left( 14+14(n_B+n_C+n_\Pi)\right)\\
	& +\sum_{l=0}^{\lfloor\frac{N_C}{2}\rfloor} \sum_{k=l'}^{N_C-l} \left( 14+14(n_B+n_C) + 16(k-l) G[U_{I+F_1}^{\epsilon_{F_1}}] + 16 l  G[U_{\bar{F}_2}^{\epsilon_{F_2}}]\right),
\end{align}
where $\epsilon_\C$ is just the sum over errors of all $\epsilon$-approximate single-qubit gates used so far, together with the errors introduced while compiling $U_{I+{F}_1}$ and $U_{\bar{F}_2}$ (we will discuss these in the next section):
\begin{align}
	\epsilon_\C &= 2N_C\epsilon + \sum_{l=0}^{\lfloor \frac{N_C}{2}\rfloor} \sum_{k=l'}^{ N_C-l} [\epsilon + (k-l)\epsilon_{F_1}+ l \epsilon_{F_2} ] \\
	&= \frac{N_C\left[ 6(12+N_C)\epsilon +(N_C+2)((2N_C+5)\epsilon_{F_1} + (N_C+1)\epsilon_{F_2}) \right]}{24}.
\end{align}
Most of the sums can be explicitly calculated to yield 
\begin{align}
	\label{eq:UC_cost}
	G[c^2U_\C^{\epsilon_\C}] \approx &~ 6N_C \left(n_B+1+\log\frac{1}{\epsilon}\right) +  \frac{N_C+2}{2} (14+14n_B+64n_C) \\
	& + \frac{N_C^2+4N_C}{4} \left( 14 + 14(n_B+n_C) + 48 \log\frac{1}{\epsilon} \right) \\
	& +\left[\left(2+\frac{4}{\sqrt{5}}\right)\left(\frac{1+\sqrt{5}}{2}\right)^{N_C}-4]\right]\left( 14+14(n_B+n_C) \right)+2h_{N_C}\\
	&  +\frac{N_C^2+4N_C}{4} ( 14+14(n_B+n_C))\\
	& + \frac{2N_C(N_C+2  )}{3} \left[(2N_C+5) G[U_{I+F_1}^{\epsilon_{F_1}}] + (N_C+1)  G[U_{\bar{F}_2}^{\epsilon_{F_2}}] \right],
\end{align}
where we used
\begin{equation}
	\sum_{l=0}^{\lfloor\frac{N_C}{2}\rfloor} \sum_{k=l'}^{N_C-l} \binom{k}{l} = \left(1+\frac{2}{\sqrt{5}}\right)\left(\frac{1+\sqrt{5}}{2}\right)^{N_C}-2,
\end{equation}
and introduced
\begin{equation}
	h_{N_C} := \sum_{l=0}^{\lfloor\frac{N_C}{2}\rfloor} \sum_{k=l'}^{N_C-l} \binom{k}{l}\log\binom{k}{l}.
\end{equation}


\subsubsection{Gate cost of the collision circuits }
\label{sec:gates_F}

To estimate the $T$ gate cost of compiling $U_{I+F_1}$ and $U_{\bar{F}_2}$, we will consider separately the cases of dimension $D=1$ and $D=2$, and leave $D=3$ case for future work.

\paragraph*{Case $D=1$.}

Performing the SVD of $I+\tilde{F}_1=L_1\Sigma_1R_1^\dagger$, one finds that not only $\Sigma_1$ depends on $\taus$, but also the unitaries $L_1$ and $R_1$ do. However, independently of the specific value of $\taus$, the columns of $L_1$ and $R_1$ have a very specific structure. Namely, up to permutations, we have: 
\begin{aligns}
	\mathrm{2~columns~with~pattern:} &\quad (a,a,b) \\
	\mathrm{1~column~with~pattern:} &\quad(1/\sqrt{2},-1/\sqrt{2},0),
\end{aligns}
where $a,b$ are real numbers. Thus, we can estimate the $T$-cost of $L_1$ and $R_1$ using Eq.~\eqref{eq:unitarysynthesiscost} and the costs of state preparation for the above states. Clearly, the last state can be prepared with Clifford gates only. Moreover, the first two states each require just one general single-qubit rotation and one controlled Hadamard gate. Hence:
\begin{equation}
	G[L^{2\sqrt{2}\epsilon}_1] \approx 32\left(2\cdot(G[R_y^{\epsilon}(\theta)]+2)]\right) +12 \approx 140 + 192\log\frac{1}{\epsilon},
\end{equation}
and the same cost for $R_1$. 

Next, $\Sigma_1$ can be found to be
\begin{equation}
	\Sigma_1 = \mathrm{diag}\left(1,\sigma_1,\sigma_2\right).
\end{equation}
From Eq.~\eqref{circ:F1}, we see that this then reduces to three single-qubit unitaries, $V_{\sigma_1}$, $V_{\sigma_2}$, and $V_{\sigma_3}$, controlled on two qubits. By noticing that $V_{\sigma_1}=I$ and that encoding never uses the basis state $\ket{11}$, it can be reduced to two single-qubit unitaries controlled on a single-qubit. Thus, we get 
\begin{equation}
	G[U_{\Sigma_1}^{2\epsilon}] = 2G[cR^\epsilon_y(\theta)] \approx 96 \log\frac{1}{\epsilon}.
\end{equation}
Putting it together with the costs for $L_1$ and $R_1$, we get:
\begin{equation}
	G[U_{I+F_1}^{\epsilon_{F_1}}] \approx 280 + 480 \log\frac{1}{\epsilon},
\end{equation}
with
\begin{equation}
	\epsilon_{F_1}=(4\sqrt{2}+2)\epsilon.
\end{equation}

Now, we find the following HOSVD of $\tilde{F}_2=L_2\Sigma_2(R_2\ot R_2)^\dagger$:
\begin{align}
	L_2 = \begin{pmatrix}
		\frac{1}{\sqrt{6}} & \frac{1}{\sqrt{2}} & \frac{1}{\sqrt{3}} \\
		-\frac{\sqrt{2}}{\sqrt{3}} & 0 & \frac{1}{\sqrt{3}} \\
		\frac{1}{\sqrt{6}} & -\frac{1}{\sqrt{2}} & \frac{1}{\sqrt{3}}
	\end{pmatrix},\quad
	R_2 = \begin{pmatrix}
		\frac{1}{\sqrt{2}} & 0 & \frac{1}{\sqrt{2}} \\
		0 & 1 & 0 \\
		-\frac{1}{\sqrt{2}} & 0 & \frac{1}{\sqrt{2}}
	\end{pmatrix},
\end{align}
and $\Sigma_2$ has a single non-zero element $(\Sigma_2)_{11}=\sqrt{6}/\taus$. Thus, it is actually an SVD and $(\Sigma_2)_{11}$ is the largest singular value of $\tilde{F}_2$. The right unitary $R_2$ can be compiled using only Clifford gates, whereas the $T$-cost of the left can be estimated using Eq.~\eqref{eq:unitarysynthesiscost} to be
\begin{equation}
	G[L_2^{\sqrt{2}(\epsilon_1+\epsilon_2+\epsilon_3)}] \approx 32 \sum_{j=1}^{3}G[L^{\epsilon_j}_{2,j}] + 12,
\end{equation}
where $L_{2,j}$ is a unitary state preparation of the $j$-th column of $L_2$. Since the second column can be prepared using Clifford gates, whereas the first and third column require each just one single-qubit gate and one controlled Hadamard gate, we conclude that
\begin{equation}
	G[L^{2\sqrt{2}\epsilon}_2] \approx 64 (G[R_{y}^{\epsilon}(\theta)]+2) + 12 \approx 192\log\frac{1}{\epsilon}+140.
\end{equation}
Moreover, block-encoding of $\Sigma_2$ requires one single-qubit unitary gate controlled on two qubits, meaning that
\begin{equation}
	G[U^\epsilon_{\Sigma_2}] \approx G[c^2R_y^\epsilon(\theta)] \approx 14 + 48\log\frac{1}{\epsilon}.     
\end{equation}
We thus conclude that
\begin{align}
	G[U_{\tilde{F}_2}^{\epsilon_{F_2}}] & \approx 154+240 \log\frac{1}{\epsilon}
\end{align}
where
\begin{equation}
	\epsilon_{F_2} = (1+2\sqrt{2})\epsilon.
\end{equation}

\paragraph*{Case $D=2$.}

Performing the SVD of $I+\tilde{F}_1=L_1\Sigma_1R_1^\dagger$, one finds that, up to permutations, the columns of $L_1$ and $R_1$ have the following structure:
\begin{aligns}
	\mathrm{2~columns~with~pattern:} &\quad (a,a,a,a,b,b,b,b,c) \\
	\mathrm{4~columns~with~pattern:} &\quad(a,-a,b,-b,c,-c,d,-d,0), \\
	\mathrm{3~column~with~pattern:} &\quad(a,a,b,b,c,c,d,d,e),
\end{aligns}
with $a,b,c,d,e$ real. As before, we can estimate the $T$-cost of $L_1$ and $R_1$ using Eq.~\eqref{eq:unitarysynthesiscost} and the costs of state preparation for the above states. The first two states can be prepared with a single qubit rotation, followed by a single qubit rotation controlled on a single qubit, followed by two controlled Hadamard gates. The next four states can be prepared with a single qubit rotation, followed by two single qubit rotations controlled on a single qubit, followed by Clifford gates. Finally, the last three states can be obtained by a single qubit rotation, followed by a single qubit rotation controlled on a single qubit, followed by two single qubit rotations controlled on two qubits, followed by a controlled Hadamard. Hence:
\begin{align}
	G[L^{28\sqrt{2}\epsilon}_1] \approx & 32\left(2(G[R_y^{\epsilon}(\theta)]+G[cR_y^{\epsilon}(\theta)]+4)+4(G[R_y^{\epsilon}(\theta)]+2G[cR^{\epsilon}_y(\theta)]) \right.\\
	&\left. + 3 (G[R_y^{\epsilon}(\theta)] + G[cR_y^{\epsilon}(\theta))] + 2G[c^2R_y^{\epsilon}(\theta)]+2) \right) +180 \\
	\approx & 3316 + 30048\log\frac{1}{\epsilon},
\end{align}
and the same cost for $R_1$. 

Next, $\Sigma_1$ can be found to be
\begin{equation}
	\Sigma_1 = \mathrm{diag}\left(\sigma_1,\sigma_1,\sigma_1,\sigma_2,\sigma_2,\sigma_3,\sigma_3,\sigma_4,\sigma_5\right),
\end{equation}
From Eq.~\eqref{circ:F1}, we see that this then can be achieved by nine single-qubit unitaries, $V_{\sigma_1},\dots,V_{\sigma_9}$, controlled on four qubits. By noticing that the first three rotations are the same, and there are also two equal pairs, this can be reduced to one uncontrolled single-qubit unitary rotation, two single-qubit unitaries controlled on a three qubits, and two single-qubit unitaries controlled on four qubits. Thus, we get
\begin{equation}
	G[U_{\Sigma_1}^{5\epsilon}] = G[R^{\epsilon}_y(\theta)] + 2G[c^3R^\epsilon_y(\theta)]+ 2G[c^4R^\epsilon_y(\theta)] \approx 140+195 \log\frac{1}{\epsilon}.
\end{equation}
Putting it together with the costs for $L_1$ and $R_1$, we get:
\begin{equation}
	G[U_{I+F_1}^{\epsilon_{F_1}}] \approx 6412 + 60291 \log\frac{1}{\epsilon},
\end{equation}
with
\begin{equation}
	\epsilon_{F_1}=(28\sqrt{2}+5)\epsilon.
\end{equation}

Next, we find the following HOSVD of $\tilde{F}_2=L_2\Sigma_2(R_2\ot R_2)^\dagger$:
\small
\begin{align}
	L_2 = \left(
	\begin{array}{ccccccccc}
		-\frac{1}{3} \left(2 \sqrt{2}\right) & 0 & 0 & 0 & \frac{1}{5} & \frac{2 \sqrt{\frac{2}{17}}}{5} & \frac{4}{\sqrt{595}} & \frac{4}{\sqrt{1855}} & \frac{2 \sqrt{\frac{2}{53}}}{3} \\
		\frac{1}{6 \sqrt{2}} & -\frac{1}{2} & 0 & 0 & 0 & 0 & \sqrt{\frac{17}{35}} & \frac{17}{\sqrt{1855}} & -\frac{19}{6 \sqrt{106}} \\
		\frac{1}{6 \sqrt{2}} & -\frac{1}{2} & 0 & 0 & 0 & 0 & 0 & 0 & \frac{\sqrt{\frac{53}{2}}}{6} \\
		\frac{1}{6 \sqrt{2}} & \frac{1}{2} & 0 & 0 & 0 & 0 & 0 & \sqrt{\frac{35}{53}} & \frac{17}{6 \sqrt{106}} \\
		\frac{1}{6 \sqrt{2}} & \frac{1}{2} & 0 & 0 & 0 & 0 & \sqrt{\frac{17}{35}} & -\frac{18}{\sqrt{1855}} & \frac{17}{6 \sqrt{106}} \\
		\frac{1}{6 \sqrt{2}} & 0 & \frac{1}{2} & -\frac{1}{\sqrt{2}} & \frac{2}{5} & \frac{4 \sqrt{\frac{2}{17}}}{5} & -\frac{1}{2 \sqrt{595}} & -\frac{1}{2 \sqrt{1855}} & -\frac{1}{6 \sqrt{106}} \\
		\frac{1}{6 \sqrt{2}} & 0 & -\frac{1}{2} & 0 & 0 & \frac{5}{\sqrt{34}} & -\frac{1}{2 \sqrt{595}} & -\frac{1}{2 \sqrt{1855}} & -\frac{1}{6 \sqrt{106}} \\
		\frac{1}{6 \sqrt{2}} & 0 & -\frac{1}{2} & 0 & \frac{4}{5} & -\frac{9}{5 \sqrt{34}} & -\frac{1}{2 \sqrt{595}} & -\frac{1}{2 \sqrt{1855}} & -\frac{1}{6 \sqrt{106}} \\
		\frac{1}{6 \sqrt{2}} & 0 & \frac{1}{2} & \frac{1}{\sqrt{2}} & \frac{2}{5} & \frac{4 \sqrt{\frac{2}{17}}}{5} & -\frac{1}{2 \sqrt{595}} & -\frac{1}{2 \sqrt{1855}} & -\frac{1}{6 \sqrt{106}} \\
	\end{array}
	\right),
\end{align}
\normalsize
\small
\begin{align}
	R_2 = \left(
	\begin{array}{ccccccccc}
		0 & -\frac{1}{2 \sqrt{3}} & \frac{1}{2 \sqrt{3}} & -\frac{1}{2 \sqrt{3}} & \frac{1}{2 \sqrt{3}} & -\frac{1}{\sqrt{3}} & 0 & 0 & \frac{1}{\sqrt{3}} \\
		0 & -\frac{1}{2 \sqrt{3}} & \frac{1}{2 \sqrt{3}} & \frac{1}{2 \sqrt{3}} & -\frac{1}{2 \sqrt{3}} & 0 & -\frac{1}{\sqrt{3}} & \frac{1}{\sqrt{3}} & 0 \\
		0 & \frac{1}{\sqrt{3}} & 0 & \frac{1}{\sqrt{3}} & 0 & 0 & 0 & 0 & \frac{1}{\sqrt{3}} \\
		0 & \frac{1}{\sqrt{3}} & 0 & -\frac{1}{\sqrt{3}} & 0 & 0 & 0 & \frac{1}{\sqrt{3}} & 0 \\
		0 & -\frac{1}{\sqrt{15}} & 0 & \frac{1}{\sqrt{15}} & 0 & 0 & \sqrt{\frac{3}{5}} & \frac{2}{\sqrt{15}} & 0 \\
		0 & -\frac{1}{\sqrt{15}} & 0 & -\frac{1}{\sqrt{15}} & 0 & \sqrt{\frac{3}{5}} & 0 & 0 & \frac{2}{\sqrt{15}} \\
		0 & 0 & 0 & \frac{1}{\sqrt{30}} & \sqrt{\frac{5}{6}} & \frac{1}{\sqrt{30}} & -\frac{1}{\sqrt{30}} & \frac{1}{\sqrt{30}} & -\frac{1}{\sqrt{30}} \\
		0 & \frac{1}{\sqrt{30}} & \sqrt{\frac{5}{6}} & 0 & 0 & \frac{1}{\sqrt{30}} & \frac{1}{\sqrt{30}} & -\frac{1}{\sqrt{30}} & -\frac{1}{\sqrt{30}} \\
		1 & 0 & 0 & 0 & 0 & 0 & 0 & 0 & 0 \\
	\end{array}
	\right),
\end{align}
\normalsize
and $\Sigma_2$ is a rectangular matrix with only nonzero elements being:
\begin{align}
	(\Sigma_2)_{1,1} = (\Sigma_2)_{1,11}=\frac{3 \sqrt{3}}{\taus}, \quad  (\Sigma_2)_{2,2} =  (\Sigma_2)_{1,10} = -\frac{3}{\taus}, \quad (\Sigma_2)_{3,1} = \frac{3}{2\taus} = - (\Sigma_2)_{3,11}. 
\end{align}

We then need to estimate the cost of the state preparations corresponding to each column of $L_2$. We summarize the results in Table~\ref{tab:L2columns}, where $\mathrm{USP}_n$ denotes a unitary state preparation over $n$ states. Using Eq.~\eqref{eq:unitarysynthesiscost}, it follows that the state preparation cost for the $n=4$ qubit unitary $L_2$ is
\begin{align}
	G[L_2^{19\sqrt{2}\epsilon}] =  32 \times 57 \log\frac{1}{\epsilon} + 9(8 \times 4 - 12) = 180+1824 \log\frac{1}{\epsilon}.
\end{align}
Now, let us do the same for $R_2$ (for convenience we analyze $R_2^\dag$). We summarize the results in Table~\ref{tab:R2columns}. The overall cost for $R_2 \otimes R_2$ is then
\begin{align}
	G[R_2^{14\sqrt{2}\epsilon} \otimes R_2^{14\sqrt{2}\epsilon}] = 2 \left( 32 \times 42 \log\frac{1}{\epsilon} + 9(8 \times 4 -12)\right) = 360 +2688\log\frac{1}{\epsilon}. 
\end{align}

\begin{table}[b!]
	\centering
	\renewcommand{\arraystretch}{1.35}
	\begin{tabular}{|c|c|c|}
		\hline 
		{\bf Column} & {\bf Operations} & {\bf Cost} \\
		\hline 
		1  & 1 single-qubit $y$-rotation, 1 $\mathrm{USP}_8$   & $3\log\frac{1}{\epsilon}$ \\  \hline 
		2 & $\mathrm{USP}_4$, $(-1)$ phases & 0 \\    \hline 
		3 & $\mathrm{USP}_4$, $(-1)$ phases & 0 \\    \hline 
		4 & $\mathrm{USP}_2$, $(-1)$ phases & 0 \\    \hline 
		5 & 3 single-qubit $y$-rotations & $9 \log\frac{1}{\epsilon}$\\    \hline 
		6 &  4 single-qubit $y$-rotations & $12 \log\frac{1}{\epsilon}$\\    \hline 
		7 & 3 single-qubit $y$-rotations, $\mathrm{USP}_4$ & $9 \log\frac{1}{\epsilon}$ \\    \hline 
		8 & 4 single-qubit $y$-rotations, $\mathrm{USP}_4$ & $12\log\frac{1}{\epsilon}$ \\    \hline 
		9 & 4 single-qubit $y$-rotations, $\mathrm{USP}_2$  & $12\log\frac{1}{\epsilon}$ \\    \hline \hline
		Total & 19 single-qubit $y$-rotations & $57 \log\frac{1}{\epsilon}$ \\ \hline
	\end{tabular}
	\caption{State preparation costs for the columns of $L_2$. Here $\epsilon$ is the per-gate error.}
	\label{tab:L2columns}
\end{table}

\begin{table}[h]
	\centering
	\renewcommand{\arraystretch}{1.35}    
	\begin{tabular}{|c|c|c|}
		\hline 
		{\bf Column} & {\bf Operations} & {\bf Cost} \\
		\hline 
		1  & 2 single-qubit $y$-rotations, 1 $\mathrm{USP}_4$   & $6\log\frac{1}{\epsilon}$ \\  \hline 
		2 & 2 single-qubit $y$-rotations, 1 $\mathrm{USP}_4$, $(-1)$ phases & $6\log\frac{1}{\epsilon}$ \\    \hline 
		3 & $\mathrm{USP}_3$  & $3\log\frac{1}{\epsilon}$ \\    \hline 
		4 & $\mathrm{USP}_3$, $(-1)$ phases & $3\log\frac{1}{\epsilon}$ \\    \hline 
		5 & 2 single-qubit $y$-rotations, Hadamard & $6 \log\frac{1}{\epsilon}$\\    \hline 
		6 &  2 single-qubit $y$-rotations, Hadamard & $6 \log\frac{1}{\epsilon}$\\    \hline 
		7 & 3 single-qubit $y$-rotations, $\mathrm{USP}_4$ & $9 \log\frac{1}{\epsilon}$ \\    \hline 
		8 & 3 single-qubit $y$-rotations, $\mathrm{USP}_4$ & $9 \log\frac{1}{\epsilon}$ \\    \hline 
		9 & \textrm{permutation} & 0 \\    \hline \hline
		Total & 14 single-qubit $y$-rotations, $\mathrm{USP}_3$ & $45 \log\frac{1}{\epsilon}$ \\ \hline
	\end{tabular}
	\caption{State preparation costs for the columns of $R^\dag_2$. We cost $\textrm{USP}_3$ as $3 \log_3(1/\epsilon)$, as it can be realized via a single qubit $y$ rotation followed by a controlled Hadamard. Here $\epsilon$ is the per-gate error.}
	\label{tab:R2columns}
\end{table}

We still need to block-encode $\Sigma_2$. To achieve this, consider the following $90 \times 90$ unitary:
\begin{align}
	U_{\Sigma_2} = \begin{bmatrix}
		\Sigma_2 & \sqrt{I_9-\Sigma_2 \Sigma_2^\dagger}  \\
		\sqrt{I_{81}-\Sigma_2^\dagger \Sigma_2} & -\Sigma_2^\dagger.
	\end{bmatrix}
\end{align}
By direct inspection, $U_{\Sigma_2}$ is an optimal block-encoding of $\Sigma_2$ (the block-encoding prefactor is $6$, which equals $\|\Sigma_2\|$). For simplicity, let us consider the matrix $U^\dag_{\Sigma_2}$. Now, we need to estimate the cost of the state preparations corresponding to each column. We summarize the results in Table~\ref{tab:Sigma2columns}. The overall cost for $U_{\Sigma_2}$ is then:
\begin{align}
	G[U_{\Sigma_2}^{8\sqrt{2}\epsilon}] = 32\times  24 \log\frac{1}{\epsilon} + 90(8 \times 7 -12) = 3960 + 768 \log \frac{1}{\epsilon}. 
\end{align}

\begin{table}[h]
	\centering
	\renewcommand{\arraystretch}{1.35}
	\begin{tabular}{|c|c|c|}
		\hline 
		{\bf Column} & {\bf Operations} & {\bf Cost} \\
		\hline 
		1  & Hadamard   & 0 \\  \hline 
		2 & 2 Hadamards, $(-1)$ phases & 0 \\    \hline 
		3 & 1 single-qubit $y$-rotation, Hamadard, $(-1)$ phases & $3\log\frac{1}{\epsilon}$ \\    \hline 
		4-9 & $\mathrm{permutation}$ & 0 \\    \hline 
		10 & 2 single-qubit $y$-rotations, Hadamard, $(-1)$ phases & $6\log\frac{1}{\epsilon}$ \\   \hline
		11 &  2 single-qubit $y$-rotations, permutation, $(-1)$ phases & $6\log\frac{1}{\epsilon}$ \\    \hline 
		12-18 & permutation & 0 \\    \hline 
		19 & 2 single-qubit $y$-rotations, permutation, $(-1)$ phases & $6\log\frac{1}{\epsilon}$ \\    \hline 
		20 & 1 single-qubit $y$-rotation, permutation, $(-1)$ phases  & $3\log\frac{1}{\epsilon}$ \\    \hline
		21-90 & permutation  & 0 \\    \hline \hline
		Total & 8 single-qubit $y$-rotations & $24 \log\frac{1}{\epsilon}$ \\ \hline
	\end{tabular}
	\caption{State preparation costs for the columns of $U_{\Sigma_2}$. Here $\epsilon$ is the per-gate error.}
	\label{tab:Sigma2columns}
\end{table}

Combining the results, we thus conclude that $T$-cost of a block-encoding of $\tilde{F}_2$ is given by
\begin{align}
	G[U_{\bar{F}_2}^{\epsilon_{F_2}}] = 4500+5280 \log \frac{1}{\epsilon},
\end{align}
with
\begin{equation}
	\epsilon_{F_2}=59\sqrt{2} \epsilon.
\end{equation}


\subsubsection{Final gate cost}
\label{sec:gates_final}

Combining Eqs.~\eqref{eq:UAF_cost} and \eqref{eq:UC_cost}, we arrive at the final $T$-gate cost of approximating $U_{A_F}$ up to error $\epsilon_{\mathrm{{tot}}}$:

\begin{align}
	G[U_{A_F}^{\epsilon_{\mathrm{{tot}}}}] \approx &~ 78 W +28 D N_C +64+102\log\frac{1}{\epsilon} + 24(D(1+4 N_C)+2)\log \re\\
	&+    6N_C \left(n_B+1+\log\frac{1}{\epsilon}\right) + \frac{N_C+2}{2} (14+14n_B+64n_C) \\
	& + \frac{N_C^2+4N_C}{4} \left( 28 + 28(n_B+n_C) + 48 \log\frac{1}{\epsilon} \right) \\
	& +\left[\left(2+\frac{4}{\sqrt{5}}\right)\left(\frac{1+\sqrt{5}}{2}\right)^{N_C}-4]\right]\left( 14+14(n_B+n_C) \right)+2h_{N_C}\\
	& + \frac{2N_C(N_C+2  )}{3} [(2N_C+5) G[U^{\epsilon_{F_1}}_{I+F_1}] + (N_C+1)  G[U^{\epsilon_{F_2}}_{\bar{F}_2}] ],
	\label{eq:gate_final_lastline}
\end{align}
where 
\begin{equation}
	\epsilon_{\mathrm{{tot}}} = 4\epsilon+ \frac{N_C\left[ 6(12+N_C)\epsilon +(N_C+2)((2N_C+5)\epsilon_{F_1} + (N_C+1)\epsilon_{F_2}) \right]}{24},
\end{equation}
whereas $\epsilon_{F_1}$ and $\epsilon_{F_2}$ depend on the problem dimension in the following way:
\begin{equation}
	\epsilon_{F_1}  = \left\{
	\begin{array}{cc} 
		(4\sqrt{2}+2)\epsilon& \quad \quad\mathrm{for}\quad D=1, \\[6pt]
		(28\sqrt{2}+5)\epsilon& \quad\quad \mathrm{for}\quad D=2, 
	\end{array}
	\right.\qquad\qquad
	\epsilon_{F_2}  = \left\{
	\begin{array}{cc}
		(1+2\sqrt{2})\epsilon& \quad \mathrm{for}\quad D=1, \\[6pt]
		59\sqrt{2}\epsilon& \quad \mathrm{for}\quad D=2.
	\end{array}
	\right.
\end{equation}
while $G[U^{\epsilon_{F_1}}_{I+F_1}]$ and $G[U^{\epsilon_{F_2}}_{\bar{F}_2}]$ are given by
\begin{aligns}
	G[U^{\epsilon_{F_1}}_{I+F_1}] & = \left\{
	\begin{array}{cc} 
		280 + 480 \log\frac{1}{\epsilon}& \quad \quad\mathrm{for}\quad D=1, \\[6pt]
		6412 + 60291 \log\frac{1}{\epsilon}& \quad\quad \mathrm{for}\quad D=2, 
	\end{array}
	\right. \\
	G[U^{\epsilon_{F_2}}_{\bar{F}_2}] & = \left\{
	\begin{array}{cc}
		154+240 \log\frac{1}{\epsilon}& \quad\quad \mathrm{for}\quad D=1, \\[6pt]
		4500+5280 \log \frac{1}{\epsilon}& \quad \quad \mathrm{for}\quad D=2.
	\end{array}
	\right.
\end{aligns}

Although we took care to cost all elements of the circuit, it turns out that the bulk of the cost comes from the last term in Eq.~\eqref{eq:gate_final_lastline}. For example, for $D=2$, $N_C=2$, $\epsilon=10^{-6}$, $W=10$ and $\re\in(0,10^{10}]$, it constitutes more than 99.9\% of the total cost, and this fraction only increases for higher Carleman truncation orders $N_C$ and lower approximation error $\epsilon$. Thus, we can use a much simpler estimate for the total $T$-gate cost of our circuit given by Eq.~\eqref{eq:gate_final_lastline}, resulting in the following $T$-gate cost of $U_{A_F}$: 
\begin{equation}
	G[U^{\epsilon}_{A_F}] \approx \left\{ 
	\begin{array}{ll}
		N_C(N_C+2  ) \left[(800 N_C+1760) \log \frac{K_1}{\epsilon} + (476 N_C+1036)  \right] &\quad\mathrm{for}\quad D=1,\\[6pt]
		N_C(N_C+2) \left[(83908 N_C+204490)\log\frac{K_2}{\epsilon}+ \frac{8}{3}(4331 N_C+9140)\right]   &\quad\mathrm{for}\quad D=2,
	\end{array}
	\right.
\end{equation}
with
\begin{aligns}
	K_1 & = \frac{1}{24} \left[ 96 + (94+44\sqrt{2})N_C+(27+42\sqrt{2})N_C^2 + (5+10\sqrt{2})N_C^3 \right],\\
	K_2 & = \frac{1}{24} \left[ 96 + (122+398\sqrt{2})N_C +(51+429\sqrt{2})N_C^2 + (10+115\sqrt{2})N_C^3\right].
\end{aligns}
We present the $T$-gate cost dependence on the truncation order $N_C$ and approximation level $\epsilon$ for both cases $D=1$ and $D=2$ in Fig.~\ref{fig:gates}.

\begin{figure}[t!]
	\centering
	\includegraphics[width=0.49\linewidth]{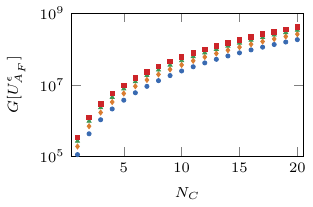}
	\includegraphics[width=0.49\linewidth]{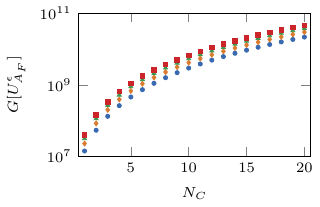}
	\caption{\textbf{$T$-gate cost of $U^\epsilon_{A_F}$.} The estimates for the number $G[U^\epsilon_{A_F}]$ of $T$ gates needed to prepare a circuit $\epsilon$-close to a unitary $U_{A_F}$ encoding the LBE problem with dimension $D=1$ (left) and $D=2$ (right) as function of Carleman truncation order $N_C$. Different symbols, bottom to top, correspond to the allowed error level $\epsilon=10^{-3k}$ for $k\in\{1,2,3,4\}$. Note that the dependence of $G[U^\epsilon_{A_F}]$ on the Reynolds number $\re$ is only logarithmic, making the presented data applicable with a very good approximation to all $\re\geq 1$.}
	\label{fig:gates}
\end{figure}

It is thus clear that the $T$-gate cost $G[U^{\epsilon}_{A_F}]$ is directly almost independent of the Reynolds number $\re$ or the parameter $W$ controlling the success probability of getting the Carleman state at the final time $T^\star$. Instead, it has a cubic dependence on the Carleman truncation error $N_C$ and logarithmic dependence on the inverse of the gate approximation error $\epsilon$ (note that this also holds for the case $D=3$). Thus, if we decide to run our algorithm for a fixed $N_C$ and $\epsilon$, the difference between the query cost $q_Q$ and the total $T$-gate cost $q_QG[U^{\epsilon}_{A_F}]$ becomes a constant factor overhead. For example, for low $N_C\leq 5$ and error $\epsilon=10^{-6}$, it is roughly $10^6$ for $D=1$ and $10^8$ for $D=2$, so one can expect it would be around $10^{10}$ for $D=3$. Note, however, that these are just rough estimates that do not include any compilation optimization, and so one can expect that the actual numbers can be a few orders of magnitude better.

Finally, assuming we operate below the threshold Reynolds number ($\re\leq \re_T$) and modeling the Carleman truncation error $\epsilon_C$ via Eq.~\eqref{eq:error_model}, we can replace the $O(N_C^3)$ scaling with $O(\Gamma^{-3}\log^3 \epsilon_C)$, where $\Gamma<0$ is the convergence parameter depending on the Reynolds number. We thus see that the gate cost depends poly-logarithmically on the total solution error, as well as indirectly on the Reynolds number via the cubic dependence on the convergence parameter $\Gamma$. 


\section{Discussion and conclusions}
\label{sec:conclusions}

Having analyzed the performance of our quantum algorithm for the LBE problem, we are now in a position to compare it with classical LBE algorithms. We begin with a brief comparison between classical and quantum memory requirements. In a classical simulation, it is necessary to store the local LBE state vectors at each lattice site. Given that there are $N=N_x^D$ lattice sites, and parameterizing the number of discretization points per direction as before to be $N_x=\re^\beta$, we get that the number $n_c$ of memory bits needed scales as
\begin{equation}
	n_c = O(\re^{\beta D}).
\end{equation}
On the other hand, the total number of data qubits $n_D$ required by our algorithm, given by Eq.~\eqref{eq:data_qubits}, and ancilla qubits $n_A$, specified by Eq.~\eqref{eq:ancilla_qubits}, scales as 
\begin{equation}
	n_D + n_A = O(N_C\log \re).
\end{equation}
Note that the dependence on $N_C$ translates via the error model from Eq.~\eqref{eq:error_model} into a dependence on $\log\frac{1}{\epsilon_C}$. Hence, for a fixed acceptable error level, we get an exponential memory improvement compared to classical algorithms.

Next, we will measure the classical computational complexity $q_c$ using the number of updates of the local LBE state vector that have to be performed during the evolution. Given that there are $\re^{\beta D}$ lattice sites and $T^\star$ time steps, we have
\begin{equation}
	q_c = O(\re^{\beta D} T^\star).
\end{equation}
Moreover, note that $T^\star$ must also scale at least as $\re^\beta$, as otherwise the lattice velocity would diverge to infinity with growing $\re$. Thus, the classical complexity $q_c$ grows with the Reynolds number $\re$ as
\begin{equation}
	q_c = O(\re^{\beta(D+1)}),
\end{equation}
which becomes a familiar computational fluid dynamics scaling of $O(\re^3)$ for a three-dimensional system and a choice of $\beta=3/4$ guaranteeing resolution of the Kolmogorov microscale~\cite{kruger2016lattice}              .

We will compare the classical complexity $q_c$ with the quantum query complexity $q_Q$ and not with the total number of $T$-gates. Two main reasons for that are as follows. First, a single classical query is already a complex task, consisting of many elementary logical operations needed to update a local LBE state vector. Thus, it would be unfair to treat on equal footing such a task and the use of an elementary single-qubit $T$ gate. Second, we will be mostly interested in the asymptotic scaling of classical and quantum complexities, and as we have explained in the previous section, the difference between measuring in queries and in $T$-gates is just a constant factor overhead (i.e., if we assume a fixed $N_C)$.

Let us then compare the classical complexity $q_c$ of outputting the final LBE state with the query complexity $q_Q$ for outputting a coherent encoding of the final Carleman state (with a fixed truncation order $N_C$) for the shifted LBE problem. If the fraction $q_c/q_Q>1$, we can thus speak of a quantum advantage. First, recalling Eq.~\eqref{eq:query_lower}, we see that this fraction will at most scale as
\begin{equation}
	\frac{q_c}{q_Q} =O\left(\re^{\frac{\beta D}{2}}\right).
\end{equation}
Hence, for the problem of dimension $D=3$ and with the resolution parameter $\beta=3/4$ to resolve the Kolmogorov microscale, the best one can hope for is an improvement by a factor of $\re^{9/8}$, replacing a classical cubic dependence $O(\re^3)$ with the quantum slightly-lower-than-quadratic dependence $O(\re^{1.875})$. Instead, if we use our upper bound on $q_Q$ from Eq.~\eqref{eq:query_upper}, and the numerically extracted data on the power $\chi$ from Eq.~\eqref{eq:chi} for the case $\beta=3/4$, we get that at worst
\begin{equation}
	\frac{q_c}{q_Q} = O(\re^\lambda),
\end{equation}
with
\begin{equation}
	D=1:\quad\lambda = \left\{\begin{array}{cc}
		0.333&\mathrm{~for}\quad N_C=1,  \\
		-0.191&\mathrm{~for}\quad N_C=2,  \\
		-0.783&\mathrm{~for}\quad N_C=3, \\
		-1.292&\mathrm{~for}\quad N_C=4, 
	\end{array} 
	\right.\qquad\qquad
	D=2:\quad\lambda = \left\{\begin{array}{cc}
		0.662&\mathrm{~for}\quad N_C=1,  \\
		0.314&\mathrm{~for}\quad N_C=2,  
	\end{array} 
	\right.
\end{equation}
where we note that one can achieve quantum advantage for positive $\lambda$. Thus, while it is in principle possible for the complexity of a quantum algorithm to scale more favorably with the Reynolds number $\re$ than for classical algorithms, it is limited to low values of the Carleman truncation order $N_C$. Via the error model from Eq.~\eqref{eq:error_model}, this then limits applications to high-error outputs or low Reynolds number cases (values of $\re$ far from the threshold $\re_T$ require lower $N_C$ for the same error $\epsilon_C$). In Fig.~\ref{fig:query}, we present the dependence of $q_Q$ on the Reynolds number $\re$ and truncation order $N_C$, highlighting the region of potential quantum advantage (under the condition that $\re$ is below a threshold value $\re_T$).

\begin{figure}[t!]
	\centering
	\includegraphics[width=0.49\linewidth]{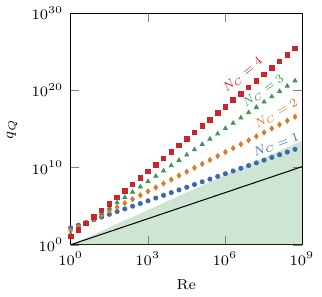}
	\includegraphics[width=0.49\linewidth]{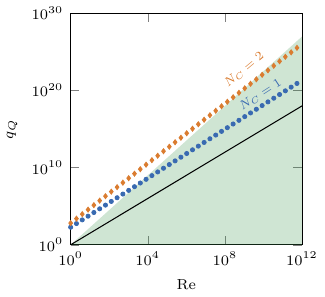}
	\caption{\textbf{Query complexity.} The lower bound from Eq.~\eqref{eq:query_lower} (black line) and upper bounds from Eq.~\eqref{eq:query_upper} for different values of Carleman truncation order $N_C$ (different symbols) of the query complexity $q_Q$ of our quantum LBE algorithm with $\beta=3/4$ and for systems with dimension $D=1$ (left) and $D=2$ (right). The green highlighted region indicates potential for quantum advantage, when $q_Q$ is smaller than classical complexity~$q_c$. In the expressions of the form $O(\re^x)$ for $q_c$ and the lower bound for $q_Q$, we assumed a constant 1 and plotted $\re^x$.}
	\label{fig:query}
\end{figure}

So far, we have discussed the complexity of outputting the coherently encoded solution. As discussed in Sec.~\ref{sec:info_extraction}, if we want to extract classical information about the drag force, we need to incur additional complexity cost scaling as $q_M=O(\re^{\beta/2})$. Then, the best we can hope for is
\begin{equation}
	\frac{q_c}{q_Mq_Q} =O\left(\re^{\frac{\beta (D-1)}{2}}\right).
\end{equation}
Hence, there is no chance for a quantum advantage for $D=1$, whereas for $D=3$ one can get an improvement by a factor of $\re^{\beta}$ at best. For the case of $\beta=3/4$ this replaces the classical scaling $O(\re^3)$ with the quantum $O(\re^{2.25})$. Moreover, if we use our numerical estimates on the behavior of the condition number from Eq.~\eqref{eq:chi}, then all potential for quantum advantage is gone except for the case $D=2$ and $N_C=1$, where one could get an improvement by a factor of $\re^{0.320}$. However, let us emphasize that this extra complexity cost arises from the boundary nature of the drag force observable, and certain bulk observables of interest may avoid such costs, potentially offering greater quantum advantage.

Finally, let us reiterate that for the quantum algorithm to work, the Carleman truncation error $\epsilon_C$ needs to converge. Since our analysis in Sec.~\ref{sec:error} showed that such convergence happens only for Reynolds numbers $\re$ below a threshold value $\re_T$, the actual value of $\re_T$ for industrially-relevant problems is crucial. This is because one may face two opposing constraints: on the one hand, one needs large $\re$ for the quantum algorithm to overcome constant overheads and become more efficient than the classical algorithm; on the other hand, $\re$ may need to be low enough to guarantee error convergence. Also note that although the inclusion of boundaries, driving, and different initial states may potentially increase $\re_T$ (see Ref.~\cite{jennings2025simulating} for more details), we do not expect this effect to be significant. Hence, if the rescaling of the lattice velocity does not push $\re_T$ high enough, the applicability of the Carleman approach to simulating highly turbulent flows may be limited.

\newpage

\textbf{Authors contributions:} Authors are listed alphabetically within each affiliation. DJ contributed to the blocking-encoding and compilation theory of the quantum algorithm. KK identified complexity bottlenecks, developed the new model, designed the quantum algorithm, analyzed its complexity, and performed numerical simulations. ML identified complexity bottlenecks, supported the development of the new model, the design of the quantum algorithm, and its complexity analysis. RA performed numerical studies of truncation errors and operator norms as input to complexity bounds. EM performed numerical comparison between LB and Carleman approximation. SR supported the definition of the end user requirements, and the application use cases. KK wrote the paper with the assistance of DJ and ML. 

\bigskip

\textbf{Acknowledgments:} We would like to thank Paul Mannix for his invaluable support and input on the physics of the model and CFD methods. We also want to thank Thomas Astoul, Thomas Bendokat, Giovanni Giustini, and Alessandro De Rosis for helpful discussions concerning fluid dynamics and the lattice Boltzmann method. Finally, we are grateful to Alice Barthe, Sam Morley-Short, and Trevor Vincent for their assistance with numerical simulations.

\newpage


\appendix


\section{Notation}
\label{app:notation}

\begin{table}[h]
	\centering
	\setlength{\tabcolsep}{8pt}
	\renewcommand{\arraystretch}{1.0}
	\begin{tabular}{>{\centering\arraybackslash}m{1.2cm}l p{9.5cm}}
		\toprule
		& \textbf{Symbol} & \textbf{Description} \\
		\midrule
		\multirow{11}{*}{\rotatebox{90}{\textbf{Fluid system}}}
		&   $D$             & Number of spatial dimensions  \\
		&   $\v{r}$         & Position in space\\
		&   $t$             & Moment in time \\
		&   $L_x,L_y,L_z$   & System size along $x$, $y$, and $z$ \\
		&   $T$             & Total evolution time \\
		&   $\v{u}$         & Fluid velocity \\
		&   $c_s$           & Speed of sound \\
		&   $L$             & Characteristic length of the system \\
		&   $\nu$           & Kinematic viscosity of the fluid \\
		&   $\re$           & Reynolds number \\
		&   $\ma$           & Mach number \\
		\midrule
		\multirow{16}{*}{\rotatebox{90}{\textbf{Lattice Boltzmann model}}}
		&   $Q$             & Number of discrete velocities \\
		&   $\v{e}_m,\v{e}_m^\star$ & Discrete velocities in physical and lattice units \\
		&   $\v{w}$         & Vector of Maxwell weightings for discrete velocities \\
		&   $\v{f}$, $\v{g}$ & LBE and shifted LBE state vectors\\
		&   $\v{f}^\eq,\v{g}^\eq$ & Equilibrium states for LBE and shifted LBE\\
		&   $N_x,N_y,N_z$   & Number of spatial discretization points along $x$, $y$, and $z$  \\
		&   $N$             & Total number of spatial discretization points \\
		&   $\v{r}^\star$   & Position in lattice units \\
		&   $t^\star$       & Time step\\
		&   $T^\star$       & Total number of time steps \\
		&   $\Delta x$      & Distance between neighboring lattice sites \\
		&   $\Delta t$      & Duration of a time step \\
		&   $\beta$         & Spatial resolution parameter \\
		&   $\v{u}^\star$   & Fluid velocity in lattice units \\
		&   $u^\star_\mx$   & Maximal fluid velocity during the evolution (in lattice units)\\
		&   $u_0^\star$     & Lattice velocity constant controlling the norm of the solution \\
		\midrule
		\multirow{13}{*}{\rotatebox{90}{\textbf{Quantum algorithm}}}
		&   $d$                 & Dimension of the LBE system  \\
		&   $d_C$               & Dimension of the Carleman-embedded LBE system\\
		&   $N_C$               & Carleman truncation order \\
		&   $\epsilon_C$        & Carleman truncation error\\
		&   $\v{y},\v{y}_k$     & Carleman vector and its $k$-th block\\
		&   $F_1,F_2$           & Linear and quadratic collision matrices \\
		&   $\C,\S$             & Carleman collision and streaming matrices \\
		&   $\kappa_A$          & Condition number of a matrix $A$ \\
		&   $\alpha_A$          & Block-encoding prefactor of a matrix $A$ \\
		&   $A_H, \v{Y}_H$      & Linear system and its solution for the LBE history state \\
		&   $A_F, \v{Y}_F$      & Linear system and its solution for the LBE final state \\
		&   $W$                 & Number of qubits in the waiting register \\
		\bottomrule
	\end{tabular}
	\caption{Symbols appearing in the text and their meaning.}
\end{table}


\section{Transforming DBE into an ODE}
\label{app:DBE_as_ODE}

With the discretization described by Eq.~\eqref{eq:discretization}, the streaming term in DBE, Eq.~\eqref{eq:DBE}, can be represented using central differencing as follows:
\begin{align}
	-\v{e}_m\cdot \nabla f_m(\v{r},t) \qquad \rightarrow \qquad -\sum_{i\in\{x,y,z\}} (\v{e}_m)_i \frac{f_{m,\v{r}+\v{a}_i}(t)-f_{m,\v{r}-\v{a}_i}(t)}{2 \Delta x},
\end{align}
where we introduced $\v{a}_x=(1,0,0)$, $\v{a}_y=(0,1,0)$ and $\v{a}_z=(0,0,1)$. This can be written concisely as
\begin{align}
	-\v{e}_m\cdot \nabla f_m(\v{r},t) \qquad \rightarrow \qquad G\v{f}(t),
\end{align}
where $G$ is a $d\times d$ matrix describing the streaming generator with matrix elements given by
\begin{equation}
	\label{eq:G}
	(G)_{m,\v{r} ; m_1,\v{r}_1} = - \delta_{m;m_1}\sum_{i\in\{x,y,z\}} \frac{(\v{e}_m)_i (\delta_{\v{r}+\v{a}_i;\v{r}_1}-\delta_{\v{r}-\v{a}_i;\v{r}_1})}{2 \Delta x},
\end{equation}
where $\delta$ denotes a Kronecker delta.

Next, one needs to rewrite the collision operator explicitly as a function of $\v{f}$. Using the fact that for the considered discrete velocity models the lattice speed of sound is given by $c^\star_s=1/\sqrt{3}$, the expression for the equilibrium distribution from Eq.~\eqref{eq:eq} can be rewritten as:
\begin{align}
	\label{eq:eq_lat}
	f_m^\eq(\v{r},t)= \rho(\v{r},t) w_m \left(1 + 3 \v{e}_m^\star\cdot \v{u}^\star(\v{r},t) + \frac{9}{2} (\v{e}_m^\star\cdot \v{u}^\star(\v{r},t))^2 -\frac{3}{2}|\v{u}^\star(\v{r},t)|^2\right).
\end{align}
Then, the authors of Refs.~\cite{li2025potential,penuel2024feasibility} assume a weakly compressible flow, which means that the deviation of $\rho$ from some reference density is small. Normalizing such reference density to 1, the following approximation is then justified:
\begin{align}
	\label{eq:compressible}
	\frac{1}{\rho(\v{r},t)}\approx 2-\rho(\v{r},t).
\end{align}
Using it in Eq.~\eqref{eq:eq_lat}, the equilibrium distribution can be expressed as a third order polynomial in~$\v{f}$ as follows (for clarity we omit the dependence on $\v{r}$ and $t$):
\begin{equation}
	\begin{aligned}
		f_m^\eq =&  w_m\left(\sum_{m_1}^Q f_{m_1}+ 3 \v{e}_m^\star\cdot \sum_{m_1=1}^Q \v{e}_{m_1}^\star f_{m_1}\right)\\
		&+2w_m\left(\frac{9}{2}\left(\v{e}_m^\star\cdot\sum_{m_1=1}^Q\v{e}_{m_1}^\star f_{m_1}\right)^2-\frac{3}{2}\left(\sum_{m_1=1}^Q \v{e}_{m_1}^\star f_{m_1}\right)^2\right)\\
		&-w_m\left(\sum_{m_1=1}^Q f_{m_1}\right)\left(\frac{9}{2}\left(\v{e}_m^\star\cdot\sum_{m_1=1}^Q\v{e}_{m_1}^\star f_{m_1}\right)^2-\frac{3}{2}\left(\sum_{m_1=1}^Q \v{e}_{m_1}^\star f_{m_1}\right)^2\right),
	\end{aligned}
\end{equation}
where we explicitly separated the expression into linear, quadratic and cubic terms.

Using the above, the BGK collision operator from Eq.~\eqref{eq:BGK} can be replaced by
\begin{equation}
	\Omega(\v{r},t) \qquad \rightarrow \qquad  F_1\v{f}(t) + F_2 \v{f}^{\otimes 2}(t)+F_3 \v{f}^{\otimes 3}(t),
\end{equation}
where $F_k$ are $d\times d^k$ matrices describing linear, quadratic and cubic contributions with matrix elements given by
\begin{aligns}
	\label{eq:F1_DBE}
	&(F_1)_{m,\v{r} ; m_1,\v{r}_1} = \frac{1}{\tau}\delta_{\v{r};\v{r}_1}(-\delta_{m;m_1}+w_m+3w_m E_{m;m_1}),\\[12pt]
	\label{eq:F2_DBE}
	&(F_2)_{m,\v{r} ; m_1,\v{r}_1,m_2,\v{r}_2} = \frac{w_m}{\tau}\delta_{\v{r};\v{r}_1}\delta_{\v{r};\v{r}_2}\left(9E_{m;m_1}E_{m;m_2}-3E_{m_1;m_2}\right),\\[12pt]
	\label{eq:F3_DBE}
	&(F_3)_{m,\v{r} ; m_1,\v{r}_1,m_2,\v{r}_2,m_3,\v{r}_3} =-\frac{w_m}{2\tau}\delta_{\v{r};\v{r}_1}\delta_{\v{r};\v{r}_2}\delta_{\v{r};\v{r}_3}\left(9E_{m;m_1}E_{m;m_2}-3E_{m_1;m_2}\right),
\end{aligns}
and where we have introduced the following $Q\times Q$ matrix:
\begin{align}
	E_{m;m_1}:= \v{e}_m^\star\cdot \v{e}_{m_1}^\star.
\end{align}


\section{Vanishing Carleman error for purely collisional DBE}
\label{app:vanishing_carleman}

Start with Eq.~\eqref{eq:DBEmat} with a single lattice site and no streaming:
\begin{align}
	\label{eq:DBEmat_no_streaming}
	\frac{d\v{f}(t)}{dt} = F_1\v{f}(t) + F_2 \v{f}^{\otimes 2}(t)+F_3 \v{f}^{\otimes 3}(t),
\end{align}
where $F_k$ are $Q\times Q^k$ from Eqs.~\eqref{eq:F1_DBE}-\eqref{eq:F3_DBE}. Given these small sizes, it can be verified by direct calculations that the linear collision matrix $F_1$ has only two degenerate eigenvalues: 0 and $-1/\tau$. Denoting by $V$ the matrix that diagonalizes $F_1$, we thus have
\begin{equation}
	V^{-1} F_1 V = -\frac{1}{\tau} \Pi_-, 
\end{equation}
where $\Pi_-$ is the projector onto the negative eigenspace, and we have $\Pi_0+\Pi_-=I$ with $\Pi_0$ denoting the projector onto the zero eigenspace of  $V^{-1} F_1 V $. Moreover, as can also be verified by direct calculations, the action of matrices $V^{-1}F_k V^{\otimes k}$ for $k$ being 2 or 3 is to send vectors from $\Pi_0^{\otimes k}$ to $\Pi_-$. In other words, their kernel is the complement of $\Pi_0^{\otimes k}$, and their image is $\Pi_-$. This then leads to a set of algebraic relations between matrices $F_k$. In particular, we have:
\begin{aligns}
	&F_1 F_2= - F_2,\\
	&F_1 F_3= - F_3,\\
	&F_2(F_1\otimes I) = F_2(I\otimes F_1) = 0,\\
	&F_3(F_1\otimes I^{\otimes 2})= F_3(I\otimes F_1\otimes I) = F_3(I^{\otimes 2}\otimes F_1) = 0,\\
	&\exp(F_1x) F_2= e^{-x} F_2,\\
	&\exp(F_1x) F_3= e^{-x} F_3,\\
	&F_2(\exp(F_1x)\otimes I) = F_2(I\otimes \exp(F_1x)) = F_2,\\
	&F_3(\exp(F_1x)\otimes I^{\otimes 2})= F_3(I\otimes \exp(F_1x)\otimes I) = F_3(I^{\otimes 2}\otimes \exp(F_1x)) = F_3,\\
	&F_2(F_2\otimes I)=F_2(I\otimes F_2)=0,\\
	&F_2(F_3\otimes I)=F_2(I\otimes F_3)=0,\\
	&F_3(F_2\otimes I^{\otimes 2})=F_3(I^{\otimes 2}\otimes F_2)=0,\\
	&F_3(F_3\otimes I^{\otimes 2})=F_3(I\otimes F_3\otimes I)=F_3(I^{\otimes 2}\otimes F_3)=0.
\end{aligns}

We now claim that the following is a solution to Eq.~\eqref{eq:DBEmat_no_streaming}:
\begin{equation}
	\label{eq:DBE_postulated}
	\v{f}(t) = e^{F_1 t}\v{f}(0) + (1-e^{-t})\left[F_2\v{f}^{\ot 2}(0) + F_3\v{f}^{\ot 3}(0)\right],
\end{equation}
whose derivative can be straightforwardly calculated to be:
\begin{equation}
	\label{eq:explicit_derivative}
	\frac{d\v{f}(t)}{dt} = F_1 e^{F_1 t}\v{f}(0) + e^{-t}\left[F_2\v{f}^{\ot 2}(0) + F_3\v{f}^{\ot 3}(0)\right].
\end{equation}
Indeed, substituting the postulated solution to Eq.~\eqref{eq:DBEmat_no_streaming}, we get
\begin{align}
	\frac{d\v{f}(t)}{dt} =&~ 
	F_1\left[e^{F_1t}\v{f}(0)+(1-e^{-t}) F_2 \v{f}^{\ot 2}(0) +(1-e^{-t}) F_3 \v{f}^{\ot 3}(0)\right] \\
	& +F_2 \left[e^{F_1t}\v{f}(0)+(1-e^{-t})F_2 \v{f}^{\ot 2}(0) +(1-e^{-t})F_3 \v{f}^{\ot 3}(0) \right]^{\ot 2} \\
	& +F_3 \left[e^{F_1t}\v{f}(0)+(1-e^{-t})F_2 \v{f}^{\ot 2}(0) +(1-e^{-t})F_3 \v{f}^{\ot 3}(0) \right]^{\ot 3},
\end{align}
which can be simplified using the algebraic identities introduced above to
\begin{align}
	\frac{d\v{f}(t)}{dt} =&~ 
	F_1 e^{F_1t}\v{f}(0)-(1-e^{-t}) F_2 \v{f}^{\ot 2}(0) -(1-e^{-t}) F_3 \v{f}^{\ot 3}(0) \\
	& +F_2 \v{f}^{\ot 2}(0)\\
	& +F_3 \v{f}^{\ot 3}(0),
\end{align}
which in turn is exactly the derivative from Eq.~\eqref{eq:explicit_derivative}.

Next, let us introduce the Carleman variables,
\begin{equation}
	\v{y}_k(t) = \v{f}^{\ot k}(t),
\end{equation}
and assume we truncate the Carleman linearization at level $N_C$. Then, for all $k\leq N_C-2$ we have:
\begin{aligns}
	\frac{d\v{y}_k(t)}{dt} &= F_1^{(k)} \v{y}_k(t) + F_2^{(k)} \v{y}_{k+1}(t) + F_3^{(k)} \v{y}_{k+2}(t),\\
	\frac{d\v{y}_{N_C-1}(t)}{dt} &= F_1^{(N_C-1)} \v{y}_{N_C-1}(t) + F_2^{(N_C-1)} \v{y}_{N_C}(t),\\
	\frac{d\v{y}_{N_C}(t)}{dt} &= F_1^{(N_C)} \v{y}_{N_C}(t),
\end{aligns}
where 
\begin{equation}
	F_l^{(k)} = \sum_{j=0}^{k-1} I^{\ot j} \ot F_l \ot I^{\ot k- i -1}.
\end{equation}
We can now write a formal solution of the above:
\begin{aligns}
	{\v{y}}_k(t) &= e^{F_1^{(k)} t} \v{y}_k(0) + \int_0^t ds e^{F_1^{(k)}(t-s)} [F_2^{(k)}\v{y}_{k+1}(s) + F_3^{(k)}\v{y}_{k+2}(s)],\\
	{\v{y}}_{N_C-1}(t) &= e^{F_1^{(N_C-1)} t} \v{y}_{N_C-1}(0) + \int_0^t ds e^{F_1^{(N_C-1)}(t-s)} F_2^{(N_C-1)}\v{y}_{N_C}(s),\\
	{\v{y}}_{N_C}(t) &= e^{F_1^{(N_C)} t} \v{y}_{N_C}(0).
\end{aligns}
Assuming $N_C\geq 3$ and substituting such solutions for $\v{y}_2$ and $\v{y}_3$ into the solution for $\v{y}_1$, while using the algebraic identities listed above, one obtains
\begin{equation}
	\v{y}_1(t) = e^{F_1 t}\v{y}_1(0) + (1-e^{-t})\left[F_2\v{y}_2(0) + F_3\v{y}_3(0)\right]. 
\end{equation}
Finally, since for the truncated system at time $t=0$ we have $\v{y}_k(0)=\v{f}^{\ot k}(0)$, comparing with Eq.~\eqref{eq:DBE_postulated}, we conclude that
\begin{equation}
	\v{y}_1(t) = \v{f}(t). 
\end{equation}
Thus, whenever $N_C\geq 3$, the Carleman linearization procedure reproduces exactly the evolution of one copy of the DBE state vector with a single lattice site.


\section{Incompressible LBE}
\label{app:incompressible}

When $\ma \ll 1$, compressibility effects become negligible and it is useful to follow the approach of Ref.~\cite{he1997lattice}. It proposes a modified collision integral that recovers an approximation of the incompressible Navier-Stokes equation:
\begin{subequations}
	\begin{align}
		\label{eq:Momentum}
		\frac{\partial \v{u}}{\partial t} + \left( \v{u} \cdot \nabla \right) \v{u} &= -\frac{1}{\rho_0} \nabla p + \nu \nabla^2 \v{u},\\
		\label{eq:divU}
		\nabla \cdot \v{u} & = 0,
	\end{align}
\end{subequations}
where $\rho_0$, $\nu$, and $p$ are the constant reference density, kinematic viscosity, and pressure, respectively (recall that in the main text we set $\rho_0=1$). In order to recover this approximation, one needs to modify the equilibrium distribution from Eq.~\eqref{eq:eq}, so that only the density variations that do not strongly violate Eq.~\eqref{eq:divU} are retained. Expanding the density as $\rho = \rho_0 + \delta \rho$ with $|\delta \rho| \ll \rho_0$ and letting the pressure fluctuation be given by $p = c_s^2 \delta \rho$, we can substitute these expressions into Eq.~\eqref{eq:eq} to arrive at 
\begin{align}
	f_m^{\mathrm{eq}}(\v{r},t)= w_m \left[ \rho_0 + \frac{p}{c_s^2} + \rho_0 \left( \frac{\v{e}_m\cdot \v{u}(\v{r},t)}{c_s^2} + \frac{(\v{e}_m\cdot \v{u}(\v{r},t))^2}{2c_s^4} -\frac{|\v{u}(\v{r},t)|^2}{2c_s^2}\right) \right] + {O}(\ma^3).
\end{align}
In addition to terms proportional to $\v{u}$, the term $p \sim {O}(\ma^2)$ must retained as in the incompressible Navier-Stokes equation the pressure term corrects for the divergences produced by the convective nonlinearity (see, e.g., Chapter~1 of Ref.~\cite{doering1995applied}). However, as $w_m \rho_0 \sim {O}(1)$ is constant, we can also shift the density $f_m - w_m\rho_0 \;  \to \; g_m$, such that the shifted equilibrium density becomes
\begin{align}
	g_m^{\mathrm{eq}}(\v{r},t)= w_m \left[ \frac{p}{c_s^2} + \rho_0 \left( \frac{\v{e}_m\cdot \v{u}(\v{r},t)}{c_s^2} + \frac{(\v{e}_m\cdot \v{u}(\v{r},t))^2}{2c_s^4} -\frac{|\v{u}(\v{r},t)|^2}{2c_s^2}\right) \right] + {O}(\ma^3),
\end{align}
with the pressure and velocity given by
\begin{equation}
	p(\v{r},t):= c_s^2 \sum_{m=1}^Q g_m(\v{r},t), \quad
	\v{u}(\v{r},t):= \frac{1}{\rho_0}\sum_{m=1}^Q g_m(\v{r},t)\v{e}_m.
\end{equation}

Therefore, at the expense of discarding weakly compressible effects, which in low Mach number flows are already negligible, we can obtain an ${O}(\ma^2)$ incompressible LBE formulation whose collision integral is quadratic in $g_m$. As discussed in Chapter~3 of Ref.~\cite{guo2013lattice}, this formulation recovers exactly the momentum equation, Eq.~\eqref{eq:Momentum}, but approximates the divergence free condition, Eq.~\eqref{eq:divU}, as 
\begin{equation}
	\nabla \cdot \v{u} = \frac{1}{c_s^2} \frac{\partial p}{\partial t}, 
\end{equation}
where the right hand side is ${O}(\ma^2)$. Although this approach is clearly suited for steady or slowly varying flows, in contrast the collision integral proposed by Ref.~\cite{guo2000lattice} (also quadratic in $g_m$) is exact and therefore preferable for time-dependent flows. Because the collision integral of Ref.~\cite{guo2000lattice} involves additional coefficients and is more complex, we have restricted this work to the simpler formulation of Ref.~\cite{he1997lattice}. 


\section{Carleman error using RMSE}
\label{app:rmse}

In order to quantify the Carleman error based on root mean square error, we first convert $\v{g}(t^\star)$ and $\v{y}_1(t^\star)$ into $\bar{\v{f}}(t^\star)$ and its Carleman approximation $\tilde{\bar{\v{f}}}(t^\star)$ using Eq.~\eqref{eq:g}. Next, following Refs.~\cite{sanavio2024lattice,turro2025practical}, for every discrete velocity direction $m$, we first calculate the root mean square error over the lattice sites:
\begin{equation}
	\mathrm{RMSE}(m,t^\star) = \sqrt{\frac{1}{N}\sum_{\v{r}^\star} \left( 1- \frac{\tilde{\bar{f}}_{m}(\v{r}^\star,t^\star)}{\bar{f}_{m}(\v{r}^\star,t^\star)}\right)^2}.
\end{equation}
We then compute the uniform mean over discrete velocities:
\begin{equation}
	\langle \mathrm{RMSE}(t^\star) \rangle = \frac{1}{Q} \sum_{m=1}^Q \mathrm{RMSE}(m,t^\star).
\end{equation}
Finally, we define the RMSE-based Carleman truncation error as the maximal value of $\langle \mathrm{RMSE}(t^\star) \rangle$ during the whole evolution:
\begin{equation}
	\epsilon^{\mathrm{RMSE}}_C := \max\limits_{t^\star\in\{1,\dots,T^\star\}} \langle \mathrm{RMSE}(t^\star) \rangle.
\end{equation}
In Fig.~\ref{fig:error_rmse}, we present the results of our numerical simulations described in Sec.~\ref{sec:error} using this RMSE-based measure of the Carleman error.

\begin{figure}[t!]
	\centering
	\includegraphics[width=0.45\linewidth]{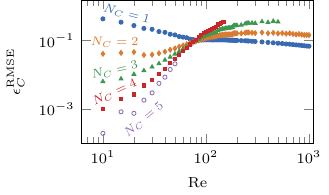}\hspace{5cm}
	\includegraphics[width=0.45\linewidth]{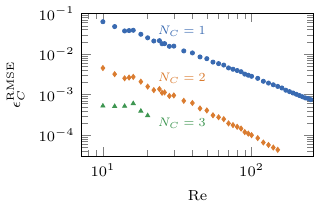}\hspace{1cm}
	\includegraphics[width=0.45\linewidth]{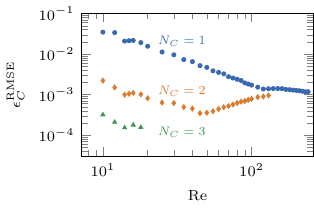}
	\caption{\textbf{RMSE-based Carleman error.} Top: Plot corresponding to data presented in the left panel of Fig.~\ref{fig:error_D1}. Bottom: Plots corresponding to data presented in Fig.~\ref{fig:error_D2}.
		\label{fig:error_rmse}}
\end{figure}


\section{Selected simulation parameters}
\label{app:parameters}

In Tables~\ref{tab:params_D1}~and~\ref{tab:params_D2}, we collect the values of simulation parameters $N_x$, $T^\star$, $\taus$, and $u^\star$ for $D=1$ and $D=2$ systems, and a representative subset of Reynolds numbers chosen in our simulations. We also present dimensions of Carleman collision matrices $\C$ (so also of the simulated Carleman systems used to evaluate Carleman truncation errors $\epsilon_C$) and the dimension of matrices $A_H$ describing Carleman linear systems (whose condition numbers we evaluate to estimate the query complexity of our quantum algorithm). Absence of $\mathrm{dim}~A_H$ in some cases means that we have not estimated the condition number for that choice of parameters (typically because sizes of the corresponding matrices were beyond the reach of our numerical methods in reasonable time). 

\renewcommand{\arraystretch}{1}

\begin{table}[t]
	\centering
	\begin{tabular}{|c|c|c|c|c|c|c|c|c|}
		\hline
		\boldmath{$N_C$} & \boldmath{$\re$} & \boldmath{$N_x$} & \boldmath{$T^\star$} & \boldmath{$\taus$} & \boldmath{$u^\star$} & \boldmath{$\mathrm{dim}~\C$} & \boldmath{$\mathrm{dim}~A_H$} \\
		\hline\hline
		
		\multirow{1}{*}{$1$}  & 1000 & 178 & 2372 & 0.540 & 0.0750 & 534 & 1267182 \\
		\hline
		
		\multirow{2}{*}{$2$}  & 200 & 54 & 388 & 0.6094 & 0.1361 & 26406 & 10271934 \\
		& 1000 & 178 & 2372 & 0.540 & 0.0750 & 285690 & - \\
		\hline
		
		\multirow{2}{*}{$3$} & 100 & 32 & 178 & 0.6687 & 0.1768 & 894048 & 160034592 \\
		& 500 & 106 & 1088 & 0.5617 & 0.0971 & 32258874 & - \\
		
		\hline
		
		\multirow{2}{*}{$4$} & 30 & 13 & 46 & 0.8580 & 0.2774 & 2374320 & 111593040 \\
		& 150 & 43 & 281 & 0.6310 & 0.1525 & 279086340 & - \\
		\hline
		
		\multirow{1}{*}{$5$} & 50 & 19 & 82 & 0.7602 & 0.2294 & 612436557 & - \\
		\hline
		
	\end{tabular}
	\caption{\textbf{Simulation parameters for $D=1$ systems with $\beta=3/4$. }\label{tab:params_D1}}
\end{table}

\begin{table}[h!]
	\centering
	\begin{tabular}{|c|c|c|c|c|c|c|c|c|}
		\hline
		\boldmath{$N_C$} & \boldmath{$\re$} & \boldmath{$N_x$} & \boldmath{$T^\star$} & \boldmath{$\taus$} & \boldmath{$u^\star$} & \boldmath{$\mathrm{dim}~\C$} & \boldmath{$\mathrm{dim}~A_H$} \\
		\hline\hline
		
		\multirow{2}{*}{$1$} & 100 & 32 & 1000 & 0.5300 & 0.0313 & 9216 & 9225216 \\
		& 250 & 63 & 3953 & 0.5120 & 0.0159 & 35721 & - \\
		\hline
		
		\multirow{2}{*}{$2$} & 20 & 10 & 90 & 0.6500 & 0.1000 & 810900 & 73791900 \\
		& 150 & 43 & 1838 & 0.5200 & 0.0233 & 276939522 & - \\
		\hline
		
		\multirow{1}{*}{$3$}  & 20 & 10 & 90 & 0.6500 & 0.1000 & 729810900 & -\\
		
		\hline
		
	\end{tabular}
	\caption{\textbf{Simulation parameters for $D=2$ systems with $\beta=3/4$. }\label{tab:params_D2}}
\end{table}


\section{Eigenstate of collision-streaming operator}
\label{app:cs_eigenstate}

In this appendix, we present the construction of the $-1$ eigenstate $\ket{\xi_\pi}$ of the $\S\C$ operator. As in Sec.~\ref{sec:condition_bound}, we set all Carleman blocks, except for the first one, to zero:
\begin{equation}
	\ket{\xi_\pi} = [\ket{\zeta_\pi},0,\dots,0].
\end{equation}
Thus,
\begin{equation}
	\S\C \ket{\xi_\pi} = [S(I+F_1)=\ket{\zeta_\pi},0,\dots,0],
\end{equation}
and so we reduce the problem to finding the $-1$ eigenstate of $S(I+F_1)$.

We start with $D=1$ case. It can be easily verified that the following state is an eigenstate of $I+F_1$ corresponding to eigenvalue 1:
\begin{equation}
	\ket{\psi} = \begin{tikzpicture}[baseline=-0.5ex]
		\coordinate (O) at (0,0);
		\fill (O) circle (2pt);
		\node[anchor=south, yshift=2pt] at (O) {$0$};
		\draw[-{Stealth[length=3mm,width=2mm]}] (-0.3,0) -- (-0.9,0) node[midway, anchor=south, xshift=-10pt,yshift=1pt] {$-1$};
		\draw[-{Stealth[length=3mm,width=2mm]}] (0.3,0) -- (0.9,0) node[midway, anchor=south, xshift=10pt,yshift=2pt] {$1$};
	\end{tikzpicture},
\end{equation}
where we used a graphical notation to relate coefficients to discrete velocity components. We then claim that the following state is a $-1$ eigenstate of $S(I+F_1)$
\begin{equation}
	\ket{\zeta_\pi} = ~\cdots~
	\begin{tikzpicture}[baseline=-0.5ex]
		\fill (0,0) circle (2pt);
		\fill (1.5,0) circle (2pt);
		\fill (3,0) circle (2pt);
		\fill (-1.5,0) circle (2pt);
		\fill (-3,0) circle (2pt);
		\fill (0,0) circle (2pt);
		\node[anchor=south, yshift=2pt] at (0,0) {$\ket{\psi}$};
		\node[anchor=south, yshift=2pt] at (1.5,0) {$-\ket{\psi}$};
		\node[anchor=south, yshift=2pt] at (3,0) {$\ket{\psi}$};
		\node[anchor=south, yshift=2pt] at (-1.5,0) {$-\ket{\psi}$};
		\node[anchor=south, yshift=2pt] at (-3,0) {$\ket{\psi}$};
	\end{tikzpicture}
	~\cdots,
\end{equation}
where different dots correspond to different lattice sites. One can indeed verify that
the application of $I+F_1$ at every site leaves all local states invariant, which is then followed by the streaming operator mapping every $\ket{\psi}$ into $-\ket{\psi}$ and vice versa. Thus, the overall effect is to map $\ket{\zeta_\pi}$ to $-\ket{\zeta_\pi}$. Moreover, note that due to the antisymmetry of the state $\ket{\psi}$ with respect to the reflection around the node point, $\ket{\zeta_\pi}$ is the $-1$ eigenstate of $S(I+F_1)$ even in the presence of walls.

Next, consider $D=2$. It can be verified by direct calculation that the following two states are eigenstates of $I+F_1$ corresponding to the eigenvalue 1:
\begin{equation}
	\ket{\psi_1} = \begin{tikzpicture}[baseline=-0.5ex]
		\coordinate (O) at (0,0);
		\fill (O) circle (2pt);
		\node[anchor=south, yshift=0pt] at (O) {$0$};		
		\draw[-{Stealth[length=3mm,width=2mm]}] (0,0.5) -- ( 0, 1.2) node[midway, anchor=south,yshift=10pt] {$-2$};   
		\draw[-{Stealth[length=3mm,width=2mm]}] (0.5,0.5) -- ( 0.85, 0.85) node[midway, above left,xshift=15pt,yshift=5pt] {$-1$}; 
		\draw[-{Stealth[length=3mm,width=2mm]}] (0.5,0) -- ( 1.2, 0) node[midway,xshift=20pt,yshift=0pt] {$-2$};   
		\draw[-{Stealth[length=3mm,width=2mm]}] (0.5,-0.5) -- ( 0.85,-0.85) node[midway, below left,xshift=15pt,yshift=-5pt] {$0$}; 
		\draw[-{Stealth[length=3mm,width=2mm]}] (0,-0.5) -- ( 0,-1.2) node[midway,xshift=0pt,yshift=-20pt] {$2$};   
		\draw[-{Stealth[length=3mm,width=2mm]}] (-0.5,-0.5) -- (-0.85,-0.85) node[midway, below right,xshift=-15pt,yshift=-5pt] {$1$}; 
		\draw[-{Stealth[length=3mm,width=2mm]}] (-0.5,0) -- (-1.2,0) node[midway,xshift=-20pt,yshift=0pt] {$2$};    
		\draw[-{Stealth[length=3mm,width=2mm]}] (-0.5,0.5) -- (-0.85,0.85) node[midway, above right,xshift=-15pt,yshift=5pt] {$0$}; 
	\end{tikzpicture},\qquad
	\ket{\psi_2} = \begin{tikzpicture}[baseline=-0.5ex]
		\coordinate (O) at (0,0);
		\fill (O) circle (2pt);
		\node[anchor=south, yshift=0pt] at (O) {$0$};
		
		\draw[-{Stealth[length=3mm,width=2mm]}] (0,0.5) -- ( 0, 1.2) node[midway, anchor=south,yshift=10pt] {$2$};   
		\draw[-{Stealth[length=3mm,width=2mm]}] (0.5,0.5) -- ( 0.85, 0.85) node[midway, above left,xshift=15pt,yshift=5pt] {$0$}; 
		\draw[-{Stealth[length=3mm,width=2mm]}] (0.5,0) -- ( 1.2, 0) node[midway,xshift=20pt,yshift=0pt] {$-2$};   
		\draw[-{Stealth[length=3mm,width=2mm]}] (0.5,-0.5) -- ( 0.85,-0.85) node[midway, below left,xshift=15pt,yshift=-5pt] {$-1$}; 
		\draw[-{Stealth[length=3mm,width=2mm]}] (0,-0.5) -- ( 0,-1.2) node[midway,xshift=0pt,yshift=-20pt] {$-2$};   
		\draw[-{Stealth[length=3mm,width=2mm]}] (-0.5,-0.5) -- (-0.85,-0.85) node[midway, below right,xshift=-15pt,yshift=-5pt] {$0$}; 
		\draw[-{Stealth[length=3mm,width=2mm]}] (-0.5,0) -- (-1.2,0) node[midway,xshift=-20pt,yshift=0pt] {$2$};    
		\draw[-{Stealth[length=3mm,width=2mm]}] (-0.5,0.5) -- (-0.85,0.85) node[midway, above right,xshift=-15pt,yshift=5pt] {$1$}; 
	\end{tikzpicture}.
\end{equation}
We then claim that the following state is a $-1$ eigenstate of $S(I+F_1)$:
\begin{equation}
	\ket{\zeta_\pi} = \cdots~
	\begin{tikzpicture}[baseline=-0.5ex]
		\fill (0,0) circle (2pt);
		\fill (1.5,0) circle (2pt);
		\fill (3,0) circle (2pt);
		\fill (-1.5,0) circle (2pt);
		\fill (-3,0) circle (2pt);
		\fill (0,1.5) circle (2pt);
		\fill (1.5,1.5) circle (2pt);
		\fill (3,1.5) circle (2pt);
		\fill (-1.5,1.5) circle (2pt);
		\fill (-3,1.5) circle (2pt);
		\fill (0,3) circle (2pt);
		\fill (1.5,3) circle (2pt);
		\fill (3,3) circle (2pt);
		\fill (-1.5,3) circle (2pt);
		\fill (-3,3) circle (2pt);
		\fill (-1.5,-1.5) circle (2pt);
		\fill (-3,-1.5) circle (2pt);
		\fill (0,-1.5) circle (2pt);
		\fill (1.5,-1.5) circle (2pt);
		\fill (3,-1.5) circle (2pt);
		\fill (3,-3) circle (2pt);
		\fill (-1.5,-3) circle (2pt);
		\fill (-3,-3) circle (2pt);
		\fill (0,-3) circle (2pt);
		\fill (1.5,-3) circle (2pt);
		\node[anchor=south, yshift=2pt] at (0,0) {$-\ket{\psi_1}$};
		\node[anchor=south, yshift=2pt] at (1.5,0) {$\ket{\psi_2}$};
		\node[anchor=south, yshift=2pt] at (3,0) {$-\ket{\psi_1}$};
		\node[anchor=south, yshift=2pt] at (-1.5,0) {$\ket{\psi_2}$};
		\node[anchor=south, yshift=2pt] at (-3,0) {$-\ket{\psi_1}$};
		\node[anchor=south, yshift=2pt] at (0,3) {$-\ket{\psi_1}$};
		\node[anchor=south, yshift=2pt] at (1.5,3) {$\ket{\psi_2}$};
		\node[anchor=south, yshift=2pt] at (3,3) {$-\ket{\psi_1}$};
		\node[anchor=south, yshift=2pt] at (-1.5,3) {$\ket{\psi_2}$};
		\node[anchor=south, yshift=2pt] at (-3,3) {$-\ket{\psi_1}$};
		\node[anchor=south, yshift=2pt] at (0,-3) {$-\ket{\psi_1}$};
		\node[anchor=south, yshift=2pt] at (1.5,-3) {$\ket{\psi_2}$};
		\node[anchor=south, yshift=2pt] at (3,-3) {$-\ket{\psi_1}$};
		\node[anchor=south, yshift=2pt] at (-1.5,-3) {$\ket{\psi_2}$};
		\node[anchor=south, yshift=2pt] at (-3,-3) {$-\ket{\psi_1}$};
		\node[anchor=south, yshift=2pt] at (0,1.5) {$-\ket{\psi_2}$};
		\node[anchor=south, yshift=2pt] at (1.5,1.5) {$\ket{\psi_1}$};
		\node[anchor=south, yshift=2pt] at (3,1.5) {$-\ket{\psi_2}$};
		\node[anchor=south, yshift=2pt] at (-1.5,1.5) {$\ket{\psi_1}$};
		\node[anchor=south, yshift=2pt] at (-3,1.5) {$-\ket{\psi_2}$};
		\node[anchor=south, yshift=2pt] at (0,-1.5) {$-\ket{\psi_2}$};
		\node[anchor=south, yshift=2pt] at (1.5,-1.5) {$\ket{\psi_1}$};
		\node[anchor=south, yshift=2pt] at (3,-1.5) {$-\ket{\psi_2}$};
		\node[anchor=south, yshift=2pt] at (-1.5,-1.5) {$\ket{\psi_1}$};
		\node[anchor=south, yshift=2pt] at (-3,-1.5) {$-\ket{\psi_2}$};
	\end{tikzpicture}
	~\cdots
\end{equation}
This can be confirmed by verifying that the streaming operator brings a negative phase factor, and that this still holds in the presence of nontrivial walls due to the antisymmetric nature of $\ket{\psi_1}$ and $\ket{\psi_2}$. 

Finally, in $D=3$, there exist three eigenstates of $I+F_1$ corresponding to eigenvalue 1 and being antisymmetric under the reflection with respect to the node point. One can then again find a spatial distribution of these states with phase factors to construct a $-1$ eigenstate of $S(I+F_1)$.


\section{Upper bound on the condition number}
\label{app_condition_upper}

Consider a block decomposition of a matrix $A$,
\begin{equation}
	A=\sum_{i=1}^d \sum_{j=1}^d \ketbra{i}{j}\ot A_{ij},
\end{equation}
and note that
\begin{align}
	\max_{\v{x}:~\|\v{x}\|=1} \|A\v{x}\| &= \max_{\substack{\{\v{x}_k\}\\ \sum_k \|\v{x}_k\|^2=1} }\left\|\sum_{i=1}^d\sum_{j=1}^d\sum_{k=1}^d \left(\ketbra{i}{j}\ot A_{ij}\right)(\ket{k}\ot \v{x}_k)\right\|\\
	&= \max_{\substack{\{\v{x}_k\}\\ \sum_k \|\v{x}_k\|^2=1} }\left\|\sum_{i=1}^d\ket{i}\ot\sum_{k=1}^d A_{ik}\v{x}_k\right\|= \max_{\substack{\{\v{x}_k\}\\ \sum_k \|\v{x}_j\|^2=1} }\sqrt{\sum_{i=1}^d\left\|\sum_{k=1}^d A_{ik}\v{x}_k\right\|^2}\\
	&\leq \max_{\substack{\{\v{x}_k\}\\ \sum_k \|\v{x}_k\|^2=1} }\sqrt{\sum_{i=1}^d\left(\sum_{k=1}^d \|A_{ik}\|\|\v{x}_k\|\right)^2} \leq \max_{\substack{\{\v{x}_k\}\\ \sum_k \|\v{x}_k\|^2=1} }\sqrt{\sum_{i=1}^d\left(\sum_{j=1}^d \|A_{ij}\|^2\sum_{k=1}^d \|\v{x}_k\|^2\right)}\\
	&= \sqrt{\sum_{i=1}^d\sum_{j=1}^d \|A_{ij}\|^2}.
\end{align}
Thus,
\begin{equation}
	\label{eq:norm_block_bound}
	\|A\| \leq \sqrt{\sum_{i=1}^d \sum_{j=1}^d \|A_{ij}\|^2}.
\end{equation}

Applying the above to bound the norm of $A_H^{-1}$ from Eq.~\eqref{eq:AHInv}, noting that one can sum blocks over diagonals, leads to
\begin{equation}
	\|A_H^{-1}\|\leq \sqrt{\sum_{t^\star=0}^{T^\star}  (T^\star-t^\star+1)\|(\S\C)^{t^\star}\|^2}.
\end{equation}
Combining this with an upper bound for $\|A_H\|$ from Eq.~\eqref{eq:A_norm_bound} leads to the claimed bound on the condition number from Eq.~\eqref{eq:kappa_upper_H}. Next, recall that $A_F^{-1}$ has the same blocks as $A_H^{-1}$ followed by rows of blocks, where each row $k\in\{1,\dots, (2^{W}-1)(T^\star+1)\}$ consists of the blocks from the last row of $A_H^{-1}$ followed by $k$ identity blocks. Thus, we get the following upper bound
\begin{align}
	\|A_F^{-1}\|^2 & \leq  \sum_{t^\star=0}^{T^\star}  (T^\star-t^\star+1)\|(\S\C)^{t^\star}\|^2 + (2^{W}-1)(T^\star+1)\sum_{t^\star=0}^{T^\star} \|(\S\C)^{t^\star}\|^2 + \sum_{k=1}^{(2^{W}-1)(T^\star+1)} k\\
	& = \sum_{t^\star=0}^{T^\star}  ( 2^{W}(T^\star+1)-t^\star)\|(\S\C)^{t^\star}\|^2 +  \frac{1}{2}(2^W-1)(T^\star+1)(2^W+(2^W-1)T^\star)\\
	& = \sum_{t^\star=0}^{T^\star}  ( 2^{W}(T^\star+1)-t^\star)\|(\S\C)^{t^\star}\|^2 +  2^{2W-1}(T^\star+1)^2.
\end{align}
Combining it with the bound for $\|A_F\|$, which is the same as for $\|A_H\|$, we end up with the claimed bound for $\kappa_{A_F}$ from Eq.~\eqref{eq:kappa_upper_F}.


\section{Details on gate compilation}
\label{app:gates}

We start with explaining how to realize $c^kU$ using $cU$ and Toffoli gates. Given control qubits $c_1,\dots, c_k$, the auxiliary qubits $a_1,\dots, a_{k-1}$ are prepared in $\ket{0}$ and are then in turn used to iteratively record $\mathrm{AND}(c_1,c_2)$, $\mathrm{AND}(c_3,a_1) \dots \mathrm{AND}(c_k,a_{k-2})$. A $cU$ gate is then applied to the target $t$ controlled on $a_{k-1}$, and then we uncompute the auxiliary qubits. The explicit circuit for $k=5$ is as follows:
\begin{equation}
	c^5U= \begin{quantikz}[row sep={0.6cm,between origins}, baseline=(current bounding box.center),column sep=1.2em]
		\lstick{$c_1$} & \ctrl{5} & \qw      & \qw      & \qw      & \qw      & \qw      & \qw    & \qw      & \ctrl{5} & \qw \\ 
		\lstick{$c_2$} & \ctrl{4} & \qw      & \qw      & \qw      & \qw      & \qw     &\qw      & \qw      & \ctrl{4} & \qw \\
		\lstick{$c_3$} & \qw      & \ctrl{4} & \qw      & \qw      & \qw      & \qw     & \qw     & \ctrl{4} & \qw      & \qw \\
		\lstick{$c_4$} & \qw      & \qw      & \ctrl{4} & \qw      & \qw      & \qw     & \ctrl{4}  & \qw      & \qw      & \qw \\
		\lstick{$c_5$} & \qw      & \qw      & \qw      & \ctrl{4} & \qw      & \ctrl{4} & \qw    & \qw       & \qw      & \qw \\
		\lstick{$\ket{0}$} & \targ{}    & \ctrl{1} & \qw      & \qw      & \qw      & \qw      & \qw    & \ctrl{1} & \targ{}    & \qw \\
		\lstick{$\ket{0}$} & \qw      & \targ{}    & \ctrl{1} & \qw      & \qw      & \qw & \ctrl{1}     &\targ{}    & \qw      & \qw \\
		\lstick{$\ket{0}$} & \qw      & \qw      & \targ{}    & \ctrl{1} & \qw      & \ctrl{1}  & \targ{}  & \qw      & \qw      & \qw \\
		\lstick{$\ket{0}$} & \qw      & \qw      & \qw      & \targ{}    & \ctrl{1} & \targ{}    & \qw & \qw      & \qw      & \qw \\
		\lstick{$t$}   & \qw      & \qw      & \qw      & \qw      & \gate{U} & \qw      & \qw   &\qw    & \qw      & \qw
	\end{quantikz}
\end{equation}
This uses $2(k-1) = 2k-2$ Toffoli gates for the auxiliary qubits, and a single $cU$. 

We now proceed to deriving a $T$-cost for general unitary synthesis based on state preparations. In Ref.~\cite{kliuchnikov2013synthesis}, it is shown that with a single auxiliary qubit we can embed any unitary $U$ as $\tilde{U}:=\ketbra{0}{1} \otimes U + \ketbra{1}{0}\otimes U^\dagger$, and so we can simulate $U$ by $\tilde{U} \ket{1}\otimes \ket{\psi} = \ket{0} \otimes U \ket{\psi}$. Then, it can be shown that
\begin{equation}
	\tilde{U} = \prod_{j=1}^{2^n} ( I - 2 \ketbra{w_j}{w_j}),
\end{equation}
where 
\begin{equation}
	\ket{w_j} := \frac{1}{\sqrt{2}} \left(\ket{1}\otimes \ket{j} - \ket{0}\otimes \ket{u_j}\right),
\end{equation}
with $\ket{u_j}$ denoting the $j$-th column of $U$. Each reflection term $W_j:=I- 2 \ketbra{w_j}{w_j}$ can be approximately realized using an $\epsilon_j$--approximate state preparation of $\ket{w_j}$ as
\begin{equation}
	W_j^{\epsilon_j''} = \mathrm{PREP}^{\epsilon_j'}(w_j) ( I - 2\ketbra{0}{0} ) \mathrm{PREP}^{\epsilon_j'}(w_j)^\dagger,
\end{equation}
where we assume $\mathrm{PREP}^{\epsilon_j'}(w_j)$ is a unitary that prepares the state $\ket{w_j}$ from $\ket{0^{n+1}} = \ket{0}^{\otimes n+1}$ with error $\epsilon_j'
$. The preparation of $\ket{w_j}$ can be realized with a single controlled use of $\mathrm{PREP}^{\epsilon_j}(u_j)$, and so the total cost for generating $W_j^{\epsilon_j''}$ is
\begin{equation}
	G[W_j^{\epsilon_j''} ] =2G[\mathrm{cPREP}_{\epsilon_j} (u_j)] + G\left[I -\ketbra{0^{n+1}}{0^{n+1}}\right].
\end{equation}

We now provide error analysis for this step. The unitary $\mathrm{PREP}^{\epsilon_j}(u_j)$ obeys $\|\mathrm{PREP}^{\epsilon_j}(u_j)\ket{0}^{\ot n} - \ket{u_j}\| \le \epsilon_j$. This implies that 
\begin{equation}
	\left\|\frac{1}{\sqrt{2}}(\ket{1}\otimes \ket{j} - \ket{0}\otimes \mathrm{PREP}^{\epsilon_j}(u_j) \ket{0}) - \ket{w_j}\right\| \le \frac{\epsilon_j}{\sqrt{2}} = \epsilon_j'.
\end{equation}
Moreover, it is readily checked that
\begin{equation}
	\left \| \mathrm{PREP}^{\epsilon_j'}(w_j) \ketbra{0^{n+1}}{0^{n+1}} \mathrm{PREP}^{\epsilon_j'}(w_j)^\dagger - \ketbra{w_j}{w_j} \right \| \le \epsilon_j' \sqrt{1-\epsilon_j^{'2}/4} \le \epsilon_j'.
\end{equation}
This implies
\begin{equation}
	\|\mathrm{PREP}^{\epsilon_j'}(w_j) ( I - 2\ketbra{0}{0} ) \mathrm{PREP}^{\epsilon_j'}(w_j)^\dagger - (I -2 \ketbra{w_j}{w_j}) \| \le 2\epsilon_j' = \sqrt{2}\epsilon_j.
\end{equation}
Thus, $\epsilon_j'' = \sqrt{2}\epsilon_j$.

To generate $I -2\ketbra{0^{n+1}}{0^{n+1}}$, we first note that this is a unitary $\mathrm{diag}(-1,1,1,\dots 1)$, which can be viewed as a $c^n(-Z)$ unitary, controlled on $n$ $0$'s. Up to Clifford gates, this is equivalent to $c^nX$, and so this provides a $T$-count of $G[c^nX]$. The total $T$-count of unitary synthesis from this method is then
\begin{equation}
	G [U^{\epsilon}] =  \sum_{j=1}^{2^n}(2G[\mathrm{cPREP}_{\epsilon_j} (u_j)] + G[c^nX]) = 32 \sum_{j=1}^{2^n}G[\mathrm{PREP}^{\epsilon_j} (u_j)] + 2^n(8n -12),
\end{equation}
where again we use that $G[cU] \approx 16 G[U]$, the total error introduced is
\begin{equation}
	\epsilon = \sqrt{2}\sum_{j=1}^{2^n} \epsilon_j,
\end{equation}
and a single auxiliary qubit is used in the construction.

\printbibliography

\end{document}